\definecolor{HTML0}{HTML}{D62828} 
\definecolor{HTML4}{HTML}{F4A261} 
\definecolor{HTML1}{HTML}{3F5F8F} 
\definecolor{HTML6}{HTML}{4CB7B5} 
\definecolor{HTML3}{HTML}{5E1A5E} 
\DeclareMathAlphabet{\mathpzc}{OT1}{pzc}{m}{it}
\pgfplotsset{compat=1.18}
\crefname{algocf}{Algorithm}{Algorithms}
\theoremstyle{plain}
\newtheorem{theorem}{Theorem}
\newtheorem*{theorem*}{Theorem}
\newtheorem{proposition}{Proposition}
\newtheorem*{proposition*}{Proposition}
\newtheorem{lemma}{Lemma}
\newtheorem{corollary}{Corollary}
\newtheorem*{corollary*}{Corollary}
\newtheorem{claim}{Observation}
\newtheorem*{claim*}{Observation}
\theoremstyle{definition}
\newtheorem{definition}{Definition}
\newtheorem*{definition*}{Definition}
\newtheorem*{notation*}{Notation}
\theoremstyle{remark}
\newtheorem{remark}{Remark}
\newtheorem*{remark*}{Remark}
\newtheorem*{example*}{Example}
\newcommand\independent{\protect\mathpalette{\protect\independenT}{\perp}}
\def\independenT#1#2{\mathrel{\rlap{$#1#2$}\mkern2mu{#1#2}}}
\newcommand{\Poisson}{\operatorname{Poisson}}
\newcommand{\Ggame}{G}
\newcommand{\anumvec}{\mathbf{m}}
\newcommand{\anum}{m}
\newcommand{\action}{a}
\newcommand{\actionvec}{\mathbf{a}}
\newcommand{\actionset}{{A}_i}
\newcommand{\preferencerelation}{\succsim_i}
\newcommand{\profileset}{\mathbf A}
\newcommand{\agentset}{N}
\newcommand{\V}[2]{ (#1,\dots, #2) }
\newcommand{\potpref}{\unrhd}
\newcommand{\potprefstrict}{\rhd}
\newcommand{\G}[2]{\mathbf G_{#1,#2}}
\newcommand{\g}{g_{n,\anumvec,\succsim}}
\newcommand{\gframe}{g_{n,\anumvec,\varnothing}}
\newcommand{\gf}{h}
\newcommand{\z}{d}
\newcommand{\Z}{D}
\newcommand{\rank}{\text{rank}}
\newcommand{\indicator}[1]{\mathbbm{1}_{#1}}
\title{\textsc{Satisficing Equilibrium}}
\author{
\begin{tabular}{c @{\hspace{4em}} c}
\large Bary S.R. Pradelski & \large Bassel Tarbush \vspace{-0.4cm} \\ 
{\normalsize CNRS} & {\normalsize Oxford}
\end{tabular}
}
\date{\today}
\begin{document}

\maketitle
\thispagestyle{empty}

\onehalfspacing
\begin{abstract} 
\noindent
In a satisficing equilibrium each agent $i$ plays one of her top $k_i$ actions in response to the actions of the other agents. Our concept unifies models of bounded rationality and yields predictions that differ from canonical solution concepts. We study its theoretical properties and show that it provides sharp predictions, exists in most games as well as in a broad new class of economic environments, admits standard epistemic and dynamic foundations, and is empirically falsifiable.
\end{abstract}
\footnotebl{{BP}: CNRS, Maison Fran\c{c}aise d'Oxford; Department of Economics, University of Oxford.\\
 \texttt{bary.pradelski@cnrs.fr}. \\
 {BT}: Merton College and Department of Economics, University of Oxford.\\
 \texttt{bassel.tarbush@economics.ox.ac.uk}}
\footnotebl{\emph{Keywords}: equilibrium, satisficing, bounded rationality.}
\footnotebl{\emph{Thanks}: We are grateful for
comments and suggestions by Jean Baccelli, Miguel Ballester, Jean-Paul Carvalho, Vincent Crawford, Bruno de Albuquerque Furtado, Francesc Dilm\'{e}, Peter Es\H{o}, Françoise Forges, Alkis Georgiadis-Harris, Olivier Gossner, En Hua Hu, Simon Jantschgi, Bernhard Kasberger, Davide Legacci, Annie Liang, Alistair Macaulay, Panayotis Mertikopoulos, Jonny Newton, John Quah, Philip Reny, Ariel Rubinstein, Anna Sanktjohanser,  Raimundo Saona, Ludvig Sinander, Alex Teytelboym, Jörgen Weibull, Peyton Young, and seminar and conference participants at Carg\`{e}se, Columbia, Durham, EC'25 (Stanford), Grenoble, Oxford, Parisian Game Theory Seminar, Princeton, and Royal Holloway.}

\newpage

\section{Introduction}\label{sec:introduction}

Much of economics identifies rationality with optimization, as exemplified by \citeauthor{nash1950equilibrium}'s equilibrium concept (\citeyear{nash1950equilibrium}). Yet, since \citeauthor{Simon1947}'s foundational work on bounded rationality (\citeyear{Simon1947}), substantial theoretical and empirical evidence has accumulated showing that subjects frequently do not optimize and do not play Nash equilibrium in games.\footnote{See, for example, \cite*{basu1994traveler,goeree2001ten,arad201211,crawford2013structural}. Note that this list is far from exhaustive.} This raises a fundamental question: what does equilibrium look like when agents do not fully optimize? 

We address this question by introducing the concept of \emph{satisficing equilibrium}.\footnote{The portmanteau \emph{satisfice}---to satisfy and to suffice---has been coined in its modern-day usage by \citeauthor{simon1997models} to describe a decision maker ``who chooses an alternative that meets or exceeds specified criteria, but that is not guaranteed to be either unique or in any sense the best'' (\citeyear{simon1997models}).} We associate each agent~$i\in\agentset$ with a positive integer $k_i$ and say that $i$ is \emph{$k_i$-satisficed} if her chosen action is among her top~$k_i$ actions in response to the others’ actions. The parameters $k_i$ measure the relaxation from optimization (where the latter corresponds to $k_i=1$), and may vary across agents and environments.\footnote{This is in line with empirical evidence; see, e.g., \citet*{benjamin2013behavioral,dardanoni2020inferring}.} 
We let $\mathbf k:=(k_i)_{i\in\agentset}$ and define:
\vspace{0.2cm}
\begin{quote} 
In a \emph{$\mathbf k$-satisficing equilibrium}, each agent~$i\in \agentset$ 
plays one of her top $k_i$ actions in response to the actions of the other agents.
\end{quote}
\vspace{0.2cm}
\noindent 
A $\V{1}{1}$-satisficing equilibrium (that is, $k_i = 1$ for all $i \in \agentset$) coincides with a pure Nash equilibrium. A $\V{2}{2}$-satisficing equilibrium is an action profile such that for each agent the action played is either a best response or a second-best response to the actions of the other agents.

Satisficing equilibrium preserves the unilateral stability logic of pure Nash equilibrium, while relaxing optimization in an ordinal way through the parameters $k_i$, which quantify heterogeneous departures from best-response behavior. Before outlining our theoretical results on the structure and foundations of satisficing equilibrium, we discuss its central features.

First, the concept relies only on ordinal comparisons. We thus eschew the question of preference intensity, which may be important in some settings.\footnote{Note that cardinal utilities are often natural in experiments as they are used to induce preferences \citep[cf.][]{goeree2001ten}, but many natural settings do not feature (easily measurable) preference intensity. Pure $\epsilon$-equilibrium may be a relevant notion when considering departures from optimization in a cardinal setting. Pure $\epsilon$-equilibrium and satisficing equilibrium can yield opposing predictions in some games (for example, in the 11--20 game of \citet{arad201211} discussed in \cref{sec:otherconcepts}).} Ordinal rankings are the only features recoverable from revealed-preference data and underpin several prominent models of bounded rationality, including consideration sets \citep{varian1980model} and choice from lists \citep*{rubinstein2006model,Caplin2011}.

Second, we focus on pure actions. This follows naturally from our restriction to ordinal preferences, avoids the interpretative difficulties of mixed equilibria,\footnote{See discussion in \citet{rubinstein1991comments} and \citet{osborne1998games}.} and allows for simple epistemic foundations that do not require belief hierarchies (see \cref{sec:emergence}). Our approach to capturing bounded rationality thus differs from solution concepts where randomization plays a central role.\footnote{For example,  quantal-response equilibrium  \citep{mckelvey1995quantal}, sampling equilibrium \citep{osborne1998games}, and $M$ equilibrium \citep{goeree2021m}.}

Third, we depart from optimization without committing to a specific model of (strategic) thinking.\footnote{This is in contrast to, for example, rationalizability \citep{bernheim1984rationalizable,pearce1984rationalizable} and level-$\mathpzc{k}$ thinking or cognitive-hierarchy models \citep*{stahl1993evolution,nagel1995unraveling,camerer2004cognitive}.}
We impose only a monotonicity restriction: if an agent is satisficed with an action, then she is also satisficed with any action that induces an outcome weakly preferred to it \citep[see][for a related assumption]{goeree2021m}. Notably, \cite*{Bar22} show that a broad class of bounded-rationality theories (including the aforementioned theories of consideration sets and choice from lists) are observationally equivalent---that is, empirically indistinguishable in a revealed-preference sense---to modeling an agent $i$ as choosing among her top~$k_i$ actions.\footnote{For recent surveys on bounded rationality, see \citet*{artinger2022satisficing,Clippel2024}. See Supplementary Appendix IV for an extended review of the related literature.} For instance, if an agent $i$ facing $m_i$ possible choices has  limited evaluation time, she may only evaluate $m_i-k_i+1$ of them; then maximizing within this restricted set results in her selecting one of her top~$k_i$ actions. Similarly, if an agent is presented with $m_i$ possible choices as a list, search costs may again result in her effectively choosing the maximum among the first $m_i-k_i+1$ entries.

Relatedly, observe that an agent choosing a top $k_i>1$ action  requires no more knowledge or sophistication than optimization ($k_i=1$) and, in fact, often less. In particular, she does not need to evaluate or even know all her available actions, does not need to know the rank of her chosen action, and may (possibly wrongly) believe that she is playing optimally. Having defined satisficing equilibrium and developed a preliminary understanding of its features, we next discuss our results.

\subsection{Overview of results}

We examine the theoretical properties of satisficing equilibrium. We show that satisficing equilibrium provides sharp predictions, exists in most games as well as in a broad new class of economic environments, admits standard epistemic and dynamic foundations, and is empirically falsifiable.

In \cref{sec:structural}, we study the structural properties of satisficing equilibrium. The set of satisficing equilibria is nested in $\mathbf{k}$: increasing $\mathbf{k}$ expands the set of equilibria and thus provides an increasingly permissive relaxation, starting from pure Nash equilibrium (\cref{observation:nested}). 
Two questions naturally arise for a given $\mathbf{k}$: (i) what is the precision of $\mathbf{k}$-satisficing equilibrium, and (ii) does a $\mathbf{k}$-satisficing equilibrium always exist? Using \citet{selten1991properties}'s definition of precision, namely how large the set of predicted outcomes is relative to the total set, we show that $\mathbf{k}$-satisficing equilibrium may have low precision in some games (\cref{prop:preditiveprecision}), but has high precision in all but a vanishing fraction of games (\cref{thm:predictiveprecision}). By ``all but a vanishing fraction'' we mean that the fraction of games in which the property fails is small and vanishes rapidly in the number of agents. These results show that, while satisficing equilibrium is more permissive than pure Nash equilibrium, it nevertheless yields sharp predictions in most games. In fact, for non-trivial $\mathbf k$, $\mathbf k$-satisficing equilibrium has the same asymptotic precision as pure Nash equilibrium.

Turning to existence, we show that satisficing equilibria may fail to exist in some games (\cref{prop:nonexistence}), but they exist in most games (\cref{thm:one_is_enough}). In fact, a satisficing equilibrium where one agent is 2-satisficed and all other agents are 1-satisficed (that is, best-respond) exists in all but a vanishing fraction of games. Consequently, most games admit a $\V{2}{2}$-satisficing equilibrium (\cref{cor:ksatisficing}). This is in contrast to pure Nash equilibrium, which exists only in approximately $1-1/e\approx 63\%$ of games \citep*{arratia1989two,Rin00}. Together, these results suggest that satisficing equilibrium may explain stable patterns of play in situations in which agents may not fully optimize.

Moreover, we provide a sufficient condition for existence. We define the class of \emph{$\mathbf{k}$-approximate ordinal potential games} in which $\mathbf k$-satisficing equilibria are guaranteed to exist (\cref{thm:approximatepotential}). This generalizes the class of potential games \citep{monderer1996potential}, and allows us to consider a wider class of economic environments. As an example of our new class of games, we consider a non-potential---but approximately potential---oligopoly production environment in the spirit of \cite*{rosenthal1973class}.

In \cref{sec:emergence}, we develop the foundations for our concept by considering how satisficing equilibria may emerge. In \cref{obs:static}, we provide epistemic conditions for satisficing equilibrium that are similar to those given by \citet{aumann1995epistemic} for pure Nash equilibrium. We show that only mutual knowledge of action choices is required, but no knowledge of others' choice behavior or preferences is needed (which is in contrast to the more demanding epistemic conditions for mixed Nash equilibrium).  Consequently, $\mathbf{k}$-satisficing equilibrium survives the announcement test, mediator recommendations, and implicit or explicit agreements; tests commonly used to support Nash equilibrium \citep[see][]{Hol04}.

Next, we provide dynamic foundations for equilibrium play \citep[see, e.g.,][]{weibull1995evolutionary,fudenberg1998theory,young2004strategic,samuelson2016game}. We study a simple dynamic in which each agent chooses one of her top $k_i \ge 2$ actions in response to the current action profile. We show that, in all but a vanishing fraction of games, the dynamic converges almost surely in finite time to a $\mathbf{k}$-satisficing equilibrium (\cref{prop:dynamic}).

In \cref{sec:characterization}, we turn to the empirical content of satisficing equilibrium. Building on \citet{sprumont2000testable} and \citet*{Bar22}, we construct a polynomial-time algorithm to test whether a given data set is consistent with $\mathbf{k}$-satisficing equilibrium for a given $\mathbf k$  (\cref{thm:testing}). Hence, $\mathbf{k}$-satisficing equilibrium is falsifiable, and our result provides the basis for future empirical work. In particular, our test can be adapted to obtain a polynomial-time test that finds the smallest possible $\mathbf k$ that is consistent with $\mathbf k$-satisficing equilibrium in a given data set (\cref{cor:testing}), and can therefore be used to explore comparative statics across games and agents.
 
Taken together, our results show that satisficing equilibrium unifies behavioral realism---allowing agents to behave in boundedly rational, heterogeneous, and context-dependent ways---with the discipline of equilibrium theory, and offers a tractable equilibrium concept for analyzing environments where agents satisfice rather than optimize.  

\subsection{Illustrative examples and comparison with other concepts}\label{sec:otherconcepts}

We next illustrate satisficing equilibrium by studying two example games. The games also allow us to compare satisficing equilibrium with  several of the aforementioned concepts. 
 
We first consider the 11--20 game by \citeauthor{arad201211}.\footnote{See Supplementary Appendix III for the bimatrix representations of the discussed games.}

\begin{quote}
\textbf{The 11--20 game \citep{arad201211}.} You and another player are playing a game in which each player requests an amount of money. The amount must be (an integer) between 11 and 20 shekels. Each player will receive the amount he requests. A player will receive an additional amount of 20 shekels if he asks for exactly one shekel less than the other player.
\end{quote}

It is (implicitly) assumed that each player's preferences are such that more shekels for oneself is considered strictly better. The game does not admit a $(1,1)$-satisficing equilibrium---that is, there is no pure Nash equilibrium. A player's best response is always one less than her opponent's action, unless the opponent plays their lowest action in which case the player's best response is her highest action. The $(2,2)$-satisficing equilibria are $(20,20),(19,20),(20,19)$ (see \cref{fig:1120_concepts_main}(a)). \cref{fig:1120_concepts_main}(b) shows the empirical frequency with which the action profiles were played by the subjects in \cite*{arad201211}'s experiment.\footnote{ \cite*{goeree2018noisy} corroborate these findings in further experiments.} Observe that subjects tended to play high actions, which is consistent with $(k,k)$-satisficing equilibrium for $k\leq 4$.

To compare satisficing equilibrium with other solution concepts, we must commit to a cardinal representation of the 11-20 game. To this end, suppose that the utility of each player is equal to the amount of shekels he receives.  \cref{fig:1120_concepts_main}(c)–(h) display the predictions of the other solution concepts.\footnote{See Supplementary Appendix IV.1 for definitions of the discussed  concepts.}  Pure $\epsilon$-equilibria exist for $\epsilon\geq 8$, and they consist of low actions for all $\epsilon<18$. All actions are rationalizable, but the unique symmetric Nash equilibrium, logit quantal-response equilibrium, and sampling equilibrium are consistent with satisficing equilibrium and the empirical data. 

\begin{figure}[h]
\caption{The 11--20 game: predictions.}
\centering
\adjustbox{width=1.0\linewidth}{
\begin{tabular}{cccc}
 \begin{tikzpicture}[scale=0.6]
  \draw[thin,black] (0.5,-0.5) rectangle (10.5,-10.5);

\tikzmath{\del = 0.05; \eps1 = 2; \eps2 = 1.0; \eps3 = 3.5;}

\draw[very thick,draw=none, fill=HTML1,rounded corners = 0.5mm,opacity=1]
(0.5 - \eps1*\del ,-4.5)
-- (3.5,-4.5)
-- (3.5,-3.5)
-- (4.5,-3.5)
-- (4.5,-0.5 + \eps1*\del)
-- (0.5 - \eps1*\del ,-0.5 + \eps1*\del)
-- cycle;  

\draw[very thick,draw=none, fill=HTML4,rounded corners = 0.8mm,opacity=1]
(0.5+ \eps2*\del,-3.5)
-- (2.5,-3.5)
-- (2.5,-2.5)
-- (3.5,-2.5) 
-- (3.5,-0.5- \eps2*\del)
-- (0.5 + \eps2*\del,-0.5- \eps2*\del)
-- cycle;

\draw[very thick,draw=none, fill=HTML0,rounded corners = 1mm,opacity=1]
(0.5+\eps3*\del,-2.5) 
-- (1.5,-2.5) 
-- (1.5,-1.5) 
-- (2.5,-1.5)
-- (2.5,-0.5 - \eps3*\del)
-- (0.5 + \eps3*\del,-0.5 - \eps3*\del)
-- cycle;

 \node[] at (5.5,-9) {\Large \textcolor{HTML0}{\textbf{2-SE}} $\subseteq$ \textcolor{HTML4}{\textbf{3-SE}} $\subseteq$ \textcolor{HTML1}{\textbf{4-SE}} };
  
  \foreach \i in {1,...,10} {
    \node[font=\small] at (0,-\i) {\the\numexpr21-\i\relax};
    \node[font=\small] at (\i,0) {\the\numexpr21-\i\relax};
  }
\end{tikzpicture} & \pgfplotstableread{
 0.0036  0.0072  0.018  0.0192  0.0036  0.0006  0.0036  0.0018  0.0  0.0024
 0.0072  0.0144  0.036  0.0384  0.0072  0.0012  0.0072  0.0036  0.0  0.0048
 0.018   0.036   0.09   0.096   0.018   0.003   0.018   0.009   0.0  0.012
 0.0192  0.0384  0.096  0.1024  0.0192  0.0032  0.0192  0.0096  0.0  0.0128
 0.0036  0.0072  0.018  0.0192  0.0036  0.0006  0.0036  0.0018  0.0  0.0024
 0.0006  0.0012  0.003  0.0032  0.0006  0.0001  0.0006  0.0003  0.0  0.0004
 0.0036  0.0072  0.018  0.0192  0.0036  0.0006  0.0036  0.0018  0.0  0.0024
 0.0018  0.0036  0.009  0.0096  0.0018  0.0003  0.0018  0.0009  0.0  0.0012
 0.0     0.0     0.0    0.0     0.0     0.0     0.0     0.0     0.0  0.0
 0.0024  0.0048  0.012  0.0128  0.0024  0.0004  0.0024  0.0012  0.0  0.0016
}\data

\begin{tikzpicture}[scale=0.6]
  \def\minval{0.001}
  \def\maxval{0.25}

  \foreach \r in {0,...,9} {
    \foreach \c in {0,...,9} {
      \pgfplotstablegetelem{\r}{[index]\c}\of{\data}%
      \pgfmathsetmacro{\val}{\pgfplotsretval}%

      \pgfmathsetmacro{\pct}{%
        ifthenelse(
          \val>\minval,
          (ln(\val + \minval)-ln(2*\minval)) / (ln(\maxval+\minval) -ln(2*\minval) ) * 100, 0
          )%
      }%

      \node[
        fill=HTML3!\pct!white,
        minimum size=6mm
      ] at ({\c + 1},{-\r -1}) {};
    }
  }

  \draw[thin,black] (0.5,-0.5) rectangle (10.5,-10.5);

  \foreach \i in {1,...,10} {
    \node[font=\small] at (0,-\i) {\the\numexpr21-\i\relax};
    \node[font=\small] at (\i,0) {\the\numexpr21-\i\relax};
  }
\end{tikzpicture} & 
 \begin{tikzpicture}[scale=0.6]
  \draw[thin,black] (0.5,-0.5) rectangle (10.5,-10.5);

 \draw[very thick,draw=none,fill=HTML4,rounded corners = 0.75mm,opacity=1, scale around={1.05:(9.5,-9.5)}, shift={(-0.05,0.05)}]
    (8.6,-9.5)
    -- (8.6,-10.55)
    -- (10.55,-10.55)
    -- (10.55,-8.6)  
    -- (9.5,-8.6)
    -- (9.5,-9.5)
    -- cycle;

  \draw[
  very thick,
  draw=none,
  fill=HTML0,
  rounded corners=0.8mm,
  opacity=1,
  scale around={0.92:(9.5,-9.5)},
  shift={(0.04,-0.04)}
]
(9.5,-8.5)
-- (10.5,-8.5)
-- (10.5,-9.5)
-- (9.5,-9.5)
-- (9.5,-10.5)
-- (8.5,-10.5)
-- (8.5,-9.5)
-- (9.5,-9.5)
-- cycle;

 \node[] at (4.3,-9) {\Large \textcolor{HTML0}{\textbf{$\epsilon=8$}} $\subseteq$ \textcolor{HTML4}{\textbf{$\epsilon =9$}}};

  \foreach \i in {1,...,10} {
    \node[font=\small] at (0,-\i) {\the\numexpr21-\i\relax};
    \node[font=\small] at (\i,0) {\the\numexpr21-\i\relax};
  }
\end{tikzpicture}
 &  \begin{tikzpicture}[scale=0.6]
  \draw[thin,black] (0.5,-0.5) rectangle (10.5,-10.5);

    \begin{scope}
    \fill[pattern=north east lines, pattern color=gray]
      (0.5,-0.5) rectangle (10.5,-10.5) -- cycle;
  \end{scope}

  \foreach \i in {1,...,10} {
    \node[font=\small] at (0,-\i) {\the\numexpr21-\i\relax};
    \node[font=\small] at (\i,0) {\the\numexpr21-\i\relax};
  }
\end{tikzpicture}  \\
 (a) Satisficing equilibria & (b) Data \citep{arad201211} & (c) Pure $\epsilon$-equilibria & (d) Rationalizable  \\
  \pgfplotstableread{
 0.0025  0.005  0.0075  0.01  0.0125  0.0125  0.0  0.0  0.0  0.0
 0.005   0.01   0.015   0.02  0.025   0.025   0.0  0.0  0.0  0.0
 0.0075  0.015  0.0225  0.03  0.0375  0.0375  0.0  0.0  0.0  0.0
 0.01    0.02   0.03    0.04  0.05    0.05    0.0  0.0  0.0  0.0
 0.0125  0.025  0.0375  0.05  0.0625  0.0625  0.0  0.0  0.0  0.0
 0.0125  0.025  0.0375  0.05  0.0625  0.0625  0.0  0.0  0.0  0.0
 0.0     0.0    0.0     0.0   0.0     0.0     0.0  0.0  0.0  0.0
 0.0     0.0    0.0     0.0   0.0     0.0     0.0  0.0  0.0  0.0
 0.0     0.0    0.0     0.0   0.0     0.0     0.0  0.0  0.0  0.0
 0.0     0.0    0.0     0.0   0.0     0.0     0.0  0.0  0.0  0.0
}\data

\begin{tikzpicture}[scale=0.6]
  \def\minval{0.001}
  \def\maxval{0.25} 

  \foreach \r in {0,...,9} {
    \foreach \c in {0,...,9} {
      \pgfplotstablegetelem{\r}{[index]\c}\of{\data}%
      \pgfmathsetmacro{\val}{\pgfplotsretval}%

      \pgfmathsetmacro{\pct}{%
        ifthenelse(
          \val>\minval,
          (ln(\val + \minval)-ln(2*\minval)) / (ln(\maxval+\minval) -ln(2*\minval) ) * 100, 0
          )%
      }%

      \node[
        fill=gray!\pct!white,
        minimum size=6mm
      ] at ({\c + 1},{-\r -1}) {};
    }
  }

  \draw[thin,black] (0.5,-0.5) rectangle (10.5,-10.5);

  \foreach \i in {1,...,10} {
    \node[font=\small] at (0,-\i) {\the\numexpr21-\i\relax};
    \node[font=\small] at (\i,0) {\the\numexpr21-\i\relax};
  }
\end{tikzpicture} &  \pgfplotstableread{
0.0153 0.0177 0.0167 0.0149 0.0131 0.0116 0.0102 0.0091 0.0081 0.0072
0.0177 0.0205 0.0193 0.0172 0.0152 0.0134 0.0118 0.0105 0.0093 0.0083
0.0167 0.0193 0.0182 0.0162 0.0143 0.0126 0.0112 0.0099 0.0088 0.0078
0.0149 0.0172 0.0162 0.0145 0.0127 0.0112 0.0099 0.0088 0.0078 0.0070
0.0131 0.0152 0.0143 0.0127 0.0112 0.0099 0.0087 0.0077 0.0069 0.0061
0.0116 0.0134 0.0126 0.0112 0.0099 0.0087 0.0077 0.0068 0.0061 0.0054
0.0102 0.0118 0.0112 0.0099 0.0087 0.0077 0.0068 0.0060 0.0054 0.0048
0.0091 0.0105 0.0099 0.0088 0.0077 0.0068 0.0060 0.0053 0.0048 0.0042
0.0081 0.0093 0.0088 0.0078 0.0069 0.0061 0.0054 0.0048 0.0042 0.0038
0.0072 0.0083 0.0078 0.0070 0.0061 0.0054 0.0048 0.0042 0.0038 0.0034
}\data

\begin{tikzpicture}[scale=0.6]
  \def\minval{0.001}
  \def\maxval{0.25} 

  \foreach \r in {0,...,9} {
    \foreach \c in {0,...,9} {
      \pgfplotstablegetelem{\r}{[index]\c}\of{\data}%
      \pgfmathsetmacro{\val}{\pgfplotsretval}%

      \pgfmathsetmacro{\pct}{%
        ifthenelse(
          \val>\minval,
          (ln(\val + \minval)-ln(2*\minval)) / (ln(\maxval+\minval) -ln(2*\minval) ) * 100, 0
          )%
      }%

      \node[
        fill=gray!\pct!white,
        minimum size=6mm
      ] at ({\c + 1},{-\r -1}) {};
    }
  }

  \draw[thin,black] (0.5,-0.5) rectangle (10.5,-10.5);

  \foreach \i in {1,...,10} {
    \node[font=\small] at (0,-\i) {\the\numexpr21-\i\relax};
    \node[font=\small] at (\i,0) {\the\numexpr21-\i\relax};
  }
\end{tikzpicture} & \pgfplotstableread{
0.0081 0.0146 0.0200 0.0233 0.0191 0.0047 0.0002 0.0001 0.0000 0.0000
0.0146 0.0262 0.0360 0.0419 0.0343 0.0084 0.0004 0.0001 0.0000 0.0000
0.0200 0.0360 0.0494 0.0575 0.0471 0.0115 0.0005 0.0001 0.0001 0.0000
0.0233 0.0419 0.0575 0.0669 0.0549 0.0134 0.0006 0.0001 0.0001 0.0000
0.0191 0.0343 0.0471 0.0549 0.0450 0.0110 0.0005 0.0001 0.0001 0.0000
0.0047 0.0084 0.0115 0.0134 0.0110 0.0027 0.0001 0.0000 0.0000 0.0000
0.0002 0.0004 0.0005 0.0006 0.0005 0.0001 0.0000 0.0000 0.0000 0.0000
0.0001 0.0001 0.0001 0.0001 0.0001 0.0000 0.0000 0.0000 0.0000 0.0000
0.0000 0.0000 0.0001 0.0001 0.0001 0.0000 0.0000 0.0000 0.0000 0.0000
0.0000 0.0000 0.0000 0.0000 0.0000 0.0000 0.0000 0.0000 0.0000 0.0000
}\data

\begin{tikzpicture}[scale=0.6]
  \def\minval{0.001}
  \def\maxval{0.25} 

  \foreach \r in {0,...,9} {
    \foreach \c in {0,...,9} {
      \pgfplotstablegetelem{\r}{[index]\c}\of{\data}%
      \pgfmathsetmacro{\val}{\pgfplotsretval}%

      \pgfmathsetmacro{\pct}{%
        ifthenelse(
          \val>\minval,
          (ln(\val + \minval)-ln(2*\minval)) / (ln(\maxval+\minval) -ln(2*\minval) ) * 100, 0
          )%
      }%

      \node[
        fill=gray!\pct!white,
        minimum size=6mm
      ] at ({\c + 1},{-\r -1}) {};
    }
  }

  \draw[thin,black] (0.5,-0.5) rectangle (10.5,-10.5);

  \foreach \i in {1,...,10} {
    \node[font=\small] at (0,-\i) {\the\numexpr21-\i\relax};
    \node[font=\small] at (\i,0) {\the\numexpr21-\i\relax};
  }
\end{tikzpicture} & \pgfplotstableread{
0.0985 0.0985 0.0676 0.0318 0.0118 0.0039 0.0013 0.0004 0.0001 0.0000
0.0985 0.0985 0.0676 0.0318 0.0118 0.0039 0.0013 0.0004 0.0001 0.0000
0.0676 0.0676 0.0464 0.0218 0.0081 0.0027 0.0009 0.0003 0.0001 0.0000
0.0318 0.0318 0.0218 0.0103 0.0038 0.0013 0.0004 0.0001 0.0000 0.0000
0.0118 0.0118 0.0081 0.0038 0.0014 0.0005 0.0001 0.0000 0.0000 0.0000
0.0039 0.0039 0.0027 0.0013 0.0005 0.0002 0.0000 0.0000 0.0000 0.0000
0.0013 0.0013 0.0009 0.0004 0.0001 0.0000 0.0000 0.0000 0.0000 0.0000
0.0004 0.0004 0.0003 0.0001 0.0000 0.0000 0.0000 0.0000 0.0000 0.0000
0.0001 0.0001 0.0001 0.0000 0.0000 0.0000 0.0000 0.0000 0.0000 0.0000
0.0000 0.0000 0.0000 0.0000 0.0000 0.0000 0.0000 0.0000 0.0000 0.0000
}\data

\begin{tikzpicture}[scale=0.6]
  \def\minval{0.001}
  \def\maxval{0.25}  

  \foreach \r in {0,...,9} {
    \foreach \c in {0,...,9} {
      \pgfplotstablegetelem{\r}{[index]\c}\of{\data}%
      \pgfmathsetmacro{\val}{\pgfplotsretval}%

      \pgfmathsetmacro{\pct}{%
        ifthenelse(
          \val>\minval,
          (ln(\val + \minval)-ln(2*\minval)) / (ln(\maxval+\minval) -ln(2*\minval) ) * 100, 0
          )%
      }%

      \node[
        fill=gray!\pct!white,
        minimum size=6mm
      ] at ({\c + 1},{-\r -1}) {};
    }
  }

  \draw[thin,black] (0.5,-0.5) rectangle (10.5,-10.5);

  \foreach \i in {1,...,10} {
    \node[font=\small] at (0,-\i) {\the\numexpr21-\i\relax};
    \node[font=\small] at (\i,0) {\the\numexpr21-\i\relax};
  }
\end{tikzpicture}
 
 \\
 (e) Symmetric Nash & (f) QRE ($\lambda = 0.1$) &(g) QRE ($\lambda = 0.73$) & (h) Sampling equilibrium  
\end{tabular}
}
\noindent\begin{minipage}{1\textwidth} 
\vspace{0.2cm}
\small{\textbf{Note.}
Each panel plots player 1's actions on the vertical axis and player 2's on the horizontal axis.  
Panel~(a) shows the $(k,k)$-satisficing equilibria, or $k$-SE, for $k\in\{2,3,4\}$ (note the nested structure).
 Panel~(b) shows the realized play in the experiments by \cite*{arad201211}.
 Panel~(c) shows the pure $\epsilon$-equilibria for $\epsilon< 18$. 
Panel~(d) shows the set of rationalizable action profiles.  
Panels~(e)–(h) display predicted probabilities under the unique symmetric Nash equilibrium, two logit quantal-response equilibria (QRE) with rationality parameters $\lambda\in\{0.1,0.73\}$, and the unique sampling equilibrium.
Darker shading indicates higher predicted probability or observed frequency; white denotes zero; purple vs. gray distinguishes frequencies based on experimental data from theoretical predictions.}
\end{minipage}
\label{fig:1120_concepts_main}
\end{figure}

To separate satisficing equilibrium from the latter solution concepts, we introduce the Double or Dozen game below, which has a similar structure to the 11--20 game.

\begin{quote}
    \textbf{The Double or Dozen game.} A mother asks her sons, Chacham and Rasha, to write down a number of dreidels between one and ten. If they choose different numbers, the son who picked the smaller number receives that amount plus twelve, and the other receives the number he himself chose. If the sons pick the same number, each receives double that number of dreidels.
\end{quote}

Suppose that each son strictly prefers to receive more dreidels. Then, the game admits no $(1,1)$-satisficing equilibrium---that is, there is no pure Nash equilibrium. In fact, the game has the same pure best-response correspondence as the 11--20 game. Here, there is a unique $(2, 2)$-satisficing equilibrium, which is at $(10,10)$ (see \cref{fig:DD_concepts}(a)).

To compare satisficing equilibrium with other solution concepts in the Double or Dozen game, we commit to the following cardinal representation: the utility of each son is equal to the number of dreidels he receives.  \cref{fig:DD_concepts}(b)–(h) display the predictions of the other solution concepts. All actions are again rationalizable. But here, in stark contrast to the 11--20 game, pure $\epsilon$-equilibrium agrees with satisficing equilibrium, while the other solution concepts disagree with $(k,k)$-satisficing equilibrium for low $k$, either predicting low actions or diffuse play. Maybe surprisingly, despite the structural similarity between the Double or Dozen and the 11--20 game, several solution concepts provide starkly varying predictions between the two games.

\begin{figure}[ht]
\caption{The Double or Dozen game: predictions.}
\centering
\adjustbox{width=1.0\linewidth}{
\begin{tabular}{cccc}
\begin{tikzpicture}[scale=0.6]
 \draw[thin,black] (0.5,-0.5) rectangle (10.5,-10.5);
  \tikzmath{\del = 0.05; \eps1 = 3; \eps2 = 1.2; \eps3 = 3;}
\draw[very thick, draw=none, fill=HTML1,rounded corners=0.5mm,opacity=1]
    (0.5 - \eps3*\del,-1.5 - \eps3*\del)
    -- (1.5 -\eps3*\del,-1.5 - \eps3*\del)
    -- (1.5 -\eps3*\del,-2.5 - \eps3*\del)
    -- (2.5 -\eps3*\del,-2.5 - \eps3*\del)
    -- (2.5 -\eps3*\del,-3.5 -\eps3*\del)
    -- (3.5 + \eps3*\del,-3.5 -\eps3*\del)
    -- (3.5 +\eps3*\del,-2.5 + \eps3*\del)
    -- (2.5 + \eps3*\del,-2.5 + \eps3*\del)
    -- (2.5+\eps3*\del,-1.5 + \eps3*\del)
    -- (1.5 +\eps3*\del,-1.5 + \eps3*\del)
    -- (1.5 + \eps3*\del,-0.5 +\eps3*\del)
    -- (0.5 - \eps3*\del,-0.5 + \eps3*\del)
    -- cycle;

\draw[very thick, draw=none, fill=HTML4, rounded corners = 0.8mm,opacity=1]
(0.5,-1.5)
-- (1.5,-1.5)
-- (1.5,-2.5)
-- (2.5,-2.5)
-- (2.5,-1.5)
-- (1.5,-1.5)
-- (1.5,-0.5)
-- (0.5,-0.5)
-- cycle;

\draw[very thick, draw=none, fill=HTML0, rounded corners = 1mm,opacity=1]
  (0.5 + \eps1*\del,-1.5+\eps1*\del) rectangle (1.5-\eps1*\del,-0.5-\eps1*\del);

 \node[] at (5.5,-9) {\Large \textcolor{HTML0}{\textbf{2-SE}} $\subseteq$ \textcolor{HTML4}{\textbf{3-SE}} $\subseteq$ \textcolor{HTML1}{\textbf{4-SE}} };

  \foreach \i in {1,...,10} {
    \node[font=\small] at (0,-\i) {\the\numexpr11-\i\relax};
    \node[font=\small] at (\i,0) {\the\numexpr11-\i\relax};
  }
\end{tikzpicture}& \begin{tikzpicture}[scale=0.6]
 \draw[thin,black] (0.5,-0.5) rectangle (10.5,-10.5);
  \tikzmath{\del = 0.05; \eps1 = 3; \eps2 = 1.2; \eps3 = 3;}
\draw[very thick, draw=none, fill=HTML1,rounded corners=0.5mm,opacity=1]
    (0.5 - \eps3*\del,-1.5 - \eps3*\del)
    -- (1.5 -\eps3*\del,-1.5 - \eps3*\del)
    -- (1.5 -\eps3*\del,-2.5 - \eps3*\del)
    -- (2.5 -\eps3*\del,-2.5 - \eps3*\del)
    -- (2.5 -\eps3*\del,-3.5 -\eps3*\del)
    -- (3.5 + \eps3*\del,-3.5 -\eps3*\del)
    -- (3.5 +\eps3*\del,-2.5 + \eps3*\del)
    -- (2.5 + \eps3*\del,-2.5 + \eps3*\del)
    -- (2.5+\eps3*\del,-1.5 + \eps3*\del)
    -- (1.5 +\eps3*\del,-1.5 + \eps3*\del)
    -- (1.5 + \eps3*\del,-0.5 +\eps3*\del)
    -- (0.5 - \eps3*\del,-0.5 + \eps3*\del)
    -- cycle;

\draw[very thick, draw=none, fill=HTML4, rounded corners = 0.8mm,opacity=1]
(0.5,-1.5)
-- (1.5,-1.5)
-- (1.5,-2.5)
-- (2.5,-2.5)
-- (2.5,-1.5)
-- (1.5,-1.5)
-- (1.5,-0.5)
-- (0.5,-0.5)
-- cycle;

\draw[very thick, draw=none, fill=HTML0, rounded corners = 1mm,opacity=1]
  (0.5 + \eps1*\del,-1.5+\eps1*\del) rectangle (1.5-\eps1*\del,-0.5-\eps1*\del);

 \node[] at (5.5,-9) {\Large \textcolor{HTML0}{\textbf{$\epsilon=1$}} $\subseteq$ \textcolor{HTML4}{$\epsilon=2$} $\subseteq$ \textcolor{HTML1}{\textbf{$\epsilon=3$}} };

  \foreach \i in {1,...,10} {
    \node[font=\small] at (0,-\i) {\the\numexpr11-\i\relax};
    \node[font=\small] at (\i,0) {\the\numexpr11-\i\relax};
  }
\end{tikzpicture} &
\begin{tikzpicture}[scale=0.6]
  \draw[thin,black] (0.5,-0.5) rectangle (10.5,-10.5);

    \begin{scope}
    \fill[pattern=north east lines, pattern color=gray]
      (0.5,-0.5) rectangle (10.5,-10.5) -- cycle;
  \end{scope}

  \foreach \i in {1,...,10} {
    \node[font=\small] at (0,-\i) {\the\numexpr11-\i\relax};
    \node[font=\small] at (\i,0) {\the\numexpr11-\i\relax};
  }
\end{tikzpicture} & \pgfplotstableread{
0.0064  0.0075  0.0072  0.0073  0.0072  0.0073  0.0072  0.0075  0.0064  0.0160
0.0075  0.0087  0.0084  0.0085  0.0084  0.0085  0.0084  0.0087  0.0075  0.0187
0.0072  0.0084  0.0081  0.0082  0.0081  0.0082  0.0081  0.0084  0.0072  0.0180
0.0073  0.0085  0.0082  0.0084  0.0083  0.0084  0.0082  0.0085  0.0073  0.0183
0.0072  0.0084  0.0081  0.0083  0.0082  0.0083  0.0081  0.0084  0.0072  0.0181
0.0073  0.0085  0.0082  0.0084  0.0083  0.0084  0.0082  0.0085  0.0073  0.0183
0.0072  0.0084  0.0081  0.0082  0.0081  0.0082  0.0081  0.0084  0.0072  0.0180
0.0075  0.0087  0.0084  0.0085  0.0084  0.0085  0.0084  0.0087  0.0075  0.0187
0.0064  0.0075  0.0072  0.0073  0.0072  0.0073  0.0072  0.0075  0.0064  0.0160
0.0160  0.0187  0.0180  0.0183  0.0181  0.0183  0.0180  0.0187  0.0160  0.0400
}\data

\begin{tikzpicture}[scale=0.6]
  \def\minval{0.001}
  \def\maxval{0.25} 

  \foreach \r in {0,...,9} {
    \foreach \c in {0,...,9} {
      \pgfplotstablegetelem{\r}{[index]\c}\of{\data}%
      \pgfmathsetmacro{\val}{\pgfplotsretval}%
      \pgfmathsetmacro{\pct}{%
        ifthenelse(
          \val>\minval,
          (ln(\val + \minval)-ln(2*\minval)) / (ln(\maxval+\minval) -ln(2*\minval) ) * 100, 0
          )%
      }%

      \node[
        fill=gray!\pct!white,
        minimum size=6mm
      ] at ({\c + 1},{-\r -1}) {};
    }
  }

  \draw[thin,black] (0.5,-0.5) rectangle (10.5,-10.5);

  \foreach \i in {1,...,10} {
    \node[font=\small] at (0,-\i) {\the\numexpr11-\i\relax};
    \node[font=\small] at (\i,0) {\the\numexpr11-\i\relax};
  }
\end{tikzpicture}\\
(a)  Satisficing equilibria & Pure $\epsilon$-equilibria & (c) Rationalizable & (d) First symmetric Nash   \\
\pgfplotstableread{
0.2500  0.0000  0.0625  0.0357  0.0536  0.0357  0.0625  0.0000  0.0000  0.0000
0.0000  0.0000  0.0000  0.0000  0.0000  0.0000  0.0000  0.0000  0.0000  0.0000
0.0625  0.0000  0.0156  0.0089  0.0134  0.0089  0.0156  0.0000  0.0000  0.0000
0.0357  0.0000  0.0089  0.0051  0.0076  0.0051  0.0089  0.0000  0.0000  0.0000
0.0536  0.0000  0.0134  0.0076  0.0115  0.0076  0.0134  0.0000  0.0000  0.0000
0.0357  0.0000  0.0089  0.0051  0.0076  0.0051  0.0089  0.0000  0.0000  0.0000
0.0625  0.0000  0.0156  0.0089  0.0134  0.0089  0.0156  0.0000  0.0000  0.0000
0.0000  0.0000  0.0000  0.0000  0.0000  0.0000  0.0000  0.0000  0.0000  0.0000
0.0000  0.0000  0.0000  0.0000  0.0000  0.0000  0.0000  0.0000  0.0000  0.0000
0.0000  0.0000  0.0000  0.0000  0.0000  0.0000  0.0000  0.0000  0.0000  0.0000
}\data

\begin{tikzpicture}[scale=0.6]
  \def\minval{0.001}
  \def\maxval{0.25}  

  \foreach \r in {0,...,9} {
    \foreach \c in {0,...,9} {
      \pgfplotstablegetelem{\r}{[index]\c}\of{\data}%
      \pgfmathsetmacro{\val}{\pgfplotsretval}%

      \pgfmathsetmacro{\pct}{%
        ifthenelse(
          \val>\minval,
          (ln(\val + \minval)-ln(2*\minval)) / (ln(\maxval+\minval) -ln(2*\minval) ) * 100, 0
          )%
      }%

      \node[
        fill=gray!\pct!white,
        minimum size=6mm
      ] at ({\c + 1},{-\r -1}) {};
    }
  }

  \draw[thin,black] (0.5,-0.5) rectangle (10.5,-10.5);

  \foreach \i in {1,...,10} {
    \node[font=\small] at (0,-\i) {\the\numexpr11-\i\relax};
    \node[font=\small] at (\i,0) {\the\numexpr11-\i\relax};
  }
\end{tikzpicture}
&  
\pgfplotstableread{
 0.00826488  0.0082658   0.00826936  0.00828086  0.0083135   0.00839772  0.00860141  0.009079    0.0102181  0.0132208
 0.0082658   0.00826672  0.00827028  0.00828178  0.00831443  0.00839866  0.00860236  0.00908001  0.0102192  0.0132223
 0.00826936  0.00827028  0.00827385  0.00828536  0.00831801  0.00840228  0.00860607  0.00908393  0.0102236  0.013228
 0.00828086  0.00828178  0.00828536  0.00829688  0.00832958  0.00841396  0.00861804  0.00909656  0.0102378  0.0132464
 0.0083135   0.00831443  0.00831801  0.00832958  0.00836241  0.00844712  0.00865201  0.00913241  0.0102782  0.0132986
 0.00839772  0.00839866  0.00840228  0.00841396  0.00844712  0.0085327   0.00873966  0.00922493  0.0103823  0.0134333
 0.00860141  0.00860236  0.00860607  0.00861804  0.00865201  0.00873966  0.00895164  0.00944868  0.0106341  0.0137591
 0.009079    0.00908001  0.00908393  0.00909656  0.00913241  0.00922493  0.00944868  0.00997332  0.0112246  0.0145231
 0.0102181   0.0102192   0.0102236   0.0102378   0.0102782   0.0103823   0.0106341   0.0112246   0.0126329  0.0163452
 0.0132208   0.0132223   0.013228    0.0132464   0.0132986   0.0134333   0.0137591   0.0145231   0.0163452  0.0211484
}\data

\begin{tikzpicture}[scale=0.6]
  \def\minval{0.001}
  \def\maxval{0.25} 

  \foreach \r in {0,...,9} {
    \foreach \c in {0,...,9} {
      \pgfplotstablegetelem{\r}{[index]\c}\of{\data}%
      \pgfmathsetmacro{\val}{\pgfplotsretval}%

      \pgfmathsetmacro{\pct}{%
        ifthenelse(
          \val>\minval,
          (ln(\val + \minval)-ln(2*\minval)) / (ln(\maxval+\minval) -ln(2*\minval) ) * 100, 0
          )%
      }%

      \node[
        fill=gray!\pct!white,
        minimum size=6mm
      ] at ({\c + 1},{-\r -1}) {};
    }
  }

  \draw[thin,black] (0.5,-0.5) rectangle (10.5,-10.5);

  \foreach \i in {1,...,10} {
    \node[font=\small] at (0,-\i) {\the\numexpr11-\i\relax};
    \node[font=\small] at (\i,0) {\the\numexpr11-\i\relax};
  }
\end{tikzpicture} & \pgfplotstableread{
 0.00820155  0.00824313  0.00822779  0.00823679  0.00822883  0.00823935  0.00821803  0.00828913  0.00780981  0.016868
 0.00824313  0.00828493  0.00826951  0.00827855  0.00827055  0.00828113  0.0082597   0.00833116  0.00784941  0.0169535
 0.00822779  0.00826951  0.00825412  0.00826314  0.00825516  0.00826571  0.00824433  0.00831566  0.0078348   0.0169219
 0.00823679  0.00827855  0.00826314  0.00827218  0.00826418  0.00827475  0.00825334  0.00832475  0.00784337  0.0169404
 0.00822883  0.00827055  0.00825516  0.00826418  0.00825619  0.00826675  0.00824536  0.0083167   0.00783579  0.0169241
 0.00823935  0.00828113  0.00826571  0.00827475  0.00826675  0.00827732  0.00825591  0.00832734  0.00784581  0.0169457
 0.00821803  0.0082597   0.00824433  0.00825334  0.00824536  0.00825591  0.00823455  0.00830579  0.00782551  0.0169019
 0.00828913  0.00833116  0.00831566  0.00832475  0.0083167   0.00832734  0.00830579  0.00837765  0.00789321  0.0170481
 0.00780981  0.00784941  0.0078348   0.00784337  0.00783579  0.00784581  0.00782551  0.00789321  0.00743679  0.0160623
 0.016868    0.0169535   0.0169219   0.0169404   0.0169241   0.0169457   0.0169019   0.0170481   0.0160623   0.034692
}\data

\begin{tikzpicture}[scale=0.6]
  \def\minval{0.001}
  \def\maxval{0.25}  

  \foreach \r in {0,...,9} {
    \foreach \c in {0,...,9} {
      \pgfplotstablegetelem{\r}{[index]\c}\of{\data}%
      \pgfmathsetmacro{\val}{\pgfplotsretval}%

      \pgfmathsetmacro{\pct}{%
        ifthenelse(
          \val>\minval,
          (ln(\val + \minval)-ln(2*\minval)) / (ln(\maxval+\minval) -ln(2*\minval) ) * 100, 0
          )%
      }%

      \node[
        fill=gray!\pct!white,
        minimum size=6mm
      ] at ({\c + 1},{-\r -1}) {};
    }
  }

  \draw[thin,black] (0.5,-0.5) rectangle (10.5,-10.5);

  \foreach \i in {1,...,10} {
    \node[font=\small] at (0,-\i) {\the\numexpr11-\i\relax};
    \node[font=\small] at (\i,0) {\the\numexpr11-\i\relax};
  }
\end{tikzpicture}
 & \pgfplotstableread{
0.0000  0.0001  0.0001  0.0002  0.0004  0.0006  0.0008  0.0006  0.0002  0.0000
0.0001  0.0004  0.0008  0.0013  0.0024  0.0039  0.0050  0.0043  0.0016  0.0002
0.0001  0.0008  0.0018  0.0028  0.0050  0.0082  0.0107  0.0091  0.0034  0.0003
0.0002  0.0013  0.0028  0.0045  0.0080  0.0130  0.0169  0.0143  0.0055  0.0005
0.0004  0.0024  0.0050  0.0080  0.0142  0.0231  0.0300  0.0254  0.0097  0.0009
0.0006  0.0039  0.0082  0.0130  0.0231  0.0377  0.0489  0.0414  0.0158  0.0015
0.0008  0.0050  0.0107  0.0169  0.0300  0.0489  0.0634  0.0538  0.0205  0.0019
0.0006  0.0043  0.0091  0.0143  0.0254  0.0414  0.0538  0.0456  0.0174  0.0016
0.0002  0.0016  0.0034  0.0055  0.0097  0.0158  0.0205  0.0174  0.0066  0.0006
0.0000  0.0002  0.0003  0.0005  0.0009  0.0015  0.0019  0.0016  0.0006  0.0001
}\data

\begin{tikzpicture}[scale=0.6]
  \def\minval{0.001}
  \def\maxval{0.25}

  \foreach \r in {0,...,9} {
    \foreach \c in {0,...,9} {
      \pgfplotstablegetelem{\r}{[index]\c}\of{\data}%
      \pgfmathsetmacro{\val}{\pgfplotsretval}%

      \pgfmathsetmacro{\pct}{%
        ifthenelse(
          \val>\minval,
          (ln(\val + \minval)-ln(2*\minval)) / (ln(\maxval+\minval) -ln(2*\minval) ) * 100, 0
          )%
      }%

      \node[
        fill=gray!\pct!white,
        minimum size=6mm
      ] at ({\c + 1},{-\r -1}) {};
    }
  }

  \draw[thin,black] (0.5,-0.5) rectangle (10.5,-10.5);

  \foreach \i in {1,...,10} {
    \node[font=\small] at (0,-\i) {\the\numexpr11-\i\relax};
    \node[font=\small] at (\i,0) {\the\numexpr11-\i\relax};
  }
\end{tikzpicture}
\\
(e) Second symmetric Nash
&(f) QRE ($\lambda = 1$) & (g) QRE ($\lambda = 16$) & (h) Sampling equilibrium 
\end{tabular}
}
\noindent\begin{minipage}{1\textwidth} 
\vspace{0.2cm}
\small{\textbf{Note.} 
Each panel plots Chacham's actions on the vertical axis and Rasha's on the horizontal axis.  
Panel~(a) shows the $(k,k)$-satisficing equilibria, or $k$-SE, for $k\in\{2,3,4\}$ (note the nested structure). 
 Panel~(b) shows the pure $\epsilon$-equilibria for $\epsilon\leq 3$. 
Panel~(c) shows the set of rationalizable action profiles.  
Panels~(d)–(h) display predicted probabilities under the two symmetric Nash equilibria, two logit quantal-response equilibria (QRE) with rationality parameters $\lambda\in\{1,16\}$, and the unique sampling equilibrium.  
Darker shading indicates higher predicted probability; white denotes zero.
}
\end{minipage}
\label{fig:DD_concepts}
\end{figure}

In  \cref{app:gamesmain}, we introduce an additional game that separates satisficing equilibrium from level-$\mathpzc{k}$ thinking (\cref{app:110game}) and discuss the Traveler’s Dilemma \citep[][\cref{sec:TD}]{basu1994traveler}, where satisficing equilibrium with values of~$k \le 4$ again  aligns well with existing empirical evidence. However, given our focus on theory, we  do not provide a full empirical study of satisficing equilibrium (see related discussion in \cref{sec:characterization}).

%
%

\section{Setup}\label{sec:model}

A \emph{game} is a tuple
\[
\g := \left( \;N, \, \{A_i\}_{i \in N}, \, \{\succsim_i\}_{i \in N} \; \right)
\]
consisting of a set of agents $N:=\{1,\ldots,n\}$, a set of actions $A_i:=\{1,\ldots,m_i\}$ for each agent $i\in\agentset$ (with $\anumvec=(m_1,\ldots,m_n)$), and a preference relation $\succsim_i$ for each agent $i\in\agentset$ over the set of action profiles $\profileset:=\times_{i \in N} A_i$. 

Write $\succ_i$ for the strict part of $\succsim_i$ and, in a standard abuse of notation, let $\actionvec=(a_i,\actionvec_{-i}) \in \profileset$ denote an action profile in which $a_i$ is the action of $i$ and $\actionvec_{-i} \in \profileset_{-i } := \times_{j \in N \setminus \{i\}} A_j$ is the action profile of agents in $N \setminus \{i\}$. For a game $\g$, agent $i\in\agentset$, and positive integer $k_i$, we say that the action $\action_i $ is a \emph{top}-$k_i$ \emph{response} to $\actionvec_{-i} $, and that  agent $i$ is \emph{$k_i$-satisficed} at $\actionvec$, if 
\[
\rank_i(\actionvec \,|\preferencerelation):= 1 + |\{x \in A_i : (x,\actionvec_{-i}) \succ_i \actionvec \}| \leq k_i.
\]
We next define the central concept of our article.
\begin{definition*}[Satisficing equilibrium] 
In any  game $\g$ and for any vector of positive integers $ \mathbf k := \V{k_1}{k_n}$, an action profile $\actionvec\in\profileset$ is a \emph{$\mathbf k$-satisficing equilibrium} if each agent $i\in\agentset$ is $k_i$-satisficed at $\actionvec$. 
\end{definition*}

While the definition allows for indifferences, we will assume throughout that each agent's preferences over their unilateral deviations are strict; that is, for any $i$, and any distinct action profiles $(x,\actionvec_{-i})$ and $(x',\actionvec_{-i})$, either $(x,\actionvec_{-i}) \succ_i (x',\actionvec_{-i})$ or $ (x,\actionvec_{-i})\prec_i (x',\actionvec_{-i})$. This assumption is not required for most of our results, but it simplifies the presentation and comparison with some related results from the literature.

Lastly, for any positive integer $n$ and any vector of positive integers $\anumvec=(m_1,\ldots,m_n)$, let  $\G{n}{\anumvec}$ denote the set of all games with agent set $N=\{1,\ldots,n\}$ and action set $\{1,\ldots,m_i\}$ for each agent $i \in N$. Any distinct games $\g$ and $g_{n,\anumvec,\succsim'}$ in $\G{n}{\anumvec}$ therefore have the same set of agents and action profiles and differ only in the collections of agents' preferences, $\succsim$ versus $\succsim'$. When subscripts are implied, we denote a representative element of $\G{n}{\anumvec}$ simply by $g$.
%
%

\section{Structural properties}\label{sec:structural}

In this section, we study the structure of $\mathbf{k}$-satisficing equilibria. We start with a simple observation on their nestedness.

\begin{claim}\label{observation:nested}
 Fix a game $\g$. For vectors $\mathbf{k},\mathbf{k}'\in\{1,2,\ldots\}^n$ with $\mathbf{k}\le \mathbf{k}'$, every $\mathbf{k}$-satisficing equilibrium of $\g$ is also a $\mathbf{k}'$-satisficing equilibrium of $\g$.
\end{claim}

In particular, \cref{observation:nested} shows that every pure Nash equilibrium is also a $\mathbf k$-satisficing equilibrium. Increasing $\mathbf k$ provides a structured departure from optimization that gradually accommodates more choice patterns.

\subsection{Precision}\label{sec:precision}

How sharp are the predictions of satisficing equilibrium? We formalize \emph{precision} in accordance with \cite*{selten1991properties}.

\begin{definition}
    For any game $g \in \G{n}{\anumvec}$ and $\mathbf k\in\{1,2,\ldots\}^n$, the \emph{precision} of $\mathbf k$-satisficing equilibrium in $g$ is given by
\begin{equation*}
    1-\frac{|\, \{\actionvec \in \profileset: \actionvec \text{ is a }\mathbf{k}\text{-satisficing equilibrium in }g\}\,|}{|\profileset|}.
\end{equation*}
The \emph{minimal precision} of satisficing equilibrium in $g$ is given by the above expression with $\mathbf k=\V{m_1-1}{m_n-1}$.
\end{definition}

Precision measures how sharply $\mathbf k$-satisficing equilibrium restricts possible outcomes. If there are few $\mathbf k$-satisficing equilibria compared to the total number of action profiles, then the precision is close to $1$. If most action profiles are $\mathbf k$-satisficing equilibria, then the precision is close to $0$. Given the nested structure of satisficing equilibrium (\cref{observation:nested}), the precision for any $\mathbf k'\leq \V{m_1-1}{m_n-1}$ is at least as high as the minimal  precision. The latter is the precision of satisficing equilibria in which no agent plays their worst action in response to the actions of the other agents.

We first provide a lower bound on the minimal precision of satisficing equilibrium. All proofs are provided in \cref{app:proofs}.

\begin{proposition}\label{prop:preditiveprecision}
For any set of games $\G{n}{\anumvec}$, the minimal precision of satisficing equilibrium in any $g\in\G{n}{\anumvec}$ is bounded below by $\tfrac{1}{\min_i m_i}$. Moreover, there exists some $g\in\G{n}{\anumvec}$ that achieves this bound.
\end{proposition}

Since the lower bound can be small, \cref{prop:preditiveprecision} shows that satisficing equilibrium may be fairly permissive in some games, thus not offering sharp selection. However, the precision of satisficing equilibrium is typically high:

\begin{theorem}\label{thm:predictiveprecision} 
Fix $\epsilon>0$. The minimal precision of satisficing equilibrium is  at least $1-\epsilon$ in all games, except for a fraction that is exponentially small in the number of agents. Formally, there exist positive constants $c$ and $\bar{n}$ such that for any set of games $ \G{n}{\anumvec}$ with $n \geq \bar{n}$ and $\anumvec \in \{2,3,\ldots\}^n$, 
\begin{equation*}\label{eq:natural}
\bigg| \left\{ g \in \G{n}{\anumvec} : \;
  \vcenter{\hbox{\text{\shortstack[l]{\emph{the minimal precision of satisficing}\\ 
     \emph{equilibrium in $g$ is at least $1-\epsilon$}}}}}
    \; \right\} \bigg|
\;\geq\; (1 - e^{-cn}) \, |\, \G{n}{\anumvec} \,|.
\end{equation*}
\end{theorem}

\cref{thm:predictiveprecision} shows that, in most games, satisficing equilibrium provides a sharp selection among all possible action profiles. In fact, the $\mathbf k$-satisficing equilibria in which no agent plays her worst response (that is, $k_i<m_i$ for all $i\in\agentset$) constitute at most an $\epsilon$-fraction of all action profiles in most games.  This implies that the precision of $\mathbf k$-satisficing equilibrium is typically high, and, in fact, asymptotically as high as that of pure Nash equilibrium.\footnote{\cite{mclennan2005expected} shows that, in an appropriately defined way, the expected number of mixed Nash equilibria is exponential in the number of agents and, in particular, is larger than the number of pure action profiles. In this sense, $\mathbf k$-satisficing equilibrium is sharper than mixed Nash equilibrium.} Observe, however, that increasing $\mathbf k$ does lead to a loss in precision and one may be interested in quantifying this loss for different values of $\mathbf k$. For example, for small $k=k_i$ in the Double or Dozen game (see \cref{sec:otherconcepts}), the loss in precision is $1/100$ when moving from $k$ to $k+1$.

\subsection{Existence}\label{sec:existence}

We next turn to study the existence of satisficing equilibrium. As noted, pure Nash equilibrium is not guaranteed to exist (e.g., in rock-paper-scissors or Bertrand competition with a discrete action space). Satisficing equilibrium is similarly not guaranteed to exist.

\begin{proposition}\label{prop:nonexistence}
For any set of games $\G{n}{\anumvec}$, there exists a game $g\in \G{n}{\anumvec}$ that does not have a $\mathbf k$-satisficing equilibrium for any $\mathbf k \leq (m_1-\lceil \tfrac{m_1}{n}\rceil,\dots,m_n-\lceil \tfrac{m_n}{n}\rceil)$. 
\end{proposition}

For intuition, consider a set of games $\G{n}{\anumvec}$ in which $m_i \leq n$ for all $i$. \cref{prop:nonexistence} implies that there is a game $g$ in this set for which, at every action profile, some agent is playing her worst action in response to the others' actions. In other words, the game $g$ does not have a $\mathbf k$-satisficing equilibrium for any $\mathbf k$ such that $k_i<m_i$ for all $i\in \agentset$. 

Our proof of \cref{prop:nonexistence} recasts the problem of finding a game with the desired property as a matching problem and then invokes \citeauthor{Hall35}'s Theorem (\citeyear{Hall35}). Additionally, in \cref{sec:nonexistencetight}, we show that the statement of \cref{prop:nonexistence} is tight in the following sense: every game $g\in \mathbf G_{n,\anumvec}$ has a $\mathbf k$-satisficing equilibrium for any $\mathbf k\geq (m_1-\lceil \tfrac{m_1}{n}\rceil+1,\dots,m_n-\lceil \tfrac{m_n}{n}\rceil+1)$.

While $\mathbf k$-satisficing equilibria are not guaranteed to exist in every game for every $\mathbf k$, we next show that such equilibria generally have good existence properties, even for small values of $\mathbf k$.

\begin{theorem}\label{thm:one_is_enough}
All but a fraction of games that is exponentially small in the number of agents admit a satisficing equilibrium where one agent is $2$-satisficed and all other agents are $1$-satisficed. Formally, there exist positive constants $c$ and $\bar{n}$ such that for any set of games $\G{n}{\anumvec}$ with $n \geq \bar{n}$ and any $\anumvec \in \{2,3,\ldots\}^n$,
    \[
\bigg| \left\{ g \in \G{n}{\anumvec} : \;
  \vcenter{\hbox{\text{\shortstack[l]{\emph{$g$ admits a $\mathbf k$-satisficing equilibrium with}\\
\emph{{$k_i=2$ for some $i$ and $k_j=1$ for all $j \neq i$}}}}}}
  \; \right\} \bigg|
\;\geq\; (1 - e^{-cn}) \, |\, \G{n}{\anumvec} \,|.
\]
\end{theorem}

By the nestedness of satisficing equilibrium (\cref{observation:nested}) we obtain the following corollary of \cref{thm:one_is_enough}.\footnote{In fact, \cref{cor:ksatisficing} extends trivially to $\mathbf k$ for any $\mathbf k \geq \V{2}{2}$.} 

\begin{corollary}\label{cor:ksatisficing}
All but a fraction of games that is exponentially small in the number of agents admit a $\V{2}{2}$-satisficing equilibrium.
\end{corollary}

\cref{thm:one_is_enough,cor:ksatisficing} contrast with the fact that pure Nash equilibria exist only in approximately $1-1/e\approx 63\%$ of games \citep*{arratia1989two,Rin00}. Our results therefore show that a very small deviation from optimizing behavior (as captured by a single agent being $2$-satisficed) suffices to ensure the general existence of stable outcomes.\footnote{Note that in \cref{thm:one_is_enough} the agent who is $2$-satisficed is not fixed. One might instead be interested in the fraction of games that admit a $(k_1,1,\ldots,1)$-satisficing equilibrium where agent 1 specifically is fixed to have $k_1>1$. \cref{cor:cor1} in \cref{appendixB} implies that the fraction of games that do not possess such an equilibrium is non-negligible and does not vanish with the number of agents.} This may explain the emergence of stable outcomes even in games that do not possess a pure Nash equilibrium (for  more on this point, see \cref{sec:potgame} below).

\cref{thm:one_is_enough} follows from a more general theorem, which approximates the distribution of $\mathbf k$-satisficing equilibria and allows for more granular results (see \cref{lem:general} in \cref{appendixB}). Its proof employs the probabilistic method and the Chen-Stein method \citep*[see][]{arratia1989two}.

\subsection{Approximate ordinal potential games}\label{sec:potgame}

The previous results establish that satisfying equilibria typically exist, but they may fail to exist in some cases. We next introduce a sufficient condition that guarantees existence within a broad and economically meaningful class of environments. Our approach parallels the classical theory of potential games: when agents’ incentives are aligned with a common potential function, a pure Nash equilibrium is guaranteed to exist \citep{monderer1996potential}. Here, we generalize this logic by allowing the alignment to be approximate rather than exact and define the class of \emph{$\mathbf{k}$-approximate ordinal potential games}. Intuitively, an agent's ranking of an outcome may only be worse than the one given by a common order by a limited amount---captured by the vector $\mathbf{k}$. 

Formally, in a game $\g$, for any $i \in \agentset$, any action profile $\actionvec \in \profileset$, and any weak order $\potpref$ on $\profileset$, let $\rank_i(\actionvec \,|\,\potpref):= 1 + |\{x \in \actionset : (x,\actionvec_{-i}) \potprefstrict \actionvec \}|$.

\begin{definition*}[Approximate ordinal potential games]\label{def:approximatepotential}
Fix any game $\g$ and $\mathbf k\in\{1,2,\ldots\}^n$. We say that $\g$ is a \emph{$\mathbf k$-approximate ordinal potential game} if there exists a weak order $\potpref$ on $\profileset$ such that for all $i\in \agentset$ and all $\actionvec \in \profileset$,
      \[
     \, \rank_i(\actionvec \,|\succsim_i)<\rank_i(\actionvec \,|\,\potpref) \,  + k_i .
     \]
\end{definition*}
This definition nests ordinal potential games because a game is a $\V{1}{1}$-approximate ordinal potential game if and only if it is an ordinal potential game \citep{monderer1996potential}.\footnote{If $\mathbf{k}=\V{1}{1}$, the defining inequality implies $ \rank_i(\actionvec \mid \succsim_i)\le \rank_i(\actionvec \mid \potpref) $ for all $i\in\agentset$ and all $\actionvec\in\profileset$. Suppose, for a contradiction, that the inequality is strict for some $i$ at $\actionvec=(a_i,\actionvec_{-i})$. Then there are strictly more deviations ranked better than $\actionvec$ under $\potpref$ than under $\succsim_i$, so for some other deviation $\actionvec'=(a_i',\actionvec_{-i})$ we must have $ \rank_i(\actionvec' \mid \succsim_i)>\rank_i(\actionvec' \mid \potpref)$, contradicting the defining inequality. Hence the two rank functions coincide on every unilateral deviation set, so $\succsim_i$ and $\potpref$ induce the same weak order there. Therefore $\g$ is a $\V{1}{1}$-approximate potential game if and only if it an is ordinal potential game.} In a $\V{2}{2}$-approximate ordinal potential game, there exists a weak order such that for any agent $i\in\agentset$ and any action profile $(a_i,\actionvec_{-i})$, $a_i$ is a top-$(1+\{\rank_i(\actionvec \,|\,\potpref)\})$ response to $\actionvec_{-i}$.

\begin{theorem}\label{thm:approximatepotential}
Fix any game $\g$ and any $\mathbf k\in\{1,2,\ldots\}^n$. If $\g$ is a $\mathbf k$-approximate ordinal potential game, then it admits a $\mathbf k$-satisficing equilibrium.
\end{theorem}

\cref{thm:approximatepotential} provides a simple sufficient condition for equilibrium existence. For instance, if a game is a $\V{2}{2}$-approximate ordinal potential game, then it is guaranteed to admit a $\V{2}{2}$-satisficing equilibrium.  Observe, however, that many games that are not $\V{2}{2}$-approximate ordinal potential games may still admit a $\V{2}{2}$-satisficing equilibrium; this is the case, for example, in the Double or Dozen game (see \cref{sec:otherconcepts}). 

Many familiar economic environments can be viewed as potential games under restrictive assumptions---for instance, public-goods provision with homogeneous preferences, Cournot competition with linear demand, or congestion games with identical cost functions. When these assumptions are relaxed, the agents' incentives  often remain approximately aligned, but the potential structure may be lost. Hence, $\mathbf{k}$-approximate ordinal potential games are an economically meaningful class of games. 

For illustration, consider an oligopoly production environment in the spirit of \citet[Example 2]{rosenthal1973class}. Each firm chooses one of several available production processes, each of which uses a subset of primary factors. Satisficing behavior may arise naturally in such environments, as firms may, for example, exhibit decisional inescapability or inertia. When all firms share identical ordinal preferences---with each preferring to employ processes that rely on less congested factors---the game is an ordinal potential game and thus admits a pure Nash equilibrium \citep{rosenthal1973class,monderer1996potential}.  Introducing mild preference heterogeneity breaks the exact alignment of incentives and can destroy the potential structure. For example, suppose that each firm prefers to share the same process as a specific competitor when factors are otherwise equally congested. This could be due to reputational benefits or standardization of tooling and workforce skills among downstream customers. The resulting game is a $\V{2}{2}$-approximate ordinal potential game and thus, by \cref{thm:approximatepotential}, admits a $\V{2}{2}$-satisficing equilibrium, but it is not a potential game and a pure Nash equilibrium may fail to exist. A formal treatment of this game is provided in Supplementary Appendix II. 

Similar arguments apply to other nearly potential environments, such as public-goods games with idiosyncratic components that depend on all agents’ actions, or Cournot competition with heterogeneous cost functions. The more preferences depart from being perfectly aligned, the larger $\mathbf k$ must be for a game to remain a $\mathbf k$-approximate ordinal potential game.

In sum, in \cref{sec:structural}, we established key structural and existence properties of satisficing equilibria. We showed that the concept gradually extends the classical equilibrium framework---retaining tractability while accommodating bounded rationality.

\section{Foundations}\label{sec:emergence}

We now consider why or how satisficing equilibria may emerge.  We assume that agents exhibit \emph{satisficing behavior} in accordance with  decision-theoretic foundations. 

As discussed in \cref{sec:introduction}, a large decision-theoretic and empirical literature documents satisficing behavior in individual choice problems. Theories include consideration sets, complexity, search,  attention to differences of first-order importance,  sparsity, and status-quo bias (see Supplementary Appendix IV.2 for a detailed review). Recently, \citet*{Bar22} show that modeling an agent as choosing among her top~$k_i$ actions is observationally equivalent---that is, empirically indistinguishable in a revealed preference sense---to a broad class of such theories. Moreover, beyond bounded rationality, there are also rational foundations for choosing one of one's top~$k_i$ actions. \cite*{Sen1993} shows that an agent may choose her second best action  due to internalized norms such as not wanting to appear ``greedy'' or wanting to show ``self-control''. Together, these decision-theoretic foundations motivate our assumption that agents exhibit $k_i$-satisficing behavior. Formally:

\begin{quote}
Agent $i$ exhibits $k_i$-\emph{satisficing behavior} in a game $\g$ at an action profile $\actionvec\in\profileset$ if she chooses or remains at an action $a_i$ such that she is  $k_i$-satisficed at $\actionvec=(a_i,\actionvec_{-i})$. 
\end{quote}

Below, we show how satisficing behavior leads to satisficing equilibrium, both epistemically---requiring only weak knowledge and rationality assumptions---and dynamically---arising as the limit of a simple decentralized adjustment process. In addition, Supplementary Appendix I provides an axiomatization of satisficing equilibrium in line with the classic axiomatization of Nash equilibrium by \cite*{peleg1996consistency}.

\subsection{Epistemic conditions}

The following observation provides epistemic conditions for satisficing equilibrium.

\begin{claim}\label{obs:static}   Fix a game $\g$, an action profile $\actionvec \in \profileset$, and $\mathbf k\in\{1,2,\ldots\}^n$. Suppose that each agent $i\in\agentset$ exhibits $k_i$-satisficing behavior at $\actionvec$ and that each agent knows that the action choices of the other agents are $\actionvec_{-i}$. Then $\actionvec$ is a $\mathbf k$-satisficing equilibrium of $\g$. 
\end{claim}

\cref{obs:static} mirrors the epistemic conditions for pure Nash equilibrium given in \citet[][preliminary observation]{aumann1995epistemic}. It shows that mutual knowledge of action choices is sufficient for $\mathbf{k}$-satisficing equilibrium, while common knowledge is not required. No knowledge of others’ preferences or behavior is needed.\footnote{Note that, in \citet{aumann1995epistemic}, stronger epistemic conditions---mutual knowledge of rationality and common beliefs---become necessary only when mixed strategies and strategic uncertainty are introduced.}  Hence, $\mathbf{k}$-satisficing equilibrium passes the ``announcement test’’ \citep[see, e.g.,][]{Hol04}: if all agents announced their actions simultaneously, then no $k_i$-satisficing agent would want to reconsider if the announced action profile is a $\mathbf k$-satisficing equilibrium. The same logic applies to mediator recommendations and implicit or explicit agreements.

\subsection{Dynamic emergence}

Next, we turn to the dynamic emergence of satisficing equilibria. Dynamic foundations for equilibrium behavior go back to Cournot and Walras and have since been proposed to overcome strong assumptions such as forward-looking optimizing behavior or common knowledge of rationality \citep[see][]{weibull1995evolutionary,samuelson2016game}. This approach has also led to fruitful applications in the study of the emergence of social norms, macro-economic market fluctuations, and industrial organization \citep{young2004strategic,newton2018evolutionary}. We show that satisficing equilibrium emerges as a stable point of a dynamic process in which agents exhibit satisficing behavior and myopically adjust their actions over time in response to the changing actions of the other agents. 

Consider the \emph{top-$\mathbf k$ response dynamic} in which each agent becomes active at random times and chooses one of her top-$k_i$ actions in response to the current action profile. Formally, denote the initial, arbitrarily selected, action profile by $\actionvec(0)$. At each time $t\in \{1,2,3,\ldots\}$, each agent is active with probability $\lambda_i \in (0,1)$, independently of all other agents. An agent $i$ who is active at time $t$ for the first time chooses one of her top-$k_i$  actions in response to $\actionvec_{-i}(t-1)$, each with positive probability. An agent $i$ who is active at time $t$ and was last active at $t' < t$ considers the sequence of action profiles $\actionvec(t'),\dots,\actionvec(t-1)$. If the sequence has at least two distinct profiles, $i$ chooses one of her top-$k_i$  actions in response to $\actionvec_{-i}(t-1)$, each with positive probability; otherwise, $i$ continues to play the action $\action_i(t')$. Finally, an agent $i$ who is not active at time $t$ plays $\action_i(t)=\action_i(t-1)$. Observe that if $k_i=1$ for all $i\in \agentset$, the above dynamic reduces to the standard best-response dynamic with memory and inertia \citep{young2004strategic}.

\begin{theorem}\label{prop:dynamic}
The top-$\mathbf k$ response dynamic converges to a $\mathbf{k}$-satisficing equilibrium almost surely in all games with a bounded number of actions, except for a fraction that is exponentially small in the number of agents. Formally, for any integer $M \geq 2$, there exist positive constants $c$ and $\bar{n}$ such that for any set of games $\G{n}{\anumvec}$ with $n \geq \bar{n}$ and $\anumvec \in \{2,\ldots,M\}^n$, and for any $\mathbf k \in \{2,3,\ldots\}^n$,
    \[
\bigg| \left\{ g \in \G{n}{\anumvec} : \;
  \vcenter{\hbox{\text{\shortstack[l]{\emph{the top-$\mathbf k$ response dynamic converges}\\
    \emph{a.s. to a $\mathbf k$-satisficing equilibrium of $g$}}}}}
    \; \right\} \bigg|
\;\geq\; (1 - e^{-cn}) \, |\, \G{n}{\anumvec} \,|.
\]
\end{theorem}

This result shows that a simple dynamic converges to satisficing equilibrium in most games. More sophisticated learning dynamics, either motivated by experimentally observed human behavior or optimized to be played by machines, may therefore also be expected to converge to (selections of) satisficing equilibria. Together, our results demonstrate that the emergence of satisficing equilibria is supported by simple epistemic conditions and natural adaptive behavior.

%
%

\section{Empirical content}\label{sec:characterization}

In this section, we characterize the empirical content of satisficing equilibrium by deriving a tractable revealed-preference test. We leverage the fact that $\mathbf k$-satisficing equilibrium, like Nash equilibrium, is a unilateral concept. This allows us to consider the decision-problem of each agent separately by fixing the actions of all other agents, thus reducing $\mathbf k$-satisficing equilibrium to each agent exhibiting $k_i$-satisficing behavior in the resulting decision problem. This feature allows us to show that $\mathbf k$-satisficing equilibrium is falsifiable and, in fact, that there exists a computationally efficient testing procedure.

We adapt the framework of \cite*{sprumont2000testable} for testing pure Nash equilibrium play and combine it with the test for single-agent $k_i$-satisficing behavior proposed by \citet*{Bar22}. Consider a finite set of agents $N$, assume that there is a ``universal'' set of actions $A_i$ for each agent $i \in N$, and let $\profileset := \times_{i \in N} A_i$. A \emph{game form} $\gf$ consists of the set of agents $N$ and an action set $\varnothing \subset B_i \subseteq A_i$ for each $i \in N$. A \emph{dataset} $\{(\gf^\z,\mathbf{b}^\z)\}_{\z=1}^\Z$ with $\Z$ observations consists of a set of pairs $(\gf^\z,\mathbf{b}^\z)$ where $\gf^\z$ is a game form  and $\mathbf{b}^\z \in \times_{i\in N} B_i^\z$ is the chosen action profile in that game form. A dataset is \emph{$\mathbf{k}$-rationalizable} if there exists a set of preference relations $\{\succsim_i\}_{i \in N}$ where $\succsim_i \subseteq \profileset \times \profileset$ for each $i$ such that for each $d$, $\mathbf{b}^\z$ is a $\mathbf{k}$-satisficing equilibrium of $\gf^\z$ under the preferences $\{\succsim_i\}_{i \in N}$.\footnote{This can straightforwardly be generalized to $\{\mathbf{k}^\z\}_{\z=1}^\Z$-rationalizability which, for each $i$, would allow the threshold $k_i^\z$ to vary across the game forms $\gf^\z$ in the dataset. This variation could, for example, capture the varying complexity of the game forms.} We can now state our result that $\mathbf k$-satisficing equilibrium is falsifiable.
 
\begin{theorem}\label{thm:testing} For any agent set $N$ and  $\mathbf k\in\{1,2,\ldots\}^{|N|}$, $\mathbf k$-satisficing equilibrium is falsifiable. In fact, \cref{alg:main} (presented in \cref{app:algoproof}) determines whether a dataset $\{(h^\z,\mathbf{b}^\z)\}_{\z=1}^\Z$ is $\mathbf{k}$-rationalizable, polynomially in $|N|$, $\Z$, and $\max_i |A_i|$.
\end{theorem}
\cref{thm:testing} establishes whether a given dataset is consistent with $\mathbf k$-satisficing equilibrium, for a given $\mathbf k$. The specific $\mathbf k$ could arise from a particular theory the analyst has in mind. Alternatively, the analyst may be interested in finding the smallest possible $\mathbf{k}$ for which a dataset is $\mathbf{k}$-rationalizable. The following corollary states that there also exists an efficient testing procedure for this case.

\begin{corollary}\label{cor:testing}
    \cref{alg:main} (presented in \cref{app:algoproof}) can be adapted to find the smallest $\mathbf{k}$ for which a dataset $\{(\gf^\z,\mathbf{b}^\z)\}_{\z=1}^\Z$ is $\mathbf{k}$-rationalizable, polynomially in $|N|$, $\Z$, and $\max_i |A_i|$.
\end{corollary}

Together, \cref{thm:testing,cor:testing} establish the empirical content of satisficing equilibrium. As discussed in \cref{sec:introduction}, satisficing behavior is empirically indistinguishable from a broad class of theories of boundedly-rational decision-making. In our extension to games, one similarly cannot establish for what reason---e.g., consideration sets versus choice from lists---a data set is $\mathbf k$-rationalizable, which reflects the fact that our reduced-form definition of satisficing equilibrium captures everything that is falsifiable in a revealed-preference sense.

Implementing the revealed-preference test in practice would require data that are, to our knowledge, not currently available. It would  be of interest to explore comparative statics, such as whether the minimal $\mathbf{k}$ that explains the observed play increases with the complexity of the game or decreases with the agents' sophistication.

An alternative empirical route would be to induce preferences directly, for instance, through monetary rewards in laboratory experiments. Existing experimental datasets, such as those of \citet{wright2017predicting} or \citet{fudenberg2019predicting}, have action spaces that are too small to systematically evaluate satisficing equilibrium. This and related limitations also hold for other datasets from the literature (see the discussion in  Supplementary Appendix IV.3). Thus, a new experimental study would be required to evaluate the theory quantitatively.

\section{Conclusion}\label{sec:conclusion}

We have introduced satisficing equilibrium, an ordinal relaxation of Nash equilibrium, in which each agent chooses one of her top $k_i$  actions in response to the other agents' actions. We have shown that this relaxation preserves much of the discipline that makes equilibrium analysis useful: in most games, satisficing equilibrium provides sharp predictions, exists under substantially weaker conditions than pure Nash equilibrium, admits natural epistemic and dynamic foundations, and yields empirically testable restrictions on data. In this sense, our article identifies a tractable middle ground between full optimization and unconstrained behavioral flexibility in games.

Several questions follow naturally. On the positive side, empirical work is needed to measure the $k_i$s that rationalize observed behavior and to study how they vary with game complexity and agent sophistication. When multiple satisficing equilibria exist, equilibrium selection also becomes a central concern, both for prediction and for welfare analysis.  On the normative side, relaxing optimization enlarges the equilibrium set and can improve or worsen welfare relative to Nash, suggesting that the relevant policy benchmark may change. We also view an endogenous theory of the parameters $k_i$ as an important avenue for future research. Similarly, an analysis of the mixed extension and resulting questions on beliefs and belief hierarchies may be of interest. Finally, satisficing equilibrium may offer a useful tool for mechanism and market design, potentially enabling the implementation of outcomes not attainable under other solution concepts.

\appendix

\section*{Appendix}

\cref{app:gamesmain} contains the omitted games mentioned in \cref{sec:otherconcepts} and \cref{app:proofs} contains all the proofs.

The Supplementary Appendix contains additional results, definitions, and further discussion of related literature (see \url{https://arxiv.org/abs/2409.00832}).

\section{Additional games}\label{app:gamesmain}

The bimatrix representations of all games discussed here and in the main text are in Supplementary Appendix III.

\subsection{The Halving game}\label{app:110game}

We introduce the Halving game, which is explicitly designed to contrast satisficing equilibrium and level-$\mathpzc k$ thinking.

\begin{quote}
\textbf{The Halving game.}
You and another player each request between 1 and 10 marbles, and normally each receives the amount she requests. 
There are two exceptions: 
if one player requests the largest integer weakly below half of the other player’s request, then the half-requesting player receives 11 marbles and the other receives 0; 
and if one player requests 1 while the other requests strictly more than 5, then the former receives 1  and the latter receives 4 marbles.
\end{quote}

We assume that the preferences of each agent are such that more marbles for oneself are strictly better. The game does not admit a $(1,1)$-satisficing equilibrium---that is, there is no pure Nash equilibrium. There is a unique $(2,2)$-satisficing equilibrium at action profile $(10,10)$, the $(3,3)$-satisficing equilibria, in addition, include $(10,9),(9,10),(9,9)$, and the $(4,4)$-satisficing equilibria, in addition, include $(10,8),(8,10),(9,8),(8,9),(8,8)$. See \cref{fig:Halving_concepts}(a).

\begin{figure}[ht]
\caption{The Halving game: predictions.}
\centering
\adjustbox{width=1.0\linewidth}{
\begin{tabular}{cccc}
 \begin{tikzpicture}[scale=0.6]
  \draw[thin,black] (0.5,-0.5) rectangle (10.5,-10.5);

\tikzmath{\del = 0.05; \eps1 = 2.5; \eps2 = 0; \eps3 = 2;}

\draw[very thick,draw=none, fill=HTML1,rounded corners = 0.5mm,opacity=1]
  (0.5 - \eps1*\del,  -0.5 + \eps1*\del)
    rectangle
  (3.5 + \eps1*\del,  -3.5 - \eps1*\del);

\draw[very thick,draw=none, fill=HTML4,rounded corners = 0.8mm,opacity=1]
  (0.5 + \eps2*\del,  -0.5 - \eps2*\del)
    rectangle
  (2.5 - \eps2*\del,  -2.5 + \eps2*\del);

\draw[very thick,draw=none, fill=HTML0,rounded corners = 1mm,opacity=1]
  (0.5 + \eps3*\del,  -0.5 - \eps3*\del)
    rectangle
  (1.5 - \eps3*\del,  -1.5 + \eps3*\del);

 \node[] at (5.5,-9) {\Large \textcolor{HTML0}{\textbf{2-SE}} $\subseteq$ \textcolor{HTML4}{\textbf{3-SE}} $\subseteq$ \textcolor{HTML1}{\textbf{4-SE}} };
  
  \foreach \i in {1,...,10} {
    \node[font=\small] at (0,-\i) {\the\numexpr11-\i\relax};
    \node[font=\small] at (\i,0) {\the\numexpr11-\i\relax};
  }
\end{tikzpicture} &  \begin{tikzpicture}[scale=0.6]

  \draw[thin,black] (0.5,-0.5) rectangle (10.5,-10.5);

  \draw[thin,black] (5.5,-5.5) rectangle (10.5,-10.5);

\begin{scope}
    \fill[pattern=north east lines, pattern color=gray]
      (5.5,-5.5) rectangle (10.5,-10.5) -- cycle;
  \end{scope}

  \foreach \i in {1,...,10} {
    \node[font=\small] at (0,-\i) {\the\numexpr11-\i\relax};
    \node[font=\small] at (\i,0) {\the\numexpr11-\i\relax};
  }

\node[] at (8,-7.5) {\Large Level-$\mathpzc{k}$};

\node[] at (8,-9) {\Large  {$\mathpzc{k}\geq \mathpzc{2}$ }};
  
\end{tikzpicture} 
 & \pgfplotstableread{
 1.0  1.0  0.0  1.0  0.0  1.0  1.0  1.0  1.0  1.0
 1.0  1.0  0.0  1.0  0.0  1.0  1.0  1.0  1.0  1.0
 0.0  0.0  0.0  0.0  0.0  0.0  0.0  0.0  0.0  0.0
 1.0  1.0  0.0  1.0  0.0  1.0  1.0  1.0  1.0  1.0
 0.0  0.0  0.0  0.0  0.0  0.0  0.0  0.0  0.0  0.0
 1.0  1.0  0.0  1.0  0.0  1.0  1.0  1.0  1.0  1.0
 1.0  1.0  0.0  1.0  0.0  1.0  1.0  1.0  1.0  1.0
 1.0  1.0  0.0  1.0  0.0  1.0  1.0  1.0  1.0  1.0
 1.0  1.0  0.0  1.0  0.0  1.0  1.0  1.0  1.0  1.0
 1.0  1.0  0.0  1.0  0.0  1.0  1.0  1.0  1.0  1.0
}\data

\begin{tikzpicture}[scale=0.6]

  \foreach \r in {0,...,9} {
    \foreach \c in {0,...,9} {
      \pgfplotstablegetelem{\r}{[index]\c}\of{\data}%
      \pgfmathsetmacro{\val}{\pgfplotsretval}%

      \pgfmathparse{\val>0.5}%
      \ifnum\pgfmathresult=1
        \node[
          pattern=north east lines,
          pattern color=gray,
          minimum size=6mm
        ] at ({\c + 1},{-\r -1}) {};
      \else
        \node[
          fill=white,
          minimum size=6mm
        ] at ({\c + 1},{-\r -1}) {};
      \fi
    }
  }

  \draw[thin,black] (0.5,-0.5) rectangle (10.5,-10.5);

  \draw[thin,black] (5.5,-5.5) rectangle (10.5,-10.5);
  \draw[thin,black] (5.5,-3.5) rectangle (10.5,-4.5);
  \draw[thin,black] (5.5,-0.5) rectangle (10.5,-2.5);

  \draw[thin,black] (3.5,-5.5) rectangle (4.5,-10.5);
  \draw[thin,black] (3.5,-3.5) rectangle (4.5,-4.5);
  \draw[thin,black] (3.5,-0.5) rectangle (4.5,-2.5);

  \draw[thin,black] (0.5,-5.5) rectangle (2.5,-10.5);
  \draw[thin,black] (0.5,-3.5) rectangle (2.5,-4.5);
  \draw[thin,black] (0.5,-0.5) rectangle (2.5,-2.5);
  
  \foreach \i in {1,...,10} {
    \node[font=\small] at (0,-\i) {\the\numexpr11-\i\relax};
    \node[font=\small] at (\i,0) {\the\numexpr11-\i\relax};
  }
\end{tikzpicture} & \pgfplotstableread{
 0.4444  0.1556  0.0  0.0  0.0  0.0667  0.0  0.0  0.0  0.0
 0.1556  0.0544  0.0  0.0  0.0  0.0233  0.0  0.0  0.0  0.0
 0.0     0.0     0.0  0.0  0.0  0.0     0.0  0.0  0.0  0.0
 0.0     0.0     0.0  0.0  0.0  0.0     0.0  0.0  0.0  0.0
 0.0     0.0     0.0  0.0  0.0  0.0     0.0  0.0  0.0  0.0
 0.0667  0.0233  0.0  0.0  0.0  0.01    0.0  0.0  0.0  0.0
 0.0     0.0     0.0  0.0  0.0  0.0     0.0  0.0  0.0  0.0
 0.0     0.0     0.0  0.0  0.0  0.0     0.0  0.0  0.0  0.0
 0.0     0.0     0.0  0.0  0.0  0.0     0.0  0.0  0.0  0.0
 0.0     0.0     0.0  0.0  0.0  0.0     0.0  0.0  0.0  0.0
}\data

\begin{tikzpicture}[scale=0.6]
  \def\minval{0.001}
  \def\maxval{0.25} 

  \foreach \r in {0,...,9} {
    \foreach \c in {0,...,9} {
      \pgfplotstablegetelem{\r}{[index]\c}\of{\data}%
      \pgfmathsetmacro{\val}{\pgfplotsretval}%

      \pgfmathsetmacro{\pct}{%
        ifthenelse(
          \val>\minval,
          (ln(\val + \minval)-ln(2*\minval)) / (ln(\maxval+\minval) -ln(2*\minval) ) * 100, 0
          )%
      }%

      \node[
        fill=gray!\pct!white,
        minimum size=6mm
      ] at ({\c + 1},{-\r -1}) {};
    }
  }

  \draw[thin,black] (0.5,-0.5) rectangle (10.5,-10.5);

  \foreach \i in {1,...,10} {
    \node[font=\small] at (0,-\i) {\the\numexpr11-\i\relax};
    \node[font=\small] at (\i,0) {\the\numexpr11-\i\relax};
  }
\end{tikzpicture}   \\
 (a) Satisficing equilibria & (b) Level-$\mathpzc k$ & (c) Rationalizable & (d) Symmetric Nash  \\
\pgfplotstableread{
 0.017   0.0158  0.0146  0.0136  0.0125  0.0126  0.0125  0.0115  0.0106  0.0098
 0.0158  0.0146  0.0135  0.0126  0.0116  0.0117  0.0116  0.0106  0.0099  0.0091
 0.0146  0.0135  0.0125  0.0116  0.0107  0.0108  0.0107  0.0098  0.0091  0.0084
 0.0136  0.0126  0.0116  0.0108  0.01    0.01    0.01    0.0091  0.0085  0.0078
 0.0125  0.0116  0.0107  0.01    0.0092  0.0093  0.0092  0.0084  0.0078  0.0072
 0.0126  0.0117  0.0108  0.01    0.0093  0.0093  0.0093  0.0085  0.0079  0.0072
 0.0125  0.0116  0.0107  0.01    0.0092  0.0093  0.0092  0.0084  0.0078  0.0072
 0.0115  0.0106  0.0098  0.0091  0.0084  0.0085  0.0084  0.0077  0.0072  0.0066
 0.0106  0.0099  0.0091  0.0085  0.0078  0.0079  0.0078  0.0072  0.0067  0.0061
 0.0098  0.0091  0.0084  0.0078  0.0072  0.0072  0.0072  0.0066  0.0061  0.0056
}\data

\begin{tikzpicture}[scale=0.6]
  \def\minval{0.001}
  \def\maxval{0.25} 

  \foreach \r in {0,...,9} {
    \foreach \c in {0,...,9} {
      \pgfplotstablegetelem{\r}{[index]\c}\of{\data}%
      \pgfmathsetmacro{\val}{\pgfplotsretval}%

      \pgfmathsetmacro{\pct}{%
        ifthenelse(
          \val>\minval,
          (ln(\val + \minval)-ln(2*\minval)) / (ln(\maxval+\minval) -ln(2*\minval) ) * 100, 0
          )%
      }%

      \node[
        fill=gray!\pct!white,
        minimum size=6mm
      ] at ({\c + 1},{-\r -1}) {};
    }
  }

  \draw[thin,black] (0.5,-0.5) rectangle (10.5,-10.5);

  \foreach \i in {1,...,10} {
    \node[font=\small] at (0,-\i) {\the\numexpr11-\i\relax};
    \node[font=\small] at (\i,0) {\the\numexpr11-\i\relax};
  }
\end{tikzpicture} & \pgfplotstableread{
 0.0657  0.0478  0.0334  0.0261  0.0179  0.0233  0.0208  0.0103  0.0075  0.0036
 0.0478  0.0347  0.0242  0.019   0.013   0.017   0.0151  0.0075  0.0054  0.0026
 0.0334  0.0242  0.0169  0.0133  0.0091  0.0118  0.0105  0.0052  0.0038  0.0018
 0.0261  0.019   0.0133  0.0104  0.0071  0.0093  0.0082  0.0041  0.003   0.0014
 0.0179  0.013   0.0091  0.0071  0.0049  0.0064  0.0057  0.0028  0.002   0.001
 0.0233  0.017   0.0118  0.0093  0.0064  0.0083  0.0074  0.0037  0.0026  0.0013
 0.0208  0.0151  0.0105  0.0082  0.0057  0.0074  0.0066  0.0033  0.0024  0.0011
 0.0103  0.0075  0.0052  0.0041  0.0028  0.0037  0.0033  0.0016  0.0012  0.0006
 0.0075  0.0054  0.0038  0.003   0.002   0.0026  0.0024  0.0012  0.0008  0.0004
 0.0036  0.0026  0.0018  0.0014  0.001   0.0013  0.0011  0.0006  0.0004  0.0002
}\data

\begin{tikzpicture}[scale=0.6]
  \def\minval{0.001}
  \def\maxval{0.25} 

  \foreach \r in {0,...,9} {
    \foreach \c in {0,...,9} {
      \pgfplotstablegetelem{\r}{[index]\c}\of{\data}%
      \pgfmathsetmacro{\val}{\pgfplotsretval}%

      \pgfmathsetmacro{\pct}{%
        ifthenelse(
          \val>\minval,
          (ln(\val + \minval)-ln(2*\minval)) / (ln(\maxval+\minval) -ln(2*\minval) ) * 100, 0
          )%
      }%

      \node[
        fill=gray!\pct!white,
        minimum size=6mm
      ] at ({\c + 1},{-\r -1}) {};
    }
  }

  \draw[thin,black] (0.5,-0.5) rectangle (10.5,-10.5);

  \foreach \i in {1,...,10} {
    \node[font=\small] at (0,-\i) {\the\numexpr11-\i\relax};
    \node[font=\small] at (\i,0) {\the\numexpr11-\i\relax};
  }
\end{tikzpicture}& \pgfplotstableread{
 0.1884  0.1214  0.044   0.0201  0.007   0.0381  0.0142  0.0005  0.0003  0.0
 0.1214  0.0783  0.0283  0.0129  0.0045  0.0245  0.0092  0.0003  0.0002  0.0
 0.044   0.0283  0.0103  0.0047  0.0016  0.0089  0.0033  0.0001  0.0001  0.0
 0.0201  0.0129  0.0047  0.0021  0.0007  0.0041  0.0015  0.0001  0.0     0.0
 0.007   0.0045  0.0016  0.0007  0.0003  0.0014  0.0005  0.0     0.0     0.0
 0.0381  0.0245  0.0089  0.0041  0.0014  0.0077  0.0029  0.0001  0.0001  0.0
 0.0142  0.0092  0.0033  0.0015  0.0005  0.0029  0.0011  0.0     0.0     0.0
 0.0005  0.0003  0.0001  0.0001  0.0     0.0001  0.0     0.0     0.0     0.0
 0.0003  0.0002  0.0001  0.0     0.0     0.0001  0.0     0.0     0.0     0.0
 0.0     0.0     0.0     0.0     0.0     0.0     0.0     0.0     0.0     0.0
}\data

\begin{tikzpicture}[scale=0.6]
  \def\minval{0.001}
  \def\maxval{0.25} 

  \foreach \r in {0,...,9} {
    \foreach \c in {0,...,9} {
      \pgfplotstablegetelem{\r}{[index]\c}\of{\data}%
      \pgfmathsetmacro{\val}{\pgfplotsretval}%

      \pgfmathsetmacro{\pct}{%
        ifthenelse(
          \val>\minval,
          (ln(\val + \minval)-ln(2*\minval)) / (ln(\maxval+\minval) -ln(2*\minval) ) * 100, 0
          )%
      }%

      \node[
        fill=gray!\pct!white,
        minimum size=6mm
      ] at ({\c + 1},{-\r -1}) {};
    }
  }

  \draw[thin,black] (0.5,-0.5) rectangle (10.5,-10.5);

  \foreach \i in {1,...,10} {
    \node[font=\small] at (0,-\i) {\the\numexpr11-\i\relax};
    \node[font=\small] at (\i,0) {\the\numexpr11-\i\relax};
  }
\end{tikzpicture} & \pgfplotstableread{
 0.0641  0.0247  0.0057  0.0014  0.0002  0.0472  0.0209  0.0011  0.0506  0.0372
 0.0247  0.0095  0.0022  0.0006  0.0001  0.0182  0.0081  0.0004  0.0195  0.0143
 0.0057  0.0022  0.0005  0.0001  0.0     0.0042  0.0019  0.0001  0.0045  0.0033
 0.0014  0.0006  0.0001  0.0     0.0     0.0011  0.0005  0.0     0.0011  0.0008
 0.0002  0.0001  0.0     0.0     0.0     0.0002  0.0001  0.0     0.0002  0.0001
 0.0472  0.0182  0.0042  0.0011  0.0002  0.0348  0.0154  0.0008  0.0373  0.0274
 0.0209  0.0081  0.0019  0.0005  0.0001  0.0154  0.0068  0.0004  0.0165  0.0121
 0.0011  0.0004  0.0001  0.0     0.0     0.0008  0.0004  0.0     0.0009  0.0006
 0.0506  0.0195  0.0045  0.0011  0.0002  0.0373  0.0165  0.0009  0.04    0.0294
 0.0372  0.0143  0.0033  0.0008  0.0001  0.0274  0.0121  0.0006  0.0294  0.0216
}\data

\begin{tikzpicture}[scale=0.6]
  \def\minval{0.001}
  \def\maxval{0.25} 

  \foreach \r in {0,...,9} {
    \foreach \c in {0,...,9} {
      \pgfplotstablegetelem{\r}{[index]\c}\of{\data}%
      \pgfmathsetmacro{\val}{\pgfplotsretval}%

      \pgfmathsetmacro{\pct}{%
        ifthenelse(
          \val>\minval,
          (ln(\val + \minval)-ln(2*\minval)) / (ln(\maxval+\minval) -ln(2*\minval) ) * 100, 0
          )%
      }%

      \node[
        fill=gray!\pct!white,
        minimum size=6mm
      ] at ({\c + 1},{-\r -1}) {};
    }
  }

  \draw[thin,black] (0.5,-0.5) rectangle (10.5,-10.5);

  \foreach \i in {1,...,10} {
    \node[font=\small] at (0,-\i) {\the\numexpr11-\i\relax};
    \node[font=\small] at (\i,0) {\the\numexpr11-\i\relax};
  }
\end{tikzpicture}   \\
(e) QRE ($\lambda = 0.1$) & (f) QRE ($\lambda = 0.4$)  & (g) QRE ($\lambda = 1$) & (h)  Sampling equilibrium 
\end{tabular}
}
\noindent\begin{minipage}{1\textwidth} 
\vspace{0.2cm}
\small{\textbf{Note.} 
Each panel plots agent 1's action on the vertical axis and agent 2's on the horizontal axis.  
Panel~(a) shows the $(k,k)$-satisficing equilibria, or $k$-SE, for $k\in\{1,2\}$ (note the nested structure). 
Panel~(b) shows the possible actions for level-$\mathpzc k$-thinkers with $\mathpzc k\geq \mathpzc 2$. 
Panel~(c) shows the set of rationalizable action profiles.  
Panels~(d)–(h) display predicted probabilities under the unique symmetric Nash equilibrium, three logit quantal-response equilibria (QRE) with rationality parameters $\lambda\in\{0.1,0.4,1\}$, and the unique sampling equilibrium. 
Darker shading indicates higher predicted probability; 
white denotes zero.
}
\end{minipage}
\label{fig:Halving_concepts}
\end{figure}

Turning to level-$\mathpzc k$ thinking, first consider the anchor $10$. Then, level-$\mathpzc 1$ thinking yields action $5$, level-$\mathpzc 2$ thinking  action $2$, and level-$\mathpzc 3$ thinking  action $1$. For higher levels of $\mathpzc k$ the best responses cycle through the actions $5$, $2$, and $1$. For a uniformly distributed anchor the picture is similar; however, the cycle $5,2,1$ is only reached from level-$\mathpzc 2$ thinking onward. In fact, it is easy to check that for \emph{any} level-$\mathpzc 0$ anchor, level-$\mathpzc k$ thinkers with $\mathpzc k\geq \mathpzc  2$ play action $5$ or below (see \cref{fig:Halving_concepts}(b)). In lab experiments, most subjects are often found to be level-$\mathpzc 2$ or level-$\mathpzc 3$ thinkers \citep*{arad201211,crawford2013structural}. Hence, the prediction of level-$\mathpzc k$ thinking is opposite to the prediction of $\mathbf k$-satisficing equilibrium for low $\mathbf k$.  For completeness, \cref{fig:Halving_concepts} includes the predictions of the equilibrium concepts considered in \cref{sec:otherconcepts}. A detailed explanation of the respective computations is given in Supplementary Appendix IV.1.\footnote{For quantal-response equilibrium, because the sensitivity of the predicted distributions of play vary by game, we show varying values of $\lambda$ for the different games.}

\subsection{The Traveler's Dilemma \citep{basu1994traveler}}\label{sec:TD}
The Traveler's Dilemma was introduced by \cite*{basu1994traveler} to challenge the presumption of (economic) rationality, without requiring sequential play as in previously proposed paradoxes such as the centipede game. 

\begin{quotation}
\noindent \textbf{The Traveler's Dilemma \citep[][not verbatim]{basu1994traveler}.} 
Two players each request an integer between \$2 and \$11. If both write the same number, each will receive that amount. But if they write different numbers,  both of them receive the lower number along with a bonus or penalty: the player who wrote the lower number will get \$2 more and the one who wrote the higher number will get \$2 less.
\end{quotation}

The Traveler's Dilemma was originally proposed with action sets for each agent  $\{2,3,\ldots, 100\}$ and has since been experimentally studied for various numbers of available actions. Our restriction to 10 actions per agent allows for a similar presentation to and a closer comparison with the other considered games.

\begin{figure}[ht]
\caption{The Traveler's Dilemma: predictions.}
\centering
\adjustbox{width=1.0\linewidth}{
\begin{tabular}{cccc}
  \begin{tikzpicture}[scale=0.6]
  \draw[thin,black] (0.5,-0.5) rectangle (10.5,-10.5);

 \draw[very thick,draw=none,fill=HTML0,rounded corners = 0.75mm,opacity=1, scale around={1.02:(5.5,-5.5)}]
    (0.5,-1.5)
    \foreach \i in {1,...,7}{
      -- ({\i+0.5}, {-\i-0.5})   
      -- ({\i+0.5}, {-\i-1.5})   
    }
    -- (7.5,-8.5)
    -- (8.5,-8.5)
    -- (8.5,-9.5)
    -- (7.5,-9.5)
    -- (7.5,-10.5)
    
    -- (10.5,-10.5)

    -- (10.5,-7.5)  
    -- (9.5,-7.5)
    -- (9.5,-8.5)
    -- (8.5,-8.5)
    \foreach \j in {9,...,2} {
      -- ({\j-0.5}, {-\j+0.5})
      -- ({\j-0.5}, {-\j+1.5})
    }
    -- (0.5,-0.5)
    -- cycle;

\draw[very thick,draw=none,fill=HTML6,rounded corners = 0.75mm,opacity=1, scale around={0.9:(10,-10)}] (9.5 ,-10.5) rectangle (10.5,-9.5);
  
 \node[] at (4,-9) {\Large \textcolor{HTML6}{{\textbf{1-SE}}} $\subseteq$ \textcolor{HTML0}{\textbf{2-SE}}};

  \foreach \i in {1,...,10} {
    \node[font=\small] at (0,-\i) {\the\numexpr12-\i\relax};
    \node[font=\small] at (\i,0) {\the\numexpr12-\i\relax};
  }
\end{tikzpicture} & \begin{tikzpicture}[scale=0.6]
  \draw[thin,black] (0.5,-0.5) rectangle (10.5,-10.5);
  \draw[thin,black] (9.5,-10.5) rectangle (10.5,-9.5);
    \begin{scope}
    \fill[pattern=north east lines, pattern color=gray]
      (9.5,-10.5) rectangle (10.5,-9.5) -- cycle;
  \end{scope}

  \foreach \i in {1,...,10} {
    \node[font=\small] at (0,-\i) {\the\numexpr12-\i\relax};
    \node[font=\small] at (\i,0) {\the\numexpr12-\i\relax};
  }
\end{tikzpicture} & \pgfplotstableread{
 0.0     0.0    0.0     0.0   0.0     0.0     0.0  0.0  0.0  0.0
 0.0     0.0    0.0     0.0   0.0     0.0     0.0  0.0  0.0  0.0
 0.0     0.0    0.0     0.0   0.0     0.0     0.0  0.0  0.0  0.0
 0.0     0.0    0.0     0.0   0.0     0.0     0.0  0.0  0.0  0.0
 0.0     0.0    0.0     0.0   0.0     0.0     0.0  0.0  0.0  0.0
 0.0     0.0    0.0     0.0   0.0     0.0     0.0  0.0  0.0  0.0
 0.0     0.0    0.0     0.0   0.0     0.0     0.0  0.0  0.0  0.0
 0.0     0.0    0.0     0.0   0.0     0.0     0.0  0.0  0.0  0.0
 0.0     0.0    0.0     0.0   0.0     0.0     0.0  0.0  0.0  0.0
 0.0     0.0    0.0     0.0   0.0     0.0     0.0  0.0  0.0  1.0
}\data

\begin{tikzpicture}[scale=0.6]
  \def\minval{0.001}
  \def\maxval{0.25}

  \foreach \r in {0,...,9} {
    \foreach \c in {0,...,9} {
      \pgfplotstablegetelem{\r}{[index]\c}\of{\data}%
      \pgfmathsetmacro{\val}{\pgfplotsretval}%

      \pgfmathsetmacro{\pct}{%
        ifthenelse(
          \val>\minval,
          (ln(\val + \minval)-ln(2*\minval)) / (ln(\maxval+\minval) -ln(2*\minval) ) * 100, 0
          )%
      }%

      \node[
        fill=gray!\pct!white,
        minimum size=6mm
      ] at ({\c + 1},{-\r -1}) {};
    }
  }

  \draw[thin,black] (0.5,-0.5) rectangle (10.5,-10.5);

  \foreach \i in {1,...,10} {
    \node[font=\small] at (0,-\i) {\the\numexpr12-\i\relax};
    \node[font=\small] at (\i,0) {\the\numexpr12-\i\relax};
  }
\end{tikzpicture}  &  \pgfplotstableread{
 0.0075  0.0086  0.0097  0.0106  0.011   0.0108  0.0097  0.0081  0.0062  0.0043
 0.0086  0.01    0.0112  0.0122  0.0127  0.0124  0.0112  0.0093  0.0071  0.005
 0.0097  0.0112  0.0126  0.0138  0.0143  0.014   0.0127  0.0105  0.008   0.0056
 0.0106  0.0122  0.0138  0.015   0.0156  0.0152  0.0138  0.0114  0.0087  0.0061
 0.011   0.0127  0.0143  0.0156  0.0162  0.0158  0.0143  0.0119  0.009   0.0064
 0.0108  0.0124  0.014   0.0152  0.0158  0.0155  0.014   0.0116  0.0088  0.0062
 0.0097  0.0112  0.0127  0.0138  0.0143  0.014   0.0127  0.0105  0.008   0.0056
 0.0081  0.0093  0.0105  0.0114  0.0119  0.0116  0.0105  0.0088  0.0067  0.0047
 0.0062  0.0071  0.008   0.0087  0.009   0.0088  0.008   0.0067  0.0051  0.0036
 0.0043  0.005   0.0056  0.0061  0.0064  0.0062  0.0056  0.0047  0.0036  0.0025
}\data

\begin{tikzpicture}[scale=0.6]
  \def\minval{0.001}
  \def\maxval{0.25}   

  \foreach \r in {0,...,9} {
    \foreach \c in {0,...,9} {
      \pgfplotstablegetelem{\r}{[index]\c}\of{\data}%
      \pgfmathsetmacro{\val}{\pgfplotsretval}%

      \pgfmathsetmacro{\pct}{%
        ifthenelse(
          \val>\minval,
          (ln(\val + \minval)-ln(2*\minval)) / (ln(\maxval+\minval) -ln(2*\minval) ) * 100, 0
          )%
      }%

      \node[
        fill=gray!\pct!white,
        minimum size=6mm
      ] at ({\c + 1},{-\r -1}) {};
    }
  }

  \draw[thin,black] (0.5,-0.5) rectangle (10.5,-10.5);

  \foreach \i in {1,...,10} {
    \node[font=\small] at (0,-\i) {\the\numexpr12-\i\relax};
    \node[font=\small] at (\i,0) {\the\numexpr12-\i\relax};
  }
\end{tikzpicture} \\
 (a) Satisficing equilibria & (b) Rationalizable & (c) Nash equilibrium & (d) QRE ($\lambda = 0.5$) \\
 \pgfplotstableread{
 0.0012  0.0015  0.0019  0.0025  0.0032  0.0041  0.0052  0.0064  0.006   0.0025
 0.0015  0.0019  0.0025  0.0032  0.0041  0.0052  0.0066  0.0082  0.0076  0.0032
 0.0019  0.0025  0.0032  0.0041  0.0052  0.0067  0.0085  0.0105  0.0098  0.0041
 0.0025  0.0032  0.0041  0.0052  0.0067  0.0085  0.0109  0.0135  0.0125  0.0052
 0.0032  0.0041  0.0052  0.0067  0.0085  0.0109  0.0139  0.0172  0.0161  0.0066
 0.0041  0.0052  0.0067  0.0085  0.0109  0.014   0.0178  0.022   0.0205  0.0085
 0.0052  0.0066  0.0085  0.0109  0.0139  0.0178  0.0227  0.028   0.0261  0.0108
 0.0064  0.0082  0.0105  0.0135  0.0172  0.022   0.028   0.0347  0.0323  0.0134
 0.006   0.0076  0.0098  0.0125  0.0161  0.0205  0.0261  0.0323  0.0302  0.0125
 0.0025  0.0032  0.0041  0.0052  0.0066  0.0085  0.0108  0.0134  0.0125  0.0052
}\data

\begin{tikzpicture}[scale=0.6]
  \def\minval{0.001}
  \def\maxval{0.25}   
  
  \foreach \r in {0,...,9} {
    \foreach \c in {0,...,9} {
      \pgfplotstablegetelem{\r}{[index]\c}\of{\data}%
      \pgfmathsetmacro{\val}{\pgfplotsretval}%

      \pgfmathsetmacro{\pct}{%
        ifthenelse(
          \val>\minval,
          (ln(\val + \minval)-ln(2*\minval)) / (ln(\maxval+\minval) -ln(2*\minval) ) * 100, 0
          )%
      }%

      \node[
        fill=gray!\pct!white,
        minimum size=6mm
      ] at ({\c + 1},{-\r -1}) {};
    }
  }

  \draw[thin,black] (0.5,-0.5) rectangle (10.5,-10.5);

  \foreach \i in {1,...,10} {
    \node[font=\small] at (0,-\i) {\the\numexpr12-\i\relax};
    \node[font=\small] at (\i,0) {\the\numexpr12-\i\relax};
  }
\end{tikzpicture}
 &  \pgfplotstableread{
 0.0     0.0     0.0     0.0     0.0     0.0     0.0     0.0     0.0     0.0003
 0.0     0.0     0.0     0.0     0.0     0.0     0.0     0.0     0.0     0.0003
 0.0     0.0     0.0     0.0     0.0     0.0     0.0     0.0     0.0     0.0003
 0.0     0.0     0.0     0.0     0.0     0.0     0.0     0.0     0.0     0.0003
 0.0     0.0     0.0     0.0     0.0     0.0     0.0     0.0     0.0     0.0003
 0.0     0.0     0.0     0.0     0.0     0.0     0.0     0.0     0.0     0.0003
 0.0     0.0     0.0     0.0     0.0     0.0     0.0     0.0     0.0     0.0003
 0.0     0.0     0.0     0.0     0.0     0.0     0.0     0.0     0.0     0.0003
 0.0     0.0     0.0     0.0     0.0     0.0     0.0     0.0     0.0     0.0003
 0.0003  0.0003  0.0003  0.0003  0.0003  0.0003  0.0003  0.0003  0.0003  0.9945
}\data

\begin{tikzpicture}[scale=0.6]
  \def\minval{0.001}
  \def\maxval{0.25} 

  \foreach \r in {0,...,9} {
    \foreach \c in {0,...,9} {
      \pgfplotstablegetelem{\r}{[index]\c}\of{\data}%
      \pgfmathsetmacro{\val}{\pgfplotsretval}%

      \pgfmathsetmacro{\pct}{%
        ifthenelse(
          \val>\minval,
          (ln(\val + \minval)-ln(2*\minval)) / (ln(\maxval+\minval) -ln(2*\minval) ) * 100, 0
          )%
      }%

      \node[
        fill=gray!\pct!white,
        minimum size=6mm
      ] at ({\c + 1},{-\r -1}) {};
    }
  }

  \draw[thin,black] (0.5,-0.5) rectangle (10.5,-10.5);

  \foreach \i in {1,...,10} {
    \node[font=\small] at (0,-\i) {\the\numexpr12-\i\relax};
    \node[font=\small] at (\i,0) {\the\numexpr12-\i\relax};
  }
\end{tikzpicture}&   \pgfplotstableread{
 0.0     0.0    0.0     0.0   0.0     0.0     0.0  0.0  0.0  0.0
 0.0     0.0    0.0     0.0   0.0     0.0     0.0  0.0  0.0  0.0
 0.0     0.0    0.0     0.0   0.0     0.0     0.0  0.0  0.0  0.0
 0.0     0.0    0.0     0.0   0.0     0.0     0.0  0.0  0.0  0.0
 0.0     0.0    0.0     0.0   0.0     0.0     0.0  0.0  0.0  0.0
 0.0     0.0    0.0     0.0   0.0     0.0     0.0  0.0  0.0  0.0
 0.0     0.0    0.0     0.0   0.0     0.0     0.0  0.0  0.0  0.0
 0.0     0.0    0.0     0.0   0.0     0.0     0.0  0.0  0.0  0.0
 0.0     0.0    0.0     0.0   0.0     0.0     0.0  0.0  0.0  0.0
 0.0     0.0    0.0     0.0   0.0     0.0     0.0  0.0  0.0  1.0
}\data

\begin{tikzpicture}[scale=0.6]
  \def\minval{0.001}
  \def\maxval{0.25} 

  \foreach \r in {0,...,9} {
    \foreach \c in {0,...,9} {
      \pgfplotstablegetelem{\r}{[index]\c}\of{\data}%
      \pgfmathsetmacro{\val}{\pgfplotsretval}%

      \pgfmathsetmacro{\pct}{%
        ifthenelse(
          \val>\minval,
          (ln(\val + \minval)-ln(2*\minval)) / (ln(\maxval+\minval) -ln(2*\minval) ) * 100, 0
          )%
      }%

      \node[
        fill=gray!\pct!white,
        minimum size=6mm
      ] at ({\c + 1},{-\r -1}) {};
    }
  }

  \draw[thin,black] (0.5,-0.5) rectangle (10.5,-10.5);

  \foreach \i in {1,...,10} {
    \node[font=\small] at (0,-\i) {\the\numexpr12-\i\relax};
    \node[font=\small] at (\i,0) {\the\numexpr12-\i\relax};
  }
\end{tikzpicture}&  \pgfplotstableread{
 0.0019  0.0053  0.0098  0.0131  0.0106  0.0026  0.0001  0.0  0.0  0.0
 0.0053  0.0148  0.0275  0.0368  0.0297  0.0073  0.0001  0.0  0.0  0.0
 0.0098  0.0275  0.0512  0.0685  0.0553  0.0137  0.0003  0.0  0.0  0.0
 0.0131  0.0368  0.0685  0.0917  0.0741  0.0183  0.0004  0.0  0.0  0.0
 0.0106  0.0297  0.0553  0.0741  0.0598  0.0148  0.0003  0.0  0.0  0.0
 0.0026  0.0073  0.0137  0.0183  0.0148  0.0036  0.0001  0.0  0.0  0.0
 0.0001  0.0001  0.0003  0.0004  0.0003  0.0001  0.0     0.0  0.0  0.0
 0.0     0.0     0.0     0.0     0.0     0.0     0.0     0.0  0.0  0.0
 0.0     0.0     0.0     0.0     0.0     0.0     0.0     0.0  0.0  0.0
 0.0     0.0     0.0     0.0     0.0     0.0     0.0     0.0  0.0  0.0
}\data

\begin{tikzpicture}[scale=0.6]
  \def\minval{0.001}
  \def\maxval{0.25}

  \foreach \r in {0,...,9} {
    \foreach \c in {0,...,9} {
      \pgfplotstablegetelem{\r}{[index]\c}\of{\data}%
      \pgfmathsetmacro{\val}{\pgfplotsretval}%

      \pgfmathsetmacro{\pct}{%
        ifthenelse(
          \val>\minval,
          (ln(\val + \minval)-ln(2*\minval)) / (ln(\maxval+\minval) -ln(2*\minval) ) * 100, 0
          )%
      }%

      \node[
        fill=gray!\pct!white,
        minimum size=6mm
      ] at ({\c + 1},{-\r -1}) {};
    }
  }

  \draw[thin,black] (0.5,-0.5) rectangle (10.5,-10.5);

  \foreach \i in {1,...,10} {
    \node[font=\small] at (0,-\i) {\the\numexpr12-\i\relax};
    \node[font=\small] at (\i,0) {\the\numexpr12-\i\relax};
  }
\end{tikzpicture}\\
 (e) QRE ($\lambda=2$) &(f) QRE ($\lambda = 4$) & (g) First sampling equilibrium & (h) Second sampling equilibrium
\end{tabular}
}

\noindent\begin{minipage}{1\textwidth} 
\vspace{0.2cm}
\small{\textbf{Note.} 
Each panel plots agent 1's actions on the vertical axis and agent 2's on the horizontal axis.  
Panel~(a) shows the $(k,k)$-satisficing equilibria, or $k$-SE, for $k\in\{1,2\}$ (note the nested structure). Panel~(b) shows the set of rationalizable action profiles.  
Panels~(c)–(h) display predicted probabilities under the unique symmetric Nash equilibrium, three logit quantal-response equilibria (QRE) with rationality parameters $\lambda\in\{0.5,2,4\}$, and the two sampling equilibria. 
Darker shading indicates higher predicted probability; 
white denotes zero.
}
\end{minipage}
\label{fig:TD_concepts}
\end{figure}

It is (implicitly) assumed that each agent’s preferences are such that more money for oneself is considered strictly better.
The game admits a unique $(1,1)$-satisficing equilibrium---that is, a unique pure Nash equilibrium---at action profile $(2,2)$. The $(2,2)$-satisficing equilibria consist of all the symmetric action profiles, that is, $(2,2),(3,3),\ldots,(11,11)$, as well as $(2,3)$, $(3,2)$, $(2,4)$, $(4,2)$. See \cref{fig:TD_concepts}(a). 

Experimental evidence for the Traveler's Dilemma suggests that test subjects tend to select large numbers \citep*[see, e.g.,][]{capra1999anomalous, goeree2001ten}. This points to the question of equilibrium selection: while satisficing equilibrium for $k = 2$ is consistent with empirical evidence in the Traveler's Dilemma, a theory of selection would be required to predict experimentally observed play. For completeness, \cref{fig:TD_concepts} includes the predictions of the equilibrium concepts considered in \cref{sec:otherconcepts}.

\section{Technical Appendix}\label{app:proofs}

Each section of the Technical Appendix contains the proofs and further results for its respective section in the main text.

We start by introducing several notions, which we reuse throughout the Technical Appendix. For any positive integer $n$ and any vector of positive integers $\anumvec = \V{m_1}{m_n}$, a \emph{game frame} is a tuple
\[
\gframe := \left( \;N, \, \{A_i\}_{i \in N} \;\right)
\]
consisting of a set of agents $N := \{1,\ldots,n\}$ and a set of actions $A_i:= \{1,\ldots,m_i\}$ for each agent $i \in N$. Moreover, let $\profileset:= \times_{i \in N} A_i$ and, for each $i \in N$, $\profileset_{-i} := \times_{j \in N \setminus \{i\}} A_j$. A game frame thus contains the structural elements of a game, absent preferences. For any game frame $\gframe$, $i \in N$, and $\actionvec_{-i} \in \profileset_{-i}$, define
\[
\ell_i(\actionvec_{-i}) := \{ (x,\actionvec_{-i}) \in \profileset : x \in \actionset \}
\]
as the $i$-\emph{line} with opponents' action profile $\actionvec_{-i}$. Observe that each $i \in N $ has $|\profileset_{-i}|$ $i$-lines, and each $i$-line has $m_i$ action profiles. Finally, recall that for any game $\g$ and any $\actionvec \in \profileset$, 
\[
\rank_i(\actionvec \,|\preferencerelation):= 1 + |\{x \in A_i : (x,\actionvec_{-i}) \succ_i \actionvec \}| .
\]
When the agents' preferences are clear from the context, we write $\rank_i(\actionvec)$.

\subsection{Appendix to \cref{sec:structural}}

For any positive integers $n$ and any vector of positive integers $\anumvec = (\anum_1,\dots,\anum_n)$,
\[
\rho(g) := 1- \frac{| \, \{ \actionvec \in \profileset : \actionvec   \text{ is a } \V{m_1-1}{m_n-1}\text{-satisficing equilibrium of } g\} \, |}{|\profileset|} 
\]
is the minimal precision of satisficing equilibrium in $g \in \G{n}{\anumvec}$.

\subsubsection{Proof of \cref{prop:preditiveprecision}}
For any $\G{n}{\anumvec}$ and any $g\in\G{n}{\anumvec}$, our aim is to show that 
$\rho(g) \geq \tfrac{1}{\min_i m_i}$, and to show that this bound is tight.

For any agent $i \in N$, define the set of action profiles in which $i$'s action 
is a worst response to the actions of the other agents: 
$W_i := \{ \actionvec \in \profileset : \rank_i(\actionvec) = m_i\}$. 
Since $i$ has a worst action in response to the actions of all other agents, we 
have $|W_i| = \tfrac{1}{m_i} |\profileset|$. The set of all action profiles at 
which some agent's action is a worst response to the actions of the other agents, 
$W := \cup_{i \in N} W_i$, therefore satisfies
\[
|W| \geq \max_{i \in N} |W_i| = \tfrac{1}{\min_i m_i} |\profileset|.
\]
Observing that $\rho(g) = \frac{|W|}{|\profileset|}$ completes the proof of the 
lower bound.

To show that the bound is tight, consider any game frame $\gframe$. We next 
construct a collection of preference relations $\{\succsim_i\}_{i\in N}$ with 
$\succsim_i \subseteq \profileset \times \profileset$ for each $i \in N$ such 
that, for the resulting game $\g$, we have $|W| = \tfrac{1}{\min_i m_i} 
|\profileset|$.

Take $i^\star \in \arg \min_i m_i$, fix any action $a_{i^\star}^0 \in 
A_{i^\star}$, and define
\[
S := \bigl\{ (a_{i^\star}^0,\, \actionvec_{-i^\star}) 
     : \actionvec_{-i^\star} \in \profileset_{-i^\star} \bigr\}.
\]
By construction, $S$ contains exactly one profile from each $i^\star$-line, so 
$|S| = \tfrac{1}{m_{i^\star}}|\profileset| = \tfrac{1}{\min_i m_i}|\profileset|$. 
Moreover, $S$ intersects every $j$-line for every $j \neq i^\star$: given any 
$j \neq i^\star$ and any $\actionvec_{-j} \in \profileset_{-j}$, the profile 
$(a_{i^\star}^0, a_j, \actionvec_{-\{j,i^\star\}})$ lies in $S$ for any choice 
of $a_j \in A_j$, and agrees with $\actionvec_{-j}$ on all coordinates other 
than $j$.

Construct $i^\star$'s preferences so that $W_{i^\star} = S$, giving 
$|W_{i^\star}| = \tfrac{1}{\min_i m_i}|\profileset|$. It now remains to 
construct sets $W_j$ for $j \neq i^\star$ such that $W_j \subseteq W_{i^\star}$. 
Since $S$ intersects every $j$-line, we can construct $W_j$ by selecting a 
single profile from $S \cap \ell_j(\actionvec_{-j})$ for each $\actionvec_{-j}$ 
(and construct $j$'s preferences so that the selected profile is the worst one 
for $j$ among those in $\ell_j(\actionvec_{-j})$). This ensures $W_j \subseteq 
W_{i^\star} = S$, and therefore $|W| = |W_{i^\star}| = \tfrac{1}{\min_i 
m_i}|\profileset|$.\hfill\qed

\subsubsection{Proof of \cref{thm:predictiveprecision}}
We use the probabilistic method for this proof.\footnote{\label{ftn:random}Drawing games at random has a long history in game theory; recent examples include the prevalence of dominance-solvability \citep*{alon2021dominance} and the convergence of best-response dynamics \citep*{amiet2021pure,Jon23}.} Let $G$ denote a game drawn uniformly at random from $\G{n}{\anumvec}$.

The probability that, in $G$, an action profile $\actionvec \in \profileset$ is a $\mathbf{k}$-satisficing equilibrium is given by $\prod_{i=1}^n \tfrac{k_i}{m_i}$. It follows that the expected number of $\mathbf{k}$-satisficing equilibria is
\[
|\profileset| \prod_{i=1}^n \tfrac{k_i}{m_i} = \prod_{i =1}^n  m_i \cdot \prod_{i=1}^n \tfrac{k_i}{m_i} = \prod_{i =1 }^n k_i .
\]
Therefore, setting $\mathbf{k} = \V{m_1-1}{m_n-1}$, we obtain 
\[
\mathbb{E}[1 - \rho(G)] = \prod_{i \in N} \tfrac{m_i-1}{m_i} = \prod_{i=1}^n (1 - \tfrac{1}{m_i}) .
\]
Applying Markov's inequality to the random variable $1-\rho(G)$, we have that for any $\epsilon > 0$,
\[
\Pr[\, 1- \rho(G) \geq \epsilon \,] \leq \frac{\mathbb{E}[1-\rho(G)]}{\epsilon} = \frac1\epsilon \prod_{i=1}^n (1 - \tfrac{1}{m_i}) .
\]
It follows that there exist positive constants $c$ and $\bar{n}$ such that for all $n \geq \bar{n}$,
\[
\Pr[\, \rho(G) \geq 1- \epsilon \,] \geq  1-\frac1\epsilon \prod_{i=1}^n (1 - \tfrac{1}{m_i}) \geq 1-e^{-cn} .
\]
Finally, since $G$ is drawn uniformly, observe that for any $\epsilon > 0$,
\[
\frac{| \, \{g \in \G{n}{\anumvec} : \rho(g) \geq 1-\epsilon\} \, |  }{| \, \G{n}{\anumvec} \, |} = \Pr[\, \rho(G) \geq 1-\epsilon \,] ,
\]
which completes the proof.\hfill \qed

\subsubsection{Proof of \cref{prop:nonexistence}}

We first state the following lemma; its proof is provided at the end of the section.

\begin{lemma}\label{lem:hall}
    For each $i\in N$ and $\actionvec_{-i} \in \profileset_{-i}$ there exists $\beta_i(\actionvec_{-i}) \subseteq \ell_i(\actionvec_{-i})$ with $|\beta_i(\actionvec_{-i})| \leq c_i:=\lceil \tfrac{m_i}{n} \rceil$ such that for each $\actionvec \in \profileset$ there is some $i \in N$ for which $\actionvec \in \beta_i(\actionvec_{-i})$.
\end{lemma}

\begin{proof}[Proof of \cref{prop:nonexistence}.]
For any $\gframe$ we construct a collection of preference relations $\{\succsim_i\}_{i\in N}$ with $\succsim_i \subseteq \profileset \times \profileset$ for each $i \in N$ such that the resulting game $\g$ does not have a $\mathbf k$-satisficing equilibrium for $\mathbf k\leq \V{m_1 - \lceil \tfrac{m_1}{n} \rceil}{m_n - \lceil \tfrac{m_n}{n} \rceil}$.

For each $i\in N$ and $\actionvec_{-i} \in \profileset_{-i}$ fix a set $\beta_i(\actionvec_{-i})$ as shown to exist in \cref{lem:hall}. We construct the preferences such that the actions in this set are agent $i$'s ``bad'' actions in response to $\actionvec_{-i}$. To this end, designate the action profiles in $\beta_i(\actionvec_{-i})$ to have the $|\beta_i(\actionvec_{-i})|$ highest ranks among the profiles in $\ell_i(\actionvec_{-i})$, and rank all profiles in $\ell_i(\actionvec_{-i})$ that are not in $\beta_i(\actionvec_{-i})$ as having lower ranks than any of those in $\beta_i(\actionvec_{-i})$. Any comparisons between action profiles that are in different $i$-lines can be fixed arbitrarily since only unilateral deviations matter for assessing whether a profile is a satisficing equilibrium.

A game $\g$ in which preferences are constructed as described above does not have a $\mathbf k$-satisficing equilibrium for $\mathbf k\leq \V{m_1 - \lceil \tfrac{m_1}{n} \rceil}{m_n - \lceil \tfrac{m_n}{n} \rceil}$: by construction, at every action profile $\actionvec$, there is some agent $i \in N$ for whom $\actionvec \in \beta_i(\actionvec_{-i})$ and therefore $\actionvec$ is among the $c_i$ worst responses to $\actionvec_{-i}$ for $i$.\end{proof}

\begin{proof}[Proof of \cref{lem:hall}]
    This is an application of \citeauthor{Hall35}'s Theorem (\citeyear{Hall35}). For each $i \in N$ and $\actionvec_{-i} \in \profileset_{-i}$, create $c_i$ clones of the line $\ell_i(\actionvec_{-i})$, denoted $\ell_i^j(\actionvec_{-i})$ for $j \in \{1,\ldots,c_i\}$. Define a bipartite graph $G = (L,R,E)$ where the ``left'' vertex set is the set of action profiles, $L= \profileset$, and the ``right'' vertex set $R$ consists of every clone for each line $\ell_i(\actionvec_{-i})$ for each $i \in N$ and $\actionvec_{-i} \in \profileset_{-i}$.\footnote{Observe therefore that $|L| = \prod_{i=1}^n m_i$ and $|R| = \sum_{i=1}^n c_i \prod_{j \neq i } m_j$.} The edge set is constructed as follows: connect each $\actionvec \in L$ to $\ell_i^j(\actionvec_{-i})$ for each $i \in N$ and $j \in \{1,\ldots,c_i\}$. Establishing the existence of a matching that covers $L$ is sufficient for our result because then we can set $\beta_i(\actionvec_{-i}) = \left\{ \actionvec \in \ell_i(\actionvec_{-i}) : \actionvec \text{ is matched to some clone of }\ell_i(\actionvec_{-i}) \right\}$. By \citeauthor{Hall35}'s Theorem (\citeyear{Hall35}), there is such a matching if and only if for every $U \subseteq L$, $|U| \leq |N_G(U)|$. 
    
    We now show that this condition holds in the graph $G$. For $U \subseteq L$, let $\Pi_i(U) :=\{\actionvec_{-i} \in \profileset_{-i} : \text{ there is some } \mathbf b \in U \text{ with } \mathbf{b}_{-i} = \actionvec_{-i}\}$, which extracts the set of distinct action profiles of agents other than $i$ from the action profiles in $U$. Then the neighborhood of $U$ is the disjoint union $N_G(U)=\bigsqcup_{i=1}^n \bigsqcup_{\actionvec_{-i}\in \Pi_i(U)} \{\ell_i^{j}(\actionvec_{-i}): j \in \{1,\dots,c_i\}\}$ and hence $| N_G(U) | = \sum_{i=1}^n c_i |\Pi_i(U)|$. Next, fix some $i \in N$. Since every $\actionvec \in U$ lies in exactly one $i$-line, we have that $ |U| = \sum_{\actionvec_{-i} \in \Pi_i(U)} | U \cap \ell_i(\actionvec_{-i})| $. But since $|\ell_i(\cdot)| = m_i$, we have that $\sum_{\actionvec_{-i} \in \Pi_i(U)} | U \cap \ell_i(\actionvec_{-i})| \leq m_i |\Pi_i(U)|$. Therefore
    \[
    |U| = \frac{1}{n} \sum_{i=1}^n \sum_{\actionvec_{-i} \in \Pi_i(U)} | U \cap \ell_i(\actionvec_{-i})|\leq \sum_{i=1}^n \tfrac{m_i}{n} |\Pi_i(U)| \leq \sum_{i=1}^n \lceil \tfrac{m_i}{n} \rceil |\Pi_i(U)| = | N_G(U) | ,
    \]
    as required.
\end{proof} 

\subsubsection{Tightness of \cref{prop:nonexistence}}\label{sec:nonexistencetight}

\begin{proposition}
For any set of games $\G{n}{\anumvec}$, every game $g\in \G{n}{\anumvec}$ has a $\mathbf k$-satisficing equilibrium for $\mathbf k\leq\V{m_1-\lceil \tfrac{m_1}{n}\rceil+1}{m_n-\lceil \tfrac{m_n}{n}\rceil+1}$.
\end{proposition}

\begin{proof}
Consider any game $g\in\G{n}{\anumvec}$ and, for any $i \in N$, define
\[
S_i := \{\actionvec \in \profileset : \rank_i(\actionvec) > m_i - \lceil \tfrac{m_i}{n} \rceil + 1\}.
\]
Observe that $|S_i| = (\lceil \tfrac{m_i}{n} \rceil - 1) \tfrac{1}{m_i} \prod_{j=1}^n m_j$. Since $(\lceil \tfrac{m_i}{n} \rceil-1)< \tfrac{m_i}{n}$ we have that $\sum_{i=1}^n |S_i| < \tfrac{1}{n}\sum_{i=1}^n \prod_{j=1}^n m_j = \prod_{j=1}^n m_j = |\profileset|$. Therefore, the union $\cup_{i=1 }^n S_i$ is a strict subset of $\profileset$. It follows that there exists an action profile $\actionvec^\star$ that satisfies $\rank_i(\actionvec^\star)\leq m_i-\lceil \frac{m_i}{n}\rceil+1$ for all $i$, i.e., it is a $\mathbf k$-satisficing equilibrium for $\mathbf k \leq \V{m_1-\lceil \frac{m_1}{n}\rceil+1}{m_n-\lceil \frac{m_n}{n}\rceil+1}$.
\end{proof}

\subsubsection{Proof of \cref{thm:one_is_enough}}\label{appendixB}

We start by presenting \cref{lem:general} below, from which we will derive \cref{thm:one_is_enough}. 

For any positive integer $n$, any vector of positive integers $\anumvec$, any $g \in \G{n}{\anumvec}$, any finite set $S \subseteq \{1,\ldots,n\}$, any integer $0 \leq d \leq |S|$, and any integer $k \geq 2$, let
\[
q(d,S,k,g) := \bigg| \left\{ \actionvec \in \profileset : \;
  \vcenter{\hbox{\text{\shortstack[l]{$d$ agents from $S$ are $k$-satisficed and\\
    all other agents are $1$-satisficed at $\actionvec$}}}}
    \; \right\} \bigg|,
\]
and for any positive integer $z$, let
\[
\gamma_{n,\anumvec}(d,S,k,z) := \frac{| \, \{g \in \G{n}{\anumvec} : q(d,S,k,g) \geq z \} \, |  }{| \, \G{n}{\anumvec} \, |}  .
\]
Therefore $\gamma_{n,\anumvec}(d,S,k,z)$ is the fraction of games in $\G{n}{\anumvec}$ that have at least $z$ action profiles in which $d$ agents from $S$ are $k$-satisficed and all other agents are $1$-satisficed. Finally, for any $z \geq 1$, we write $\Pr[\Poisson(\lambda) \geq z]$ for the probability that a $\Poisson(\lambda)$ random variable is no less than $z$. 

\begin{theorem}\label{lem:general}
For any $S \subseteq \{1,\dots,n\}$, $0 \leq d \leq |S|$, and integers $k\geq 2$ and $z \geq 1$, there exist positive constants $c$ and $\bar{n}$ such that for any set of games $\G{n}{\anumvec}$ with $n \geq \bar{n}$ and any $\anumvec \in \{2,3,\ldots\}^n$,
\[
\Big|\,  \gamma_{n,\anumvec}(d,S,k,z) - 
\Pr[\Poisson(\psi(d,S,k))\geq z]  \,\Big| \leq \frac{5nk^2 \max_i m_i}{2^{n}}\psi(d,S,k)^2,
\]
where
\[
\psi(d,S,k) :=  \sum_{T \subseteq S, 0\leq |T|\leq d} \, \prod_{i \in T} (\min\{k,\anum_i\}-1).
\]
\end{theorem}
If the upper bound in \cref{lem:general} is small, then $\gamma_{n,\anumvec}(d,S,k,z)$ can be approximated by the counter-cumulative distribution function $\Pr[X \geq z]$ of a Poisson random variable $X \sim \Poisson(\psi(d,S,k))$. The proof of \cref{lem:general} is in \cref{sec:bigproof}.
\begin{proof}[Proof of \cref{thm:one_is_enough}]
Consider \Cref{lem:general} for the case where $S=\{1,\dots,n\}$, $d=1$, and $k=2$. We have that $\psi(1,\{1,\dots,n\},2)=n+1$, and since $\Pr[\Poisson(n+1)\geq 1] = 1 - e^{-(n+1)}$, \cref{lem:general} gives us that 
\[
\Big|\,  \gamma_{n,\anumvec}(1,\{1,\ldots,n\},2,1) - 
\left(1 - e^{-(n+1)} \right)  \,\Big| \leq \frac{20n(n+1)^2 \max_i m_i}{2^{n}}.
\]
We can therefore find positive constants $c$ and $\bar{n}$ such that for all $n \geq \bar{n}$,
\[
 \gamma_{n,\anumvec}(1,\{1,\ldots,n\},2,1) \geq 1 - e^{-cn},
\]
which completes the proof. 
\end{proof}
In addition to \cref{thm:one_is_enough}, \cref{lem:general} also allows us to derive the corollary below regarding the distribution of satisficing equilibria, which is of independent interest. The result is a straightforward application of \cref{lem:general} so we state it without proof.
\begin{corollary}\label{cor:cor1}
 Consider a fixed set $S\subset \{1,\dots,n\}$ that does not grow with $n$. There exist  positive constants $c$ and $\bar{n}$ such that for all $n \geq \bar{n}$, any $\anumvec \in \{2,3\ldots,\}^n$, and any $z \geq 1$,
\[
\Bigg|\, \gamma_{n,\anumvec}(|S|,S,k,z) - \Pr\biggl[\Poisson \Bigl( \prod_{i \in S} \min\{k,m_i\} \Bigr)\geq z\biggr] \, \Bigg| \leq  e^{-cn}.
\]   
\end{corollary}
Observe that setting $S = \varnothing$ above implies that the fraction of games in $\G{n}{\anumvec}$ that have at least $z\geq 1$ pure Nash equilibria is asymptotically $\Pr[\Poisson(1)\geq z]$ when $n$ is large.\footnote{We adopt the standard convention that a product over an empty set is 1, i.e. $\prod_{i \in \varnothing} c_i = 1$ whatever $c_i$ may be.} This therefore gives the distribution of pure Nash equilibria when the number of agents is large, which was previously derived in \citet[Proposition 2.4]{Rin00}. 

Suppose, instead, that we set $k=2$ and $S = \{i\}$ for some fixed integer $i \in \{1,\ldots,n\}$. Then \cref{cor:cor1} implies that the fraction of games that admit an action profile in which specifically the fixed agent $i$ is $2$-satisficed is asymptotically $1 - 1/e^2 \approx 86\%$. This is in contrast to \cref{thm:one_is_enough}, where the agent that is $2$-satisficed is not fixed. 

\subsubsection{Proof of \cref{lem:general}}\label{sec:bigproof}

The proof employs the probabilistic method. Draw a game $G$ uniformly at random from $\G{n}{\anumvec}$. For any such game and for any $S \subseteq \{1,\dots,n\}$, and $0 \leq d \leq |S|$, and integer $k\geq 2$, define $\indicator{\actionvec}$ to be an indicator variable that takes the value 1 if, in $G$, $d$ agents from $S$ are $k$-satisficed at $\actionvec$ and all other agents are $1$-satisficed at $\actionvec$. It takes the value 0 otherwise. Moreover, let $N(d,S,k) := \sum_{\actionvec \in \profileset} \indicator{\actionvec}$ denote the number of action profiles in $G$ in which $d$ from $S$ agents are $k$-satisficed and all the other agents are $1$-satisficed. Our proof relies on the observation that
\[
\gamma_{n,\anumvec}(d,S,k,z) = \Pr[N(d,S,k) \geq z].
\]
For any $\actionvec \in \profileset$, we write the set of all lines going through $\actionvec$ as $\ell(\actionvec) := \cup_{i =1}^n \ell_i(\actionvec_{-i})$. Observe that $|\ell(\actionvec)|=1 + \sum_{i =1}^n (\anum_i - 1)$ and that, in our case, $\indicator{\actionvec} \independent \indicator{\actionvec'}$ if and only if $\actionvec' \not\in \ell(\actionvec)$. The Chen-Stein theorem then gives us that for any $S \subseteq \{1,\dots,n\}$, $0 \leq d \leq |S|$, and integer $k \geq 2$,
\begin{align*}
&\; \sup_{z \in \{1,2,\ldots\}}\big|\, \Pr\left[ N(d,S,k) \geq z \right] 
 - \Pr[\Poisson(\psi(d,S,k))\geq z] \,\big| \\
=&\; \sup_{z \in \{1,2,\ldots\}}\big|\, \gamma_{n,\anumvec}(d,S,k,z) - \Pr[\Poisson(\psi(d,S,k))\geq z] \,\big| \\
 \leq&\; c_1(d,S,k) + c_2(d,S,k),    
\end{align*}
where 
\begin{equation*}
c_1(d,S,k) := \sum_{\actionvec \in \profileset} \sum_{\actionvec' \in \ell(\actionvec)} \mathbb{E}[\indicator{\actionvec}] \, \mathbb{E}[\indicator{\actionvec'}], \text{ and }c_2(d,S,k) := \sum_{\actionvec \in \profileset} \sum_{\actionvec' \in \ell(\actionvec) \setminus \{\actionvec\}} \mathbb{E}[\indicator{\actionvec} \, \indicator{\actionvec'}] .
\end{equation*}
The remainder of the proof consists of deriving an explicit upper bound for $c_1(d,S,k) + c_2(d,S,k)$, which we do in several steps (\Cref{lem:expectation} to \Cref{lem:c2}).

\begin{remark}[The Chen-Stein method in games]
The Chen-Stein theorem provides a method for approximating the distribution of the sum of Bernoulli random variables with a Poisson distribution, provided the Bernoulli variables have limited dependence. See \citet*{arratia1989two} for a discussion of the Chen-Stein method and a proof. The method has previously been employed in the study of games by, for example, \cite*{powers1990limiting} who studies the distribution of pure Nash equilibria in random games in the presence of ties; \cite*{Rin00} who studies the distribution of pure Nash equilibria in random games with correlated payoffs; and \cite*{stanford2004individually} who proves the non-emptiness of the set of feasible and individually rational payoffs in games with many actions per agent. In fact, to prove the latter, \cite*{stanford2004individually} derives as an intermediate step the prevalence of action profiles in which each agent plays a top $k$ action in response to the actions of the other agents, i.e., $\V{k}{k}$-satisficing equilibria,  as the numbers of actions diverge. The latter can be shown to follow as a corollary of our \cref{lem:general}.
\end{remark}

In what follows for any non-empty set $U \subseteq \{1,\dots,n\}$, any (possibly empty) $S \subseteq U$, any integer $0 \leq d \leq |S|$ and integer $k \geq 2$, it will be useful to define
\[
\zeta(U,d,S,k):= \left( \prod_{i \in U} \frac{1}{\anum_i} \right) \psi(d,S,k) =\left( \prod_{i \in U} \frac{1}{\anum_i} \right) \left( \sum_{T \subseteq S, 0 \leq |T|\leq d} \, \prod_{i \in T} (\min\{k,\anum_i\}-1) \right) .
\]
\begin{remark}
The quantity $\zeta(U,d,S,k)$ is the probability, at any given action profile, that $d$ agents from $S$ are $k$-satisficed and all  agents in $U$ who are not these $d$ agents are $1$-satisficed. The reason is the following: for each subset $T \subseteq S$ such that $0 \leq |T|\leq d$, each agent $i \in T$ is $k$-satisficed (but \emph{not} $1$-satisficed) with probability $\frac{\min\{k,m_i\}-1}{m_i}$; all agents in $U \setminus T$ are $1$-satisficed with probability $\prod_{i \in U \setminus T} \frac{1}{m_i}$. Putting this together yields the expression above.
\end{remark}

\begin{lemma}\label{lem:expectation} For any integer $S \subseteq \{1,\dots,n\}$, $0 \leq d \leq |S|$, and integer $k \geq 2$,
\[
\mathbb{E}[\indicator{\actionvec}] =\zeta(\{1,\dots,n\},d,S,k) ,
\]
and moreover,
\[
 \mathbb{E}[N(d,S,k)]=\sum_{\actionvec \in \profileset} \mathbb{E}[\indicator{\actionvec}] = \psi(d,S,k).
\]
\end{lemma}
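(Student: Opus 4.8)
The plan is to compute $\mathbb{E}[\indicator{\actionvec}]$ for a fixed action profile $\actionvec$ directly from the random-game sampling scheme described above, and then obtain $\mathbb{E}[N(d,S,k)]$ by linearity of expectation. First I would unpack the event $\{\indicator{\actionvec}=1\}$: letting $B(\actionvec) \subseteq \agentset$ denote the set of agents who are \emph{not} $1$-satisficed at $\actionvec$, the indicator equals $1$ exactly when $B(\actionvec) \subseteq S$, $|B(\actionvec)| \leq d$, and every agent in $B(\actionvec)$ is nonetheless $k$-satisficed at $\actionvec$ (a $1$-satisficed agent is automatically $k$-satisficed, so ``$d$ agents from $S$ are $k$-satisficed and the rest are $1$-satisficed'' amounts to exactly this). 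Partitioning on the value of $B(\actionvec)$, the event $\{\indicator{\actionvec}=1\}$ is the disjoint union, over sets $T \subseteq S$ with $0 \leq |T| \leq d$, of the event that each $i \in T$ has $2 \leq \rank_i(\actionvec \,|\succeq_i) \leq \min\{k,\anum_i\}$ while each $j \in \agentset \setminus T$ has $\rank_j(\actionvec \,|\succeq_j) = 1$.

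Second I would evaluate the probability of each such event. Under the sampling scheme, for each agent $i$ the strict order on the line $L^i_\actionvec$ is drawn uniformly at random, so the rank of $\action_i$ within $L^i_\actionvec$ is uniform on $\{1,\dots,\anum_i\}$; hence agent $i$ is $1$-satisficed with probability $1/\anum_i$ and is $k$-satisficed but not $1$-satisficed with probability $(\min\{k,\anum_i\}-1)/\anum_i$. The key observation is that for $i \neq j$ the lines $L^i_\actionvec$ and $L^j_\actionvec$ are distinct and their orders are drawn independently, so the ranks $\big(\rank_i(\actionvec \,|\succeq_i)\big)_{i=1}^n$ are mutually independent. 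Therefore
\[
\mathbb{E}[\indicator{\actionvec}] = \Pr[\indicator{\actionvec}=1] = \sum_{\substack{T \subseteq S \\ 0 \leq |T| \leq d}} \left( \prod_{i \in T} \frac{\min\{k,\anum_i\}-1}{\anum_i} \right) \left( \prod_{j \in \agentset \setminus T} \frac{1}{\anum_j} \right).
\]
Pulling the common factor $\prod_{i=1}^n \anum_i^{-1}$ out of each summand replaces the product over $T$ by $\prod_{i \in T}(\min\{k,\anum_i\}-1)$, and summing over $T$ gives $\big(\prod_{i=1}^n \anum_i^{-1}\big)\,\psi(d,S,k) = \zeta(\{1,\dots,n\},d,S,k)$, which is the first claim.

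For the second claim, linearity of expectation yields $\mathbb{E}[N(d,S,k)] = \sum_{\actionvec \in \profileset} \mathbb{E}[\indicator{\actionvec}] = |\profileset| \cdot \zeta(\{1,\dots,n\},d,S,k)$, and since $|\profileset| = \prod_{i=1}^n \anum_i$ this collapses to $\psi(d,S,k)$. I do not anticipate a genuine obstacle here; the only step requiring care is justifying the independence of the ranks across agents, which rests on the fact that $L^i_\actionvec$ and $L^j_\actionvec$ are different lines for $i \neq j$ and hence receive independently drawn orders under the sampling procedure — precisely the property for which that procedure was designed.
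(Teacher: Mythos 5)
Your proposal is correct and follows essentially the same route as the paper: the paper also reduces $\mathbb{E}[\indicator{\actionvec}]$ to $\Pr[\indicator{\actionvec}=1]$ and identifies this with $\zeta(\{1,\dots,n\},d,S,k)$ via the same partition over subsets $T\subseteq S$ with $|T|\leq d$ (stated in the Remark preceding the lemma), then concludes by linearity and $|\profileset|=\prod_{i=1}^n \anum_i$. Your write-up merely makes explicit the independence of the ranks across agents, which the paper leaves implicit in its description of the sampling scheme.
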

\begin{proof}
Clearly, $\mathbb{E}[\indicator{\actionvec}] = \Pr[\indicator{\actionvec} = 1]$. But the latter is equal to $\zeta(\{1,\dots,n\},d,S,k)$ by definition of $\indicator{\actionvec}$ and $\zeta$. The expectation of $N(d,S,k)$ then follows from the fact that $|\profileset| = \prod_{i=1}^n m_i$.
\end{proof}

\begin{lemma}\label{lem:c1} For any $S \subseteq \{1,\dots,n\}$, $0 \leq d \leq |S|$, and integer $k \geq 2$,
    \[
    c_1(d,S,k) = \frac{1+\sum_{i =1}^n (\anum_i-1)}{\prod_{i=1}^n \anum_i} \cdot \psi(d,S,k)^2 \leq   \frac{n \max_{i \in \{1,\dots,n\}} \anum_i}{\prod_{i = 1}^n \anum_i} \psi(d,S,k)^2 .
    \]
\end{lemma}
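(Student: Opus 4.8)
The plan is to exploit the fact, established in \cref{lem:expectation}, that $\mathbb{E}[\indicator{\actionvec}]$ does not depend on the action profile $\actionvec$: it is the constant $p := \zeta(\{1,\dots,n\},d,S,k) = \psi(d,S,k)/\prod_{i=1}^n \anum_i$. Since, in addition, the cardinality $|L_\actionvec|$ of the set of lines through $\actionvec$ is the same for every $\actionvec$, the double sum defining $c_1(d,S,k)$ immediately collapses to $p^2 \cdot |\profileset| \cdot |L_\actionvec|$, and the whole lemma becomes a short computation followed by a one-line estimate.

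First I would compute $|L_\actionvec|$. Here $L_\actionvec = \bigcup_{i=1}^n L_\actionvec^i$ is a union of $n$ lines, with $|L_\actionvec^i| = \anum_i$, and any two distinct lines through $\actionvec$ meet only at $\actionvec$ itself: a profile in $L_\actionvec^i$ differs from $\actionvec$ (if at all) only in coordinate $i$, while a profile in $L_\actionvec^j$ with $j \neq i$ differs from $\actionvec$ only in coordinate $j$, so a common element must coincide with $\actionvec$. Hence $|L_\actionvec| = \sum_{i=1}^n \anum_i - (n-1) = 1 + \sum_{i=1}^n (\anum_i - 1)$, independent of $\actionvec$, as already noted in the text preceding the lemma.

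Next, substituting $p = \psi(d,S,k)/\prod_{i=1}^n \anum_i$ and $|\profileset| = \prod_{i=1}^n \anum_i$ into $c_1(d,S,k) = p^2\,|\profileset|\,|L_\actionvec|$ gives
\[
c_1(d,S,k) = \frac{\psi(d,S,k)^2}{\prod_{i=1}^n \anum_i}\left(1 + \sum_{i=1}^n (\anum_i-1)\right),
\]
which is precisely the claimed identity. For the inequality, I would bound $1 + \sum_{i=1}^n (\anum_i - 1) = (1-n) + \sum_{i=1}^n \anum_i \leq \sum_{i=1}^n \anum_i \leq n \max_{i \in \{1,\dots,n\}} \anum_i$, using $n \geq 1$ in the first step, and then divide through by $\prod_{i=1}^n \anum_i$ and multiply by $\psi(d,S,k)^2$.

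There is no substantive obstacle here: the only step that warrants an explicit sentence is the count $|L_\actionvec| = 1 + \sum_{i=1}^n(\anum_i - 1)$, i.e. the observation that distinct lines through a common profile overlap only in that profile; everything else is direct substitution using \cref{lem:expectation} and a single elementary estimate.
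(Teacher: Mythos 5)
Your proposal is correct and follows essentially the same route as the paper: both collapse the double sum to $|\profileset|\cdot|L_\actionvec|\cdot\mathbb{E}[\indicator{\actionvec}]^2$ using the constancy of $\mathbb{E}[\indicator{\actionvec}]$ from \cref{lem:expectation}, substitute $|L_\actionvec|=1+\sum_{i=1}^n(\anum_i-1)$, and finish with the elementary bound $1+\sum_{i=1}^n(\anum_i-1)\leq n\max_i \anum_i$. The only difference is that you spell out the line-intersection count explicitly, which the paper states without proof just before the lemma.
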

\begin{proof}
\begin{align*}
c_1(d,S,k) &= \sum_{\actionvec \in \profileset} \sum_{\actionvec' \in \ell(\actionvec)} \mathbb{E}[\indicator{\actionvec}] \, \mathbb{E}[\indicator{\actionvec'}] = |\profileset| \cdot |\ell(\actionvec)| \cdot \mathbb{E}[\indicator{\actionvec}]^2 
\end{align*}
Substituting for these values (using \Cref{lem:expectation}) completes the proof.
\end{proof}

\begin{lemma}\label{lem:cov}
Consider any $S \subseteq \{1,\dots,n\}$, $0 \leq d \leq |S|$, and integer $k \geq 2$. For any distinct $\actionvec$ and $\actionvec'$ that differ only in the $i\text{th}$ index,
\[
\mathbb{E}[\indicator{\actionvec} \indicator{\actionvec'}] \leq 4 \frac{(\min\{k,m_i\}-1)^2}{\prod_{i=1}^n m_i^2} \psi(d,S,k)^2
\]
\end{lemma}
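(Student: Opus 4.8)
I would expand $\indicator{\actionvec}\indicator{\actionvec'}$ over the possible ``types'' of satisficing profile, factor each resulting expectation using the product structure of the random-game sampling scheme, and bound by a direct counting argument the single correlated factor coming from the line that $\actionvec$ and $\actionvec'$ share. Concretely, recall (consistently with \Cref{lem:expectation}) that $\indicator{\actionvec}=\sum_{T\subseteq S,\,0\le|T|\le d}\indicator{\actionvec}^{T}$, where $\indicator{\actionvec}^{T}$ is the indicator that the set of agents who fail to be $1$-satisficed at $\actionvec$ is exactly $T$ and every agent in $T$ is $k$-satisficed at $\actionvec$; these events are mutually exclusive, so $\mathbb{E}[\indicator{\actionvec}\indicator{\actionvec'}]=\sum_{T,T'}\mathbb{E}[\indicator{\actionvec}^{T}\indicator{\actionvec'}^{T'}]$, the sum ranging over $T,T'\subseteq S$ with $|T|,|T'|\le d$.

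Next I would exploit the independence structure. The event $\indicator{\actionvec}^{T}=1$ depends only on the ranks of $\actionvec$ within the $n$ lines $L_\actionvec^1,\dots,L_\actionvec^n$, which carry mutually independent uniform orders in the sampling scheme. Since $\actionvec$ and $\actionvec'$ differ only in coordinate $i$, we have $L_\actionvec^i=L_{\actionvec'}^i$ (one shared line), while for each $j\ne i$ the base points $\actionvec_{-j}$ and $\actionvec'_{-j}$ differ, so $L_\actionvec^j\ne L_{\actionvec'}^j$ — and distinct lines carry independent orders, whether or not they belong to the same agent. Conditioning on the order on the shared line thus factors the expectation: $\mathbb{E}[\indicator{\actionvec}^{T}\indicator{\actionvec'}^{T'}]=\big(\prod_{j\ne i}\pi_j(T)\,\pi_j(T')\big)\,\rho_i(T,T')$, where $\pi_j(T)=(\min\{k,m_j\}-1)/m_j$ if $j\in T$ and $\pi_j(T)=1/m_j$ otherwise is the single-line marginal probability that agent $j$ has the rank required by $T$, and $\rho_i(T,T')$ is the probability that $\actionvec$ and $\actionvec'$ simultaneously have their required ranks within $L_\actionvec^i$.

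The key estimate is then on $\rho_i(T,T')$, obtained by counting linear orders of the $m_i$ elements of $L_\actionvec^i$. If $i\notin T$ and $i\notin T'$, both profiles would need rank $1$ within the same order, which is impossible, so $\rho_i(T,T')=0$; in particular the term $T=T'=\varnothing$ vanishes, which also disposes of the case $k=1$ (where every term vanishes). Otherwise, counting the distinct positions available to $\actionvec$ and $\actionvec'$ (one in $\{1\}$, the other in $\{2,\dots,\min\{k,m_i\}\}$, or both in the latter set) yields $\rho_i(T,T')\le (\min\{k,m_i\}-1)^2/\big(m_i(m_i-1)\big)\le 2(\min\{k,m_i\}-1)^2/m_i^2$, using $m_i\ge2$. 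Moreover, whenever $\rho_i(T,T')\ne0$ we have $i\in T\cup T'$, hence $\pi_i(T)\,\pi_i(T')\ge 1/m_i^2$, so $\prod_{j\ne i}\pi_j(T)\pi_j(T')=\big(\prod_j\pi_j(T)\prod_j\pi_j(T')\big)/\big(\pi_i(T)\pi_i(T')\big)\le m_i^2\prod_j\pi_j(T)\prod_j\pi_j(T')$. Combining, every term obeys $\mathbb{E}[\indicator{\actionvec}^{T}\indicator{\actionvec'}^{T'}]\le 2(\min\{k,m_i\}-1)^2\prod_j\pi_j(T)\prod_j\pi_j(T')$, and summing over $T,T'$ gives $\mathbb{E}[\indicator{\actionvec}\indicator{\actionvec'}]\le 2(\min\{k,m_i\}-1)^2\big(\sum_T\prod_j\pi_j(T)\big)^2=2(\min\{k,m_i\}-1)^2\,\zeta(\{1,\dots,n\},d,S,k)^2=2(\min\{k,m_i\}-1)^2\,\psi(d,S,k)^2/\prod_j m_j^2$, which is within the stated bound (with a factor $2$ to spare).

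The main obstacle is the bookkeeping in the middle step: correctly identifying which of the $2n$ relevant lines coincide and which are independent under the construction, and noticing that the shared line for agent $i$ forces $\rho_i$ to be either $0$ or of order $(\min\{k,m_i\}-1)^2/m_i^2$. The squared factor $(\min\{k,m_i\}-1)^2$ in the statement — rather than just $(\min\{k,m_i\}-1)$ — is precisely what comes out of absorbing the two $\pi_i$ factors into this agent-$i$ term; everything else is elementary counting.
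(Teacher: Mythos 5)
Your proof is correct and follows essentially the same route as the paper's: both isolate the single shared line $L_{\actionvec}^i = L_{\actionvec'}^i$ as the only source of dependence, observe that $\actionvec$ and $\actionvec'$ cannot both be ranked first there, count the admissible joint rank positions on that line, and factor the remaining lines by independence. Your bookkeeping (summing over pairs $(T,T')$ rather than grouping into the paper's three cases weighted by $\zeta(\cdot,d,\cdot)$ and $\zeta(\cdot,d-1,\cdot)$) differs only cosmetically and even yields a slightly sharper constant.
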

\begin{proof}
Consider distinct $\actionvec$ and $\actionvec'$ that differ only in the $i\text{th}$ index. For any $S \subseteq \{1,\dots,n\}$, write $S_{-i}:= S \setminus \{i\}$. 

If $i \not\in S$ or if $d=0$, $\indicator{\actionvec} \indicator{\actionvec'}=0$ because $i$ would need to be $1$-satisficed at both $\actionvec$ and $\actionvec'$ which is impossible. Assume therefore that $i \in S$ and $d \geq 1$. The event $\indicator{\actionvec} \indicator{\actionvec'}=1$ occurs precisely when: 
\begin{enumerate}[(i),leftmargin=0cm]
    \item $i$ plays a best response at $\actionvec$ and plays a top $k'$ action at $\actionvec'$ for $k' \in \{2,\dots,k\}$, which occurs with probability $\frac{1}{m_i} \frac{\min\{k,m_i\}-1}{m_i-1}$, \emph{and} $d$ agents from $S_{-i}$ are $k$-satisficed at $\actionvec$ and every agent in $\{1,\dots,n\}_{-i}\setminus S_{-i}$ is $1$-satisficed at $\actionvec$, which occurs with probability $\zeta(\{1,\dots,n\}_{-i},d,S_{-i},k)$, \emph{and} $d-1$ agents from $S_{-i}$ are $k$-satisficed at $\actionvec'$ and every agent in $\{1,\dots,n\}_{-i}\setminus S_{-i}$ is $1$-satisficed at $\actionvec'$, which occurs with probability $\zeta(\{1,\dots,n\}_{-i},d-1,S_{-i},k)$;
    \item or, point (i) occurs with $\actionvec$ and $\actionvec'$ swapped;
    \item or, $i$ plays a top $k'$ action at $\actionvec$ for $k' \in \{2,\dots,k\}$ and plays a top $k''$ action at $\actionvec'$ for $k'' \in \{2,\dots,k\}$, which occurs with probability $2 \frac{\min\{k,m_i\} - 1}{m_i} \frac{\min\{k,m_i\} - 2}{m_i-1}$, \emph{and} $d-1$ agents from $S_{-i}$ are $k$-satisficed at both $\actionvec$ and $\actionvec'$ and every agent in $\{1,\dots,n\}_{-i}\setminus S_{-i}$ is $1$-satisficed at both $\actionvec$ and $\actionvec'$, which occurs with probability $\zeta(\{1,\dots,n\}_{-i},d-1,S_{-i},k)^2$.
\end{enumerate}
Putting the above together, we have that
\begingroup
\allowdisplaybreaks
\begin{align*}
    \mathbb{E}[\indicator{\actionvec} \indicator{\actionvec'}] =& 2 \frac{1}{m_i} \frac{\min\{k,m_i\}-1}{m_i-1} \zeta(\{1,\dots,n\}_{-i},d,S_{-i},k) \zeta(\{1,\dots,n\}_{-i},d-1,S_{-i},k)\\
    &+ 2 \frac{\min\{k,m_i\} - 1}{m_i} \frac{\min\{k,m_i\} - 2}{m_i-1} \zeta(\{1,\dots,n\}_{-i},d-1,S_{-i},k)^2 \\
    \leq&  2 \frac{(\min\{k,m_i\}-1)^2}{m_i(m_i-1)} \zeta(\{1,\dots,n\}_{-i},d,S,k)^2 \\
    =&  2 \frac{m_i}{m_i-1}(\min\{k,m_i\}-1)^2 \left( \prod_{i=1}^n \frac{1}{m_i}\right)^2 \psi(d,S,k)^2  \\
    \leq &  4 \frac{(\min\{k,m_i\}-1)^2}{\prod_{i=1}^n m_i^2} \psi(d,S,k)^2
\end{align*}
\endgroup
This completes the proof.
\end{proof}

\begin{lemma}\label{lem:c2} For any $S \subseteq \{1,\dots,n\}$, $0 \leq d \leq |S|$, and integer $k \geq 2$,
\[
c_2(d,S,k) \leq 4 \psi(d,S,k)^2 \frac{\max_{i \in \{1,\dots,n\}}m_i}{\prod_{i=1}^n m_i} \sum_{i\in S} (\min\{k,m_i\}-1)^2 .
\] 
\end{lemma}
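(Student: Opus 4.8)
The plan is to bound $c_2(d,S,k)$ directly by partitioning its defining double sum over pairs $(\actionvec,\actionvec')$ according to the coordinate in which the two profiles differ, and then applying \cref{lem:cov} term by term. First I would record the elementary observation that, for a fixed $\actionvec$, each $\actionvec' \in L_\actionvec \setminus \{\actionvec\}$ lies on exactly one of the lines $L_\actionvec^i$ through $\actionvec$ and differs from $\actionvec$ in precisely the $i$th coordinate; the lines $L_\actionvec^i$ pairwise intersect only at $\actionvec$ itself, so no pair is double-counted or missed, and $|L_\actionvec^i \setminus \{\actionvec\}| = m_i - 1$. This lets me rewrite
\[
c_2(d,S,k) = \sum_{\actionvec \in \profileset} \sum_{i=1}^n \sum_{\actionvec' \in L_\actionvec^i \setminus \{\actionvec\}} \mathbb{E}[\indicator{\actionvec}\,\indicator{\actionvec'}].
\]

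Next I would invoke \cref{lem:cov}: for $\actionvec,\actionvec'$ differing only in coordinate $i$, the term $\mathbb{E}[\indicator{\actionvec}\,\indicator{\actionvec'}]$ vanishes unless $i\in S$ and $d\geq 1$, and is otherwise at most $4\,(\min\{k,m_i\}-1)^2\,\psi(d,S,k)^2/\prod_{j=1}^n m_j^2$. Substituting this bound, noting that the innermost sum over $\actionvec'$ contributes the factor $m_i-1$, and using $|\profileset| = \prod_{j=1}^n m_j$ to cancel one copy of $\prod_j m_j$, I obtain
\[
c_2(d,S,k) \leq 4\,\psi(d,S,k)^2 \,\frac{1}{\prod_{j=1}^n m_j} \sum_{i\in S} (m_i-1)(\min\{k,m_i\}-1)^2.
\]

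Finally, I would bound the stray factor $m_i-1 \leq \max_{j\in\{1,\dots,n\}} m_j$ in each summand and pull it out of the sum over $i\in S$, which yields exactly the claimed inequality. The degenerate case $d=0$ is handled by the same argument (all terms vanish, so $c_2=0$ and the bound holds trivially). The only real subtlety — hardly an obstacle — is the first bookkeeping step: verifying that the decomposition of $L_\actionvec\setminus\{\actionvec\}$ into the pieces $L_\actionvec^i\setminus\{\actionvec\}$ is a genuine partition, so that \cref{lem:cov}'s hypothesis ``differ only in the $i$th index'' applies cleanly to each summand; everything afterward is substitution of the expectation estimates and elementary inequalities.
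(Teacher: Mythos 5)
Your proposal is correct and follows essentially the same route as the paper: decompose $L_\actionvec\setminus\{\actionvec\}$ into the lines $L_\actionvec^i\setminus\{\actionvec\}$, apply \cref{lem:cov} to each term (noting that only $i\in S$ contributes), pick up the factor $m_i-1$ from the inner sum and $\prod_j m_j$ from the outer sum, and bound $m_i-1$ by $\max_j m_j$. The only cosmetic difference is that you sum over all coordinates $i$ and let \cref{lem:cov} kill the $i\notin S$ terms, whereas the paper restricts to $i\in S$ from the outset.
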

\begin{proof}
We rely on \Cref{lem:cov} in the derivation below.
\begin{align*}
c_2(d,S,k) &= \sum_{\actionvec \in \profileset} \sum_{\actionvec' \in \ell(\actionvec) \setminus \{\actionvec\}} \mathbb{E}[\indicator{\actionvec} \, \indicator{\actionvec'}] =\sum_{\actionvec \in \profileset} \sum_{i\in S} \sum_{\actionvec' \in \ell_i(\actionvec_{-i}) \setminus \{\actionvec\}} \mathbb{E}[\indicator{\actionvec} \, \indicator{\actionvec'}] \\
&\leq 4 \sum_{\actionvec \in \profileset} \sum_{i\in S} (m_i-1) \frac{(\min\{k,m_i\}-1)^2}{\prod_{i=1}^n m_i^2} \psi(d,S,k)^2  \\
&\leq 4 \psi(d,S,k)^2 \frac{\max_{i \in \{1,\dots,n\}}m_i}{\prod_{i=1}^n m_i} \sum_{i\in S} (\min\{k,m_i\}-1)^2.
\end{align*}
This completes the proof.
\end{proof}

Finally, we can collect the bounds and complete the proof of \cref{lem:general}. From \Cref{lem:c1} and \Cref{lem:c2} we have that
    \[
    c_1(d,S,k) + c_2(d,S,k) \leq \left(n + 4 \sum_{i\in S} (\min\{k,m_i\}-1)^2 \right) \frac{ \max_{i \in \{1,\dots,n\}} \anum_i }{ \prod_{i =1}^n \anum_i} \psi(d,S,k)^2 .
    \]
But $\prod_{i=1}^n m_i \geq 2^n$ and $n + 4 \sum_{i\in S} (\min\{k,m_i\}-1)^2 \leq n + 4nk^2 \leq 5nk^2$. This completes the proof. \hfill \qed

\subsubsection{Proof of \cref{thm:approximatepotential}}

Suppose that $\g$ is a $\mathbf k$-approximate ordinal potential game. There is a weak order $\potpref$ on $\profileset$ such that for any $i \in \agentset$ and $\actionvec \in \profileset$,
\begin{equation*}
     \rank_i(\actionvec \,|\succsim_i)-\rank_i(\actionvec \,|\,\potpref) \,  < k_i. 
\end{equation*}
Now consider a game $g_{n,\anumvec,\potpref}$ in which each agents' preferences are $\potpref$. Since all agents have the same preferences, $g_{n,\anumvec,\potpref}$ is an ordinal potential game and it possesses a pure Nash equilibrium \citep[Corollary 2]{monderer1996potential}. Call such an equilibrium $\actionvec^\star$ and observe that $\rank_i(\actionvec^\star \,|\,\potpref)=1$ for all $i\in N$. Hence, evaluating the inequality above at $\actionvec^\star$ gives us that $\rank_i(\actionvec^\star \,|\succsim_i)< k_i + 1$, which  is the condition for $\actionvec^\star$ to be a $\mathbf k$-satisficing equilibrium of $\g$.\hfill \qed

\subsection{Appendix to \cref{sec:emergence}}

\subsubsection{Proof of \cref{prop:dynamic}}
Our proof relies on the probabilistic method. We draw a game $\Ggame$ uniformly at random from $\G{n}{\anumvec}$ and consider a $\mathbf k$-satisficing equilibrium in $G$. Next, we construct a coupled game $G^\star$ for which the same action profile is a pure Nash equilibrium. We then show that, from any initial profile, there is a sequence of top-$k_i$ response moves to the $\mathbf k$-satisficing equilibrium in the $G$ if there exists a best-response path to the constructed pure Nash equilibrium in the coupled game $G^\star$. \citet*{Jon23} show that existence of such a path in the coupled game holds with high probability provided the number of agents is sufficiently large.

Let $A_{n,\anumvec}(\mathbf k)$ denote the event that $\Ggame \in \G{n}{\anumvec}$ has a ${\mathbf k}$-satisficing equilibrium and let $B_{n,\anumvec}(\mathbf k)$ denote the event that, for any $\actionvec(0)$, the top-${\mathbf k}$ response dynamic converges almost surely to a ${\mathbf k}$-satisficing equilibrium of $\Ggame$. Fixing some integer $M \geq 2$, our aim is to show that there are constants $c$ and $\bar{n}$ such that for any $n \geq \bar{n}$, $\anumvec \in \{2,\ldots,M\}^n$, and ${\mathbf k} \in \{2,3,\ldots\}^n$,
\[
\Pr[ A_{n,\anumvec}(\mathbf k) \text{ and } B_{n,\anumvec}(\mathbf k) ] \geq 1 - e^{-cn} .
\]
Our argument below relies on a coupling of $\Ggame$ with another game that possesses a pure Nash equilibrium if and only if $\Ggame$ has a $\V{2}{2}$-satisficing equilibrium. From $\Ggame$, construct a game $\Ggame^\star \in \G{n}{\anumvec}$ as follows:
\begin{itemize}[leftmargin=0cm]
    \item If $\Ggame$ has a $\V{2}{2}$-satisficing equilibrium, select a profile  uniformly at random from among those profiles that are $\V{2}{2}$-satisficing equilibria in $\Ggame$. Call the chosen profile $\actionvec^\star$; this will be the pivotal profile from which $\Ggame^\star$ is constructed. We can partition the set of agents into those who are playing a first-best response at $\actionvec^\star$, call this set $I_1$, and those who are playing a second-best response at $\actionvec^\star$, call this set $I_2$. Set the game $\Ggame^\star$ to be identical to $\Ggame$ in every respect \emph{except} that the preference ordering of each $i \in I_2$ over $\ell_i(\actionvec^\star_{-i})$ is altered as follows: shift $i$'s first-best action at $\actionvec^\star$ in the game $\Ggame$, uniformly at random, to any position in the preference ordering over $\ell_i(\actionvec^\star_{-i})$ other than the first. Observe that $\actionvec^\star$ is a Nash equilibrium in $\Ggame^\star$.
    \item If $\Ggame$ does not have a $\V{2}{2}$-satisficing equilibrium, set $\Ggame^\star = \Ggame$.
\end{itemize}
\begin{remark}\label{rem:AtoD}
By construction, $\Ggame$ has a $\V{2}{2}$-satisficing equilibrium if and only if $\Ggame^\star$ has a pure Nash equilibrium. 
\end{remark}
For $1 \leq T < \infty$, a sequence of distinct action profiles $\actionvec(0),\dots,\actionvec(T)$ is a (finite) \emph{best-response path} if for each $t \in \{0,\dots,T-1\}$, there is an $i \in \agentset$ such that $\actionvec_{-i}(t) = \actionvec_{-i}(t+1)$ and $\action_i(t+1)$ is a (1st)-best-response to $\actionvec_{-i}(t+1)$. Using the terminology of \citet*{Jon23} we say that a game is \emph{connected} if at least one action profile is a pure Nash equilibrium and, from any non-Nash profile there is a finite best-response path to every Nash equilibrium. Let $C_{n,\anumvec}$ denote the event that $\Ggame^\star$ is connected. \cref{lem:BtoC} below shows that if $\Ggame$ has a $\V{2}{2}$-satisficing equilibrium then $\Ggame^\star$ being connected implies that, for any $\actionvec(0)$, the top-$\mathbf k$ response dynamic converges almost surely to a $\mathbf k$-satisficing equilibrium of $\Ggame$. That is,
\begin{lemma}\label{lem:BtoC}
Conditional on $A_{n,\anumvec}\V{2}{2}$, the event $C_{n,\anumvec}$ implies the event $B_{n,\anumvec}(\mathbf k)$.
\end{lemma}
Let $D_{n,\anumvec}$ denote the events that $\Ggame^\star$ has a pure Nash equilibrium, and let $A_{n,\anumvec}^\star$ denote the event that $G$ has an action profile at which one agent is 2-satisficed and all other agents are 1-satisficed. \cref{lem:BtoC} allows us to make the following argument:
\begin{align*}
    \Pr[ A_{n,\anumvec}(\mathbf k) \text{ and } B_{n,\anumvec}(\mathbf k) ] \geq & \Pr[ A_{n,\anumvec}\V{2}{2} \text{ and }  B_{n,\anumvec}(\mathbf k) ] \\
    =& \Pr[  B_{n,\anumvec}(\mathbf k) \,|\, A_{n,\anumvec}\V{2}{2} ] \cdot \Pr[A_{n,\anumvec}\V{2}{2}] \\
    \geq& \Pr[C_{n,\anumvec} \,|\, A_{n,\anumvec}\V{2}{2}] \cdot \Pr[A_{n,\anumvec}\V{2}{2}] \\
    =& \Pr[C_{n,\anumvec} \,|\, D_{n,\anumvec}] \cdot \Pr[A_{n,\anumvec}\V{2}{2}] \\
    \geq & \Pr[C_{n,\anumvec} \,|\, D_{n,\anumvec}] \cdot \Pr[A_{n,\anumvec}^\star]
\end{align*}
The first inequality follows because $\Ggame$ having a $\V{2}{2}$-satisficing equilibrium implies that it has a ${\mathbf k}$-satisficing equilibrium for any ${\mathbf k} \geq \V{2}{2}$. The second inequality follows from \cref{lem:BtoC}. The penultimate step follows from \cref{rem:AtoD}. The final inequality follows from the fact that $A_{n,\anumvec}^\star$ implies $A_{n,\anumvec}\V{2}{2}$.

 \citet*{Jon23} show that there is a positive constant $c_1$ such that for $n$ sufficiently large relative to $\max_i m_i \leq M$, $\Pr[C_{n,\anumvec} \,|\, D_{n,\anumvec}] \geq 1-e^{-c_1n}$. Moreover, \cref{thm:one_is_enough} shows that there exist positive constants $c_0$ and $\bar{n}$ such that for all $n \geq \bar{n}_0$, $\Pr[A_{n,\anumvec}^\star] \geq 1-e^{-c_0 n}$. It follows that there exist positive constants $c$ and $\bar{n}$ such that for all $n\geq \bar{n}$, $\Pr[C_{n,\anumvec} \,|\, D_{n,\anumvec}] \cdot \Pr[A_{n,\anumvec}^\star] \geq 1- e^{-cn}$.\hfill\qed

It remains for us to provide the proof of \cref{lem:BtoC}.

\begin{proof}[Proof of \cref{lem:BtoC}]
 Suppose $\Ggame$ has at least one $\V{2}{2}$-satisficing equilibrium. Choose one such equilibrium, call it $\actionvec^\star$ and use it as the pivotal profile from which $\Ggame^\star$ is constructed.

Now assume that $\Ggame^\star$ is connected. We must show that for any $\actionvec(0)$, the top-$\mathbf k$ response dynamic converges almost surely to a $\mathbf k$-satisficing equilibrium of $\Ggame$. To prove the latter, it suffices for us to show the following:
\begin{enumerate}[(a),leftmargin=0cm]
\item At any action profile that is not a $\mathbf k$-satisficing equilibrium of $\Ggame$, the top-${\mathbf k}$ response dynamic has a positive probability of leaving that profile. 
\item From any action profile that is not a ${\mathbf k}$-satisficing equilibrium of $\Ggame$ there is, with positive probability, a sequence of top-$k_i$ response moves that leads to a ${\mathbf k}$-satisficing equilibrium of $\Ggame$.
\item At any action profile that is a ${\mathbf k}$-satisficing equilibrium of $\Ggame$, with positive probability, the top-${\mathbf k}$ response dynamic stays at that equilibrium forever.
\end{enumerate}
Point (a) is straightforward: at any action profile that is not a ${\mathbf k}$-satisficing equilibrium, there is a positive probability that an agent who is not satisficed at that profile becomes active and therefore changes the profile.

Establishing point (b) requires thinking about our coupled game $\Ggame^\star$. Consider an initial profile $\actionvec(0)$ that is not a ${\mathbf k}$-satisficing equilibrium of $\Ggame$. Then, there is a positive probability that exactly one of the agents who is not $k_{i_0}$-satisficed at that profile, say agent $i_0$, to become active and move the dynamic to some new action profile $\actionvec(1)$. If this new profile is a ${\mathbf k}$-satisficing equilibrium of $\Ggame$, we are done.
\begin{center}
\begin{tikzpicture}
    \node at (0, 0) (a0) {$\mathbf{a}(0)$};
    \node at (2, 0) (a1) {$\mathbf{a}(1)$};
    \node at (4, 0) (a2) {$\mathbf{a}(2)$};
    \node at (6, 0) (a3) {$\cdots$};
    \node at (8, 0) (a4) {$\mathbf{a}(T)$};
    \node at (10.75, 0) (a5) {$\mathbf{a}(T+1)=\mathbf{a}^\star$};
    \draw[->]  (a0) -- (a1) node[midway, above] {$i_0$};
    \draw[->]  (a1) -- (a2) node[midway, above] {$i_1$};
    \draw[->]  (a2) -- (a3) node[midway, above] {$i_2$};
    \draw[->]  (a3) -- (a4) node[midway, above] {$i_{T-1}$};
    \draw[->]  (a4) -- (a5) node[midway, above] {$i_{T}$};
\end{tikzpicture}
\end{center}
Suppose therefore that $\actionvec(1)$ is not a $\mathbf k$-satisficing equilibrium of $\Ggame$. Since the coupled game $\Ggame^\star$ is connected, there is a (finite) best-response path $\actionvec(1),\dots,\actionvec(T+1) = \actionvec^\star$ ending with $\actionvec^\star$ which, by construction, is a pure Nash equilibrium in $\Ggame^\star$. The path is illustrated above. Observe that the subsequence of action profiles $\actionvec(1),\dots,\actionvec(T)$ is a best-response path in $\Ggame$. There is a positive probability that only agent $i_1$ becomes active, and then only agent $i_2$ becomes active, and so on. In every case, whenever these agents become active, the sequence of action profiles since they were last active must contain at least two distinct profiles. So each of these agents chooses one of her top $k_i$ actions in response to the others' actions with positive probability. There is therefore a positive probability of reaching $\actionvec(T)$. Finally, observe that $\action_{i_{T}}(T+1)$ is at least a $k_i$-th best response to $\actionvec_{-i_{T}}(T)$ for agent $i_{T}$ in $\Ggame$. It follows that there is also a positive probability of reaching $\actionvec^\star$, as required.

Now, for point (c), suppose that the dynamic is at a $\mathbf k$-satisficing equilibrium $\actionvec^\star$ of $\Ggame$. There is a positive probability that only agent 1 becomes active and re-selects her current action, and then only agent 2 becomes active and re-selects her current action, and so on until agent $n$. This results in a situation in which each agent is $k_i$-satisficed and the sequence of action profiles since they were last active is $\actionvec^\star,\dots,\actionvec^\star$. The top-$\mathbf k$ response dynamic therefore remains at $\actionvec^\star$ forever.
\end{proof}

\subsection{Appendix to \cref{sec:characterization}}\label{app:characterization}

\subsubsection{Proof of \cref{thm:testing}}\label{app:algoproof}

Consider a dataset $\{(\gf^\z,\mathbf{b}^\z)\}_{\z=1}^\Z$ and fix some $\mathbf{k}=(k_i)_{i\in N}$. 

For each $i \in N$, define $\mathbf{B}_{-i} = \{\mathbf{b}_{-i} : \mathbf{b}_{-i} = \mathbf{b}_{-i}^\z \text{ for some }\z \in \{1,\dots,\Z\}\}$ to be the set of all distinct environments faced by agent $i$ in the dataset. For each $\mathbf{b}_{-i} \in \mathbf{B}_{-i}$, the indices of the observations at which $\mathbf{b}_{-i}$ was encountered by agent $i$ is given by $\Z_i(\mathbf{b}_{-i}) = \{\z : \mathbf{b}_{-i}^\z =\mathbf{b}_{-i}\}$. We say that agent $i$'s choices $b_i^1,\dots,b_i^\Z$ are $k_i$-rationalizable if, for every $\mathbf{b}_{-i} \in \mathbf{B}_{-i}$, there exists a preference $\succsim_i^{\mathbf{b}_{-i}}$ over $A_i$ such that for each $\z \in \Z_i(\mathbf{b}_{-i})$, $b_i^\z$ is among the top $k_i$ alternatives of $B_i^\z$ according to $\succsim_i^{\mathbf{b}_{-i}}$.

The key observation is that the original dataset $\{(\gf^\z,\mathbf{b}^\z)\}_{\z=1}^\Z$ is $\mathbf{k}$-rationalizable if and only if each agent $i$'s choices are $k_i$-rationalizable. Hence, it suffices to test $k_i$-rationalizability separately for each $i \in N $. We do this in \cref{alg:main} where, for each $i \in N$ and $\mathbf{b}_{-i} \in \mathbf{B}_{-i}$, we employ the algorithm proposed by \citet*{Bar22} to carry out the decision-theoretic test of $k_i$-rationalizability of an agent $i$'s choices when faced with environment $\mathbf{b}_{-i}$. 

\vspace{0.2cm}

\SetKw{Let}{Let}        
\DontPrintSemicolon     
\begin{algorithm}[H]
\KwIn{data set: $\{(\gf^\z,\mathbf{b}^\z)\}_{\z=1}^\Z$; parameters: $\mathbf k=(k_i)_{i\in\agentset}$}
\KwOut{\emph{True} if the data set is $\mathbf k$-rationalizable; \emph{False} otherwise}

\For{$i \in N$}{
  \For{$\mathbf b_{-i}\in\mathbf B_{-i}$}{
    \Let{$X_0 \gets \bigcup_{\z \in \Z_i(\mathbf b_{-i})} B_i^\z$}\;
    \tcp{$X_0$: actions of $i$ feasible when others play $\mathbf b_{-i}$.}
    \Let{$X_1 \gets \bigcup_{\z \in \Z_i(\mathbf b_{-i})} \{b_i^\z\}$}\;
    \tcp{$X_1$: $i$'s observed actions in environment $\mathbf b_{-i}$.}
    \Let{$Y_1 \gets X_0 \setminus X_1$}\;
    \tcp{$Y_1$: actions never chosen by $i$ under $\mathbf b_{-i}$.}
    $\ell \gets 1$\;
    
    \While{$X_\ell \neq X_{\ell-1}$}{
      \Let{$Y_{\ell+1} \gets \{\, x\in X_\ell : \forall \z \in \Z_i(\mathbf b_{-i}) 
             \textnormal{ with } b_i^\z = x, z^\z \ge |B_i^\z|-k_i \,\}$}\;
        where $z^\z := |B_i^\z \cap (Y_1\cup\cdots\cup Y_\ell)|$\;
      \Let{$X_{\ell+1} \gets X_\ell \setminus Y_{\ell+1}$}\;
      $\ell \gets \ell+1$
    }

    \If{$X_\ell = \varnothing$}{continue}
    \Else{\Return{False}}
  }
}
\Return{True}
\caption{Testing for $\mathbf k$-rationalizability in games}\label{alg:main}
\end{algorithm}

\vspace{0.2cm}

The ``while'' loop is the central part of \cref{alg:main}. Intuitively, $Y_1$ contains the worst elements. $Y_2$ is a set containing the elements that can be consistently assigned as being ``next'' worst. More generally, $Y_\ell$ contains the elements that can be consistently assigned as $\ell$th worst. Observe, moreover, that the loop terminates with $X^\star_i(\mathbf{b}_{-i}):=X_{\ell + 1} = X_{\ell}$ for some finite $\ell \leq \Z_i(\mathbf{b}_{-i})$. \citet*{Bar22} show that $i$'s choices when faced with the environment $\mathbf{b}_{-i}$ are $k_i$-rationalizable if and only if $X^\star_i(\mathbf{b}_{-i}) = \varnothing$. Indeed, if $X^\star_i(\mathbf{b}_{-i}) = \varnothing$, then all observed choices have been consistently assigned to $Y_\ell$ for some $\ell$. In this case, one can reconstruct a complete, transitive preference $\succsim_i^{\mathbf b_{-i}}$ over $A_i$ that rationalizes the observed choices as $k_i$-satisficing. However, if $X^\star_i(\mathbf{b}_{-i}) \neq \varnothing$, then some observed choices are ``stuck'': for at least one feasible set, they cannot be placed among the top $k_i$ elements without contradiction. Hence no such rationalizing preference exists. Further explanation and details regarding this algorithm can be found in \citet*{Bar22}.

Since, for a given $i$ and $\mathbf{b}_{-i}$, the algorithm runs in at most $|X| \leq |A_i|$ iterations, it is polynomial in $|A_i|$. Performing the test independently for each $i\in N$ and each environment $\mathbf b_{-i}\in\mathbf B_{-i}$, the dataset $\{(\gf^\z,\mathbf b^\z)\}_{\z=1}^\Z$ is $\mathbf k$-rationalizable if and only if every such test returns $X^\star_i(\mathbf{b}_{-i})=\varnothing$.  
Therefore, there exists a polynomial-time algorithm (in $|N|$, $\Z$, and $\max_i |A_i|$) that decides whether the dataset is $\mathbf k$-rationalizable. \hfill \qed

\subsubsection{Proof of \cref{cor:testing}}
In the proof of \cref{thm:testing}, we reduced testing $\mathbf{k}$-rationalizability to testing 
$k_i$-rationalizability separately for each agent $i \in N$.   
To find the minimal $k_i$, we can perform a binary search over the interval $\{1,\dots,|A_i|\}$. This is justified by the monotonicity of $k_i$-rationalizability (namely that $k_i$-rationalizability implies $k_i'$-rationalizability for $k_i \leq k_i'$). At each step of the search, we test $k_i$-rationalizability using the polynomial-time \cref{alg:main}. Since binary search requires on the order of $\log |A_i|$ such tests, the overall cost of identifying the minimal $k_i$ remains polynomial in the number of actions. Repeating this procedure independently for each $i\in N$ yields the minimal vector $\mathbf{k}=(k_i)_{i\in N}$. Thus the smallest $\mathbf{k}$ for which the dataset is $\mathbf{k}$-rationalizable can be computed in time polynomial in $|N|$, $\Z$, and $\max_i |A_i|$.\hfill \qed

\singlespacing


\begin{thebibliography}{50}
\newcommand{\enquote}[1]{``#1''}
\providecommand{\natexlab}[1]{#1}
\providecommand{\url}[1]{\texttt{#1}}
\providecommand{\urlprefix}{URL }
\providecommand{\bibAnnoteFile}[1]{%
  \IfFileExists{#1}{\begin{quotation}\noindent\textsc{Key:} #1\\
  \textsc{Annotation:}\ \input{#1}\end{quotation}}{}}
\providecommand{\bibAnnote}[2]{%
  \begin{quotation}\noindent\textsc{Key:} #1\\
  \textsc{Annotation:}\ #2\end{quotation}}

\bibitem[{Alon et~al.(2021)Alon, Rudov, and Yariv}]{alon2021dominance}
Alon, Noga, Kirill Rudov, and Leeat Yariv (2021), \enquote{Dominance solvability in random games.} \emph{Available at SSRN 3850992}.
\bibAnnoteFile{alon2021dominance}

\bibitem[{Amiet et~al.(2021)Amiet, Collevecchio, Scarsini, and Zhong}]{amiet2021pure}
Amiet, Ben, Andrea Collevecchio, Marco Scarsini, and Ziwen Zhong (2021), \enquote{Pure {N}ash equilibria and best-response dynamics in random games.} \emph{Mathematics of Operations Research}, 46, 1552--1572.
\bibAnnoteFile{amiet2021pure}

\bibitem[{Arad and Rubinstein(2012)}]{arad201211}
Arad, Ayala and Ariel Rubinstein (2012), \enquote{The 11--20 money request game: A level-k reasoning study.} \emph{American Economic Review}, 102, 3561--3573.
\bibAnnoteFile{arad201211}

\bibitem[{Arratia et~al.(1989)Arratia, Goldstein, and Gordon}]{arratia1989two}
Arratia, Richard, Larry Goldstein, and Louis Gordon (1989), \enquote{Two moments suffice for {P}oisson approximations: the {C}hen-{S}tein method.} \emph{The Annals of Probability}, 9--25.
\bibAnnoteFile{arratia1989two}

\bibitem[{Artinger et~al.(2022)Artinger, Gigerenzer, and Jacobs}]{artinger2022satisficing}
Artinger, Florian~M, Gerd Gigerenzer, and Perke Jacobs (2022), \enquote{Satisficing: Integrating two traditions.} \emph{Journal of Economic Literature}, 60, 598--635.
\bibAnnoteFile{artinger2022satisficing}

\bibitem[{Aumann and Brandenburger(1995)}]{aumann1995epistemic}
Aumann, Robert and Adam Brandenburger (1995), \enquote{Epistemic conditions for {N}ash equilibrium.} \emph{Econometrica}, 63, 1161--1180.
\bibAnnoteFile{aumann1995epistemic}

\bibitem[{Barber{\`a} et~al.(2022)Barber{\`a}, de~Clippel, Neme, and Rozen}]{Bar22}
Barber{\`a}, Salvador, Geoffroy de~Clippel, Alejandro Neme, and Kareen Rozen (2022), \enquote{Order-k rationality.} \emph{Economic Theory}, 73, 1135--1153.
\bibAnnoteFile{Bar22}

\bibitem[{Basu(1994)}]{basu1994traveler}
Basu, Kaushik (1994), \enquote{The traveler's dilemma: Paradoxes of rationality in game theory.} \emph{American Economic Review}, 84, 391--395.
\bibAnnoteFile{basu1994traveler}

\bibitem[{Benjamin et~al.(2013)Benjamin, Brown, and Shapiro}]{benjamin2013behavioral}
Benjamin, Daniel~J, Sebastian~A Brown, and Jesse~M Shapiro (2013), \enquote{Who is ‘behavioral’? {C}ognitive ability and anomalous preferences.} \emph{Journal of the European Economic Association}, 11, 1231--1255.
\bibAnnoteFile{benjamin2013behavioral}

\bibitem[{Bernheim(1984)}]{bernheim1984rationalizable}
Bernheim, B~Douglas (1984), \enquote{Rationalizable strategic behavior.} \emph{Econometrica}, 52, 1007--1028.
\bibAnnoteFile{bernheim1984rationalizable}

\bibitem[{Camerer et~al.(2004)Camerer, Ho, and Chong}]{camerer2004cognitive}
Camerer, Colin~F, Teck-Hua Ho, and Juin-Kuan Chong (2004), \enquote{A cognitive hierarchy model of games.} \emph{The Quarterly Journal of Economics}, 119, 861--898.
\bibAnnoteFile{camerer2004cognitive}

\bibitem[{Caplin et~al.(2011)Caplin, Dean, and Martin}]{Caplin2011}
Caplin, Andrew, Mark Dean, and Daniel Martin (2011), \enquote{Search and satisficing.} \emph{American Economic Review}, 101, 2899--2922.
\bibAnnoteFile{Caplin2011}

\bibitem[{Capra et~al.(1999)Capra, Goeree, Gomez, and Holt}]{capra1999anomalous}
Capra, C~Monica, Jacob~K Goeree, Rosario Gomez, and Charles~A Holt (1999), \enquote{Anomalous behavior in a traveler's dilemma?} \emph{American Economic Review}, 89, 678--690.
\bibAnnoteFile{capra1999anomalous}

\bibitem[{Crawford et~al.(2013)Crawford, Costa-Gomes, and Iriberri}]{crawford2013structural}
Crawford, Vincent~P, Miguel~A Costa-Gomes, and Nagore Iriberri (2013), \enquote{Structural models of nonequilibrium strategic thinking: Theory, evidence, and applications.} \emph{Journal of Economic Literature}, 51, 5--62.
\bibAnnoteFile{crawford2013structural}

\bibitem[{Dardanoni et~al.(2020)Dardanoni, Manzini, Mariotti, and Tyson}]{dardanoni2020inferring}
Dardanoni, Valentino, Paola Manzini, Marco Mariotti, and Christopher~J Tyson (2020), \enquote{Inferring cognitive heterogeneity from aggregate choices.} \emph{Econometrica}, 88, 1269--1296.
\bibAnnoteFile{dardanoni2020inferring}

\bibitem[{de~Clippel and Rozen(2024)}]{Clippel2024}
de~Clippel, Geoffroy and Kareen Rozen (2024), \enquote{Bounded rationality in choice theory: A survey.} \emph{Journal of Economic Literature}, 62, 995--1039.
\bibAnnoteFile{Clippel2024}

\bibitem[{Fudenberg and Levine(1998)}]{fudenberg1998theory}
Fudenberg, Drew and David~K Levine (1998), \emph{The theory of learning in games}. MIT press.
\bibAnnoteFile{fudenberg1998theory}

\bibitem[{Fudenberg and Liang(2019)}]{fudenberg2019predicting}
Fudenberg, Drew and Annie Liang (2019), \enquote{Predicting and understanding initial play.} \emph{American Economic Review}, 109, 4112--4141.
\bibAnnoteFile{fudenberg2019predicting}

\bibitem[{Goeree and Holt(2001)}]{goeree2001ten}
Goeree, Jacob~K and Charles~A Holt (2001), \enquote{Ten little treasures of game theory and ten intuitive contradictions.} \emph{American Economic Review}, 91, 1402--1422.
\bibAnnoteFile{goeree2001ten}

\bibitem[{Goeree and Louis(2021)}]{goeree2021m}
Goeree, Jacob~K and Philippos Louis (2021), \enquote{M equilibrium: a theory of beliefs and choices in games.} \emph{American Economic Review}, 111, 4002--4045.
\bibAnnoteFile{goeree2021m}

\bibitem[{Goeree et~al.(2018)Goeree, Louis, and Zhang}]{goeree2018noisy}
Goeree, Jacob~K, Philippos Louis, and Jingjing Zhang (2018), \enquote{Noisy introspection in the 11--20 game.} \emph{The Economic Journal}, 128, 1509--1530.
\bibAnnoteFile{goeree2018noisy}

\bibitem[{Hall(1935)}]{Hall35}
Hall, Phillip (1935), \enquote{On representatives of subsets.} \emph{Journal of the London Mathematical Society}, s1-10, 26--30.
\bibAnnoteFile{Hall35}

\bibitem[{Holt and Roth(2004)}]{Hol04}
Holt, Charles~A and Alvin~E Roth (2004), \enquote{The {N}ash equilibrium: A perspective.} \emph{Proceedings of the National Academy of Sciences}, 101, 3999--4002.
\bibAnnoteFile{Hol04}

\bibitem[{Johnston et~al.(2023)Johnston, Savery, Scott, and Tarbush}]{Jon23}
Johnston, Tom, Michael Savery, Alex Scott, and Bassel Tarbush (2023), \enquote{Game connectivity and adaptive dynamics.} \emph{arXiv preprint arXiv:2309.10609}.
\bibAnnoteFile{Jon23}

\bibitem[{McKelvey and Palfrey(1995)}]{mckelvey1995quantal}
McKelvey, Richard~D and Thomas~R Palfrey (1995), \enquote{Quantal response equilibria for normal form games.} \emph{Games and Economic Behavior}, 10, 6--38.
\bibAnnoteFile{mckelvey1995quantal}

\bibitem[{McLennan(2005)}]{mclennan2005expected}
McLennan, Andrew (2005), \enquote{The expected number of nash equilibria of a normal form game.} \emph{Econometrica}, 73, 141--174.
\bibAnnoteFile{mclennan2005expected}

\bibitem[{Monderer and Shapley(1996)}]{monderer1996potential}
Monderer, Dov and Lloyd~S Shapley (1996), \enquote{Potential games.} \emph{Games and Economic Behavior}, 14, 124--143.
\bibAnnoteFile{monderer1996potential}

\bibitem[{Nagel(1995)}]{nagel1995unraveling}
Nagel, Rosemarie (1995), \enquote{Unraveling in guessing games: {A}n experimental study.} \emph{American Economic Review}, 85, 1313--1326.
\bibAnnoteFile{nagel1995unraveling}

\bibitem[{Nash(1950)}]{nash1950equilibrium}
Nash, John~F (1950), \enquote{Equilibrium points in $n$-person games.} \emph{Proceedings of the National Academy of Sciences}, 36, 48--49.
\bibAnnoteFile{nash1950equilibrium}

\bibitem[{Newton(2018)}]{newton2018evolutionary}
Newton, Jonathan (2018), \enquote{Evolutionary game theory: A renaissance.} \emph{Games}, 9, 31.
\bibAnnoteFile{newton2018evolutionary}

\bibitem[{Osborne and Rubinstein(1998)}]{osborne1998games}
Osborne, Martin~J and Ariel Rubinstein (1998), \enquote{Games with procedurally rational players.} \emph{American Economic Review}, 88, 834--847.
\bibAnnoteFile{osborne1998games}

\bibitem[{Pearce(1984)}]{pearce1984rationalizable}
Pearce, David~G (1984), \enquote{Rationalizable strategic behavior and the problem of perfection.} \emph{Econometrica}, 52, 1029--1050.
\bibAnnoteFile{pearce1984rationalizable}

\bibitem[{Peleg and Tijs(1996)}]{peleg1996consistency}
Peleg, Bezalel and Stef Tijs (1996), \enquote{The consistency principle for games in strategic form.} \emph{International Journal of Game Theory}, 25, 13--34.
\bibAnnoteFile{peleg1996consistency}

\bibitem[{Powers(1990)}]{powers1990limiting}
Powers, Imelda~Young (1990), \enquote{Limiting distributions of the number of pure strategy {N}ash equilibria in $n$-person games.} \emph{International Journal of Game Theory}, 19, 277--286.
\bibAnnoteFile{powers1990limiting}

\bibitem[{Rinott and Scarsini(2000)}]{Rin00}
Rinott, Yosef and Marco Scarsini (2000), \enquote{On the number of pure strategy {N}ash equilibria in random games.} \emph{Games and Economic Behavior}, 33, 274--293.
\bibAnnoteFile{Rin00}

\bibitem[{Rosenthal(1973)}]{rosenthal1973class}
Rosenthal, Robert~W (1973), \enquote{A class of games possessing pure-strategy nash equilibria.} \emph{International Journal of Game Theory}, 2, 65--67.
\bibAnnoteFile{rosenthal1973class}

\bibitem[{Rubinstein(1991)}]{rubinstein1991comments}
Rubinstein, Ariel (1991), \enquote{Comments on the interpretation of game theory.} \emph{Econometrica}, 59, 909--924.
\bibAnnoteFile{rubinstein1991comments}

\bibitem[{Rubinstein and Salant(2006)}]{rubinstein2006model}
Rubinstein, Ariel and Yuval Salant (2006), \enquote{A model of choice from lists.} \emph{Theoretical Economics}, 1, 3--17.
\bibAnnoteFile{rubinstein2006model}

\bibitem[{Samuelson(2016)}]{samuelson2016game}
Samuelson, Larry (2016), \enquote{Game theory in economics and beyond.} \emph{Journal of Economic Perspectives}, 30, 107--130.
\bibAnnoteFile{samuelson2016game}

\bibitem[{Selten(1991)}]{selten1991properties}
Selten, Reinhard (1991), \enquote{Properties of a measure of predictive success.} \emph{Mathematical Social Sciences}, 21, 153--167.
\bibAnnoteFile{selten1991properties}

\bibitem[{Sen(1993)}]{Sen1993}
Sen, Amartya (1993), \enquote{Internal consistency of choice.} \emph{Econometrica}, 61, 495--521.
\bibAnnoteFile{Sen1993}

\bibitem[{Simon(1947)}]{Simon1947}
Simon, Herbert~A (1947), \emph{Administrative Behavior: A Study of Decision-making Processes in Administrative Organization}. Macmillan, New York.
\bibAnnoteFile{Simon1947}

\bibitem[{Simon(1997)}]{simon1997models}
Simon, Herbert~Alexander (1997), \emph{Models of bounded rationality: Empirically grounded economic reason}, volume~3. MIT press.
\bibAnnoteFile{simon1997models}

\bibitem[{Sprumont(2000)}]{sprumont2000testable}
Sprumont, Yves (2000), \enquote{On the testable implications of collective choice theories.} \emph{Journal of Economic Theory}, 93, 205--232.
\bibAnnoteFile{sprumont2000testable}

\bibitem[{Stahl(1993)}]{stahl1993evolution}
Stahl, Dale~O (1993), \enquote{Evolution of smart$_n$ players.} \emph{Games and Economic Behavior}, 5, 604--617.
\bibAnnoteFile{stahl1993evolution}

\bibitem[{Stanford(2004)}]{stanford2004individually}
Stanford, William (2004), \enquote{Individually rational pure strategies in large games.} \emph{Games and Economic Behavior}, 47, 221--233.
\bibAnnoteFile{stanford2004individually}

\bibitem[{Varian(1980)}]{varian1980model}
Varian, Hal~R (1980), \enquote{A model of sales.} \emph{American Economic Review}, 70, 651--659.
\bibAnnoteFile{varian1980model}

\bibitem[{Weibull(1995)}]{weibull1995evolutionary}
Weibull, J{\"o}rgen~W (1995), \emph{Evolutionary game theory}. MIT Press.
\bibAnnoteFile{weibull1995evolutionary}

\bibitem[{Wright and Leyton-Brown(2017)}]{wright2017predicting}
Wright, James~R and Kevin Leyton-Brown (2017), \enquote{Predicting human behavior in unrepeated, simultaneous-move games.} \emph{Games and Economic Behavior}, 106, 16--37.
\bibAnnoteFile{wright2017predicting}

\bibitem[{Young(2004)}]{young2004strategic}
Young, H~Peyton (2004), \emph{Strategic learning and its limits}. Oxford University Press, UK.
\bibAnnoteFile{young2004strategic}

\end{thebibliography}
\end{document}